\documentclass[sigconf]{acmart}


\AtBeginDocument{%
  }
    

\copyrightyear{2025}
\acmYear{2025}
\setcopyright{rightsretained}
\acmConference[ASIA CCS '25]{ACM Asia Conference on Computer and Communications Security}{August 25--29, 2025}{Hanoi, Vietnam}
\acmBooktitle{ACM Asia Conference on Computer and Communications Security (ASIA CCS '25), August 25--29, 2025, Hanoi, Vietnam}\acmDOI{10.1145/3708821.3710826}
\acmISBN{979-8-4007-1410-8/25/08}


\usepackage{hyperref}
\usepackage{cleveref}
\usepackage{textcomp}
\usepackage{xcolor}
\usepackage{balance}

\usepackage{enumitem}
\usepackage{algpseudocode}
\usepackage{color}
\usepackage{xcolor}
\usepackage{pifont}
\usepackage{soul}
\usepackage{url}
\usepackage{algorithm}
\usepackage{algorithmicx}
\usepackage{xurl}
\usepackage{hhline}
\usepackage{multirow}
\usepackage{makecell}
\usepackage{caption}
\usepackage[caption=false,font=footnotesize,labelfont=sf,textfont=sf]{subfig}
\usepackage{tikz}
\usepackage[flushleft]{threeparttable}
\usepackage{framed}
\usepackage{multicol}
\usepackage{rotating} 
\usepackage{lipsum}
\usepackage{array}
\usepackage{tikz}
\usepackage{colortbl}
\usepackage{wasysym}
\usepackage[all,pdf]{xy}
\DeclareGraphicsExtensions{.eps,.ps,.jpg,.bmp,.pdf}
\graphicspath{{figure}}

\newcommand{\tikzcircle}[2][red,fill=red]{\tikz[baseline=-0.5ex]\draw[#1,radius=#2] (0,0) circle ;}%
\newcommand*\halfcirc[1][1ex]{%
  \begin{tikzpicture}
  \draw[fill] (0,0)-- (90:#1) arc (90:270:#1) -- cycle ;
  \draw (0,0) circle (#1);
  \end{tikzpicture}}

\newcommand*\xor{\oplus}
\newcommand\mydots{\ifmmode\ldots\else\makebox[1em][c]{.\hfil.\hfil.}\fi}

\newtheorem{theorem}{Theorem}

\hyphenation{op-tical net-works semi-conduc-tor}

\hypersetup{colorlinks,linkcolor={red},citecolor={red},urlcolor={red}}

\begin{document}

\author{Viet Vo}
\orcid{0000-0002-5984-7981}
\affiliation{%
  \institution{Swinburne University of Technology}
  \city{Melbourne}
  \state{}
  \country{Australia}
}
\email{vvo@swin.edu.au}

\author{Shangqi Lai}
\orcid{0000-0002-0374-3593}
\affiliation{%
  \institution{CSIRO's Data61}
  \city{Melbourne}
  \country{Australia}}
\email{shangqi.lai@data61.csiro.au}

\author{Xingliang Yuan}
\orcid{0000-0002-3701-4946}
\affiliation{%
  \institution{The University of Melbourne}
  \city{Melbourne}
  \country{Australia}
}
\email{xingliang.yuan@unimelb.edu.au}

\author{Surya Nepal}
\affiliation{%
 \institution{CSIRO's Data61}
 \city{Sydney}
 \country{Australia}}
 \email{surya.nepal@data61.csiro.au}

\author{Qi Li}
\affiliation{%
  \institution{Tsinghua University}
  \city{Beijing}
  \country{China}}
\email{qli01@tsinghua.edu.cn}


\renewcommand{\shortauthors}{Vo, et al.}

\title{OblivCDN: A Practical Privacy-preserving CDN with Oblivious Content Access}

\begin{abstract}

%
%
 %

Content providers increasingly utilise Content Delivery Networks (CDNs) to enhance users' content download experience.
However, this deployment scenario raises significant security concerns regarding content confidentiality and user privacy due to the involvement of third-party providers.
%
%
Prior proposals using private information retrieval (PIR) and oblivious RAM (ORAM) have proven impractical due to high computation and communication costs, as well as integration challenges within distributed CDN architectures.

In response, we present \textsf{OblivCDN}, a practical privacy-preserving system meticulously designed for seamless integration with the existing real-world Internet-CDN infrastructure.
Our design strategically adapts Range ORAM primitives to optimise memory and disk seeks when accessing contiguous blocks of CDN content, both at the origin and edge servers, while preserving both content confidentiality and user access pattern hiding features.
Also, we carefully customise several oblivious building blocks that integrate the distributed trust model into the ORAM client, thereby eliminating the computational bottleneck in the origin server and reducing communication costs between the origin server and edge servers.
Moreover, the newly-designed ORAM client also eliminates the need for trusted hardware on edge servers, and thus significantly ameliorates the compatibility towards networks with massive legacy devices. 
In real-world streaming evaluations, \textsf{OblivCDN} demonstrates remarkable performance, downloading a $256$ MB video in just $5.6$ seconds. This achievement represents a speedup of $90\times$ compared to a strawman approach (direct ORAM adoption) and a $366\times$ improvement over the prior art, OblivP2P.

\end{abstract}

\begin{CCSXML}
<ccs2012>
       <concept_id>10002978.10002991.10002995</concept_id>
       <concept_desc>Security and privacy~Privacy-preserving protocols</concept_desc>
       <concept_significance>500</concept_significance>
       </concept>
 </ccs2012>
\end{CCSXML}

\ccsdesc[500]{Security and privacy~Privacy-preserving protocols}

\keywords{Content Delivery Network, Trusted Execution Environment, Oblivious RAM, Distributed Point Function}
  
\maketitle


\section{Introduction}

Content Delivery Networks (CDNs), such as  Netflix~\cite{Netflix212}, Akamai~\cite{Akamai18}, and Cloudflare~\cite{Cloudflare212},  offer a popular way for file sharing and video streaming.   
The typical usage of CDNs offloads global traffic from an origin server to edge servers, located close to end-users, and improves the end-user's fetch latency. 
Such wide adaptation of CDNs is predicted to generate a massive market growth of US$\$3.2$B by 2026~\cite{ResearchMartket21} and serves a large portion ($73\%$ ) of global traffic~\cite{Intel21}.

Despite these benefits, CDNs remain significant security concerns regarding content confidentiality and user privacy. 
%
%
Unauthorised data access to CDN contents has been reported frequently~\cite{AkamaiIntel20,Adverline19,CloudHeartBleed,Pfizer07}. 
Thus, data needs to be always encrypted to protect content confidentiality in the origin and edge servers. 
In addition, end-users' privacy can be compromised by exploiting the content access patterns in CDNs~\cite{Wang21,Panchenko2016,Cai12}, even though the content remains encrypted. 
%
For example, when end-users repeatedly access the same content from the same server, the content popularity and users' personal interests can be inferred from these access patterns via fingerprinting~\cite{Panchenko2016,Cai12}. 
%
%
The access patterns are also widely exploited in encrypted storage by leakage attacks~\cite{IslamKK12,Cash15} to infer users' content requests. 
%
Therefore, hiding the access patterns of contents is crucial to protect end-users' privacy. 

%
While many privacy-preserving content distribution systems have been proposed~\cite{Jia16,Gupta16,Jia16OblivP2P,Jia2017,Silva19,Shujie20}, they often lack compatibility with real-world Internet CDN architecture~\cite{Netflix212,Akamai18,Cloudflare212} and struggle to achieve both security and efficiency. 
For example, APAC~\cite{Jia16} considers the peer-assisted CDN setting, which allows a peer node to download the content from edge servers and share content with nearby nodes.
However, this work aims to fortify the users' privacy between peers, but still maintain a deterministic access pattern on edge servers.
On the other hand, some systems~\cite{Silva19,Jia2017} require a trusted execution environment (TEE) on all edge servers.
Unfortunately, TEEs may not always be available due to infrastructure differences among regions.
It also hardly updates the infrastructure in an effective and financial way, since existing CDN providers, such as Netflix~\cite{Netflix212}, Akamai~\cite{Akamai18}, and Cloudflare~\cite{Cloudflare212}, have already deployed thousands of edge servers globally~\cite{NetflixOCA}.  

Although some existing proposals~\cite{Gupta16,Jia16OblivP2P,Shujie20} attempt to be compatible with real-world CDN systems, we stress that those works rely on heavy building blocks (e.g., Private Information Retrieval) and introduce a noticeable latency for end-users when downloading content from edge servers.
For instance, downloading a $512$KB block takes $4.1$s with OblivP2P~\cite{Jia16OblivP2P} and $2$s with Multi-CDN~\cite{Shujie20} from $2^5$ servers, which means a $256$MB video could take $35$ and $17$ minutes to be fully downloaded, respectively.
Moreover, those designs require CDN providers to create a significant number of replicas (at least twofold to support multi-server PIR~\cite{Gupta16,Jia16OblivP2P} and more to accelerate in~\cite{Shujie20}) for the entire world of CDN content~\cite{Dauterman20,Dauterman22,Frank17,Eskandarian21}.
This adaptation also increases the bandwidth cost for end-users to download the replicated content from multiple edge servers.

%

In this study, we propose \textsf{OblivCDN}, a new practical privacy-preserving CDN system. 
\textsf{OblivCDN} not only addresses the above security concerns but also seamlessly integrates with the real-world CDN architecture.
Furthermore, it achieves low-latency oblivious content access without incurring additional bandwidth costs for end-users and storage expenses at the edge servers.

\noindent{\bf Technique Overview.} 
Our observation is that the real-world CDN architecture can be easily adapted to satisfy our security goals with Oblivious RAM (ORAM)~\cite{Stefanov13}, i.e., placing the ORAM client at the origin server/end-user sides and distributing the ORAM server storage among edge servers.
However, such a naive solution causes significant latency and bandwidth costs when end-users fetch content.
Besides, it also incurs a computational bottleneck at the origin server and prohibitive communication overhead between the edge servers and the origin server.

To reduce end-users' fetch cost, \textsf{OblivCDN} considers the locality of content blocks and decomposes the end-users' requests into multiple range queries for contiguous blocks.
We adapt range ORAM~\cite{Chakraborti191} to optimise the memory and disk storage on the origin and edge servers, which enables efficient range accesses with no access pattern leakage.
Furthermore, to eliminate the computation and communication overhead, \textsf{OblivCDN} separates the metadata and physical video blocks between the origin and edge servers. 
Specifically, \textsf{OblivCDN} follows existing CDN systems~\cite{NetflixAWS,NetflixOCA} to employ a TEE-enabled origin server to manage only the metadata of video blocks (key and addressing information), while the larger content blocks are stored in separated ORAM storage on the edge servers. 
The origin server only pushes content blocks to the edges during uploads, and they are never routed back to the origin. 
As a result, the design of \textsf{OblivCDN} reduces the computation and communication overhead associated with using ORAM in CDNs.
%

%

%
%
%

%

On the other hand, we realise that distinguishing end-users' access patterns on the untrusted edge servers is possible, either through network traffic analysis~\cite{Wang21,Panchenko2016,Cai12} or attacks on encrypted databases~\cite{IslamKK12,Cash15}.
Obfuscating the access patterns becomes challenging without the presence of a trusted ORAM client deployed in edge servers or duplicating the content and bandwidth costs~\cite{Zahur16,Doerner17,WangORAM15,Bunn20}.
%
%
\textsf{OblivCDN} introduces a new design to address this challenge.
Specifically, \textsf{OblivCDN} distributes the ORAM trust assumption to two non-colluding computing service nodes located close to the edge servers.
These two nodes jointly function as an ORAM client and receive instructions from the origin server to manipulate edge storage with customised oblivious building blocks. 
The significance of this design lies in its capability to support oblivious range access and protect the access patterns of user content requests without increasing end-user bandwidth or replicating the edge storage.

We implement the prototype for \textsf{OblivCDN} and conduct an extensive evaluation. 
We first benchmark the performance of oblivious building blocks in \textsf{OblivCDN}. 
The experiment shows that with these building blocks and video metadata/data separation design, \textsf{OblivCDN} successfully reduces the origin server's communication bandwidth to a constant, which is smaller than $8$ MB, regardless of the video block size.
Next, we compare the real-world streaming performance of \textsf{OblivCDN} with a Strawman approach, which is a direct ORAM adaption. 
The results demonstrate that \textsf{OblivCDN} improves $87-90\times$ end-users' fetching latency for content between $2-256$ MB. 
Regarding communication, \textsf{OblivCDN} saves $6.1- 786\times$.
Compared to OblivP2P in a setting of $2^5$ peers~\cite{Jia16OblivP2P}, \textsf{OblivCDN} improves end-users' fetch latency by $366\times$ and $40\times$ throughput when requesting a $256$MB file.
%
%
\noindent In summary, our contributions are that:

\begin{itemize}[leftmargin=*]

\item We propose \textsf{OblivCDN}, the first CDN system designed to protect both content confidentiality and user privacy (access patterns to content) while maintaining compatibility with existing mainstream CDNs~\cite{Netflix212,Akamai18,Cloudflare212}. 
%
%

\item We formally analyse the security and performance of all entities involved in \textsf{OblivCDN}, including the origin server, edge servers, and non-colluding computing service nodes. 

\item We extensively evaluate \textsf{OblivCDN} with our Strawman design and OblivP2P~\cite{Jia16OblivP2P} under the real-world streaming setting. 
%
%

\end{itemize}

\section{Primitive building blocks}
\label{sec:preliminaries}

\noindent\textbf{Path ORAM and Data Locality}.
We leverage range ORAM~\cite{Chakraborti191}, a variant of Path ORAM~\cite{Stefanov13}, to enable efficient oblivious range access on CDN content.
Without loss of generality, the Path ORAM contains a storage server and a client.
The server maintains a full binary tree with $N$ nodes.
Each node contains $Z$ blocks with size $B$. 
If a node has less than $Z$ valid blocks, the node will be padded with dummy blocks to ensure that each node has a fixed size ($Z\times B$). 
The client maintains two data structures: a stash $S$, which keeps all blocks that are yet to be written to the server (can be blocks fetched from the server or new blocks); a position map $\mathsf{position}$, which keeps the mapping between blocks and the leaf node IDs in the binary tree.
The Path ORAM protocol consists of three algorithms:

\begin{itemize}[leftmargin=*]
	\item $({T},\mathsf{position}, S)\leftarrow\mathsf{ORAM.Init}(N, Z, B)$: Given the tree node number $N$, node size $Z$ and block size $B$, the server allocates $N\times Z\times B$ space and arranges it as a full binary tree $T$ with $\lceil\log_2{N}\rceil$ levels. 
	The client generates a position map $\mathsf{position}$ with random path information and a stash $S$.
	\item $\mathsf{ORAM.Access}(\textsf{Op}, bid, T, \mathsf{position}, S, b)$: With the binary tree $T$, the position map $\mathsf{position}$ and a block ID $bid$, the client gets the leaf node ID $\mathsf{lf}\leftarrow\mathsf{position}[bid]$ from $\mathsf{position}$.
	Then, it fetches all the blocks from the root of $T$ to $\mathsf{lf}$ and decrypts them, and inserts all these blocks into the stash $S$.
	If $\textsf{Op}=`\textsf{read}'$, the client loads $S[bid]$ into $b$.
	If $\textsf{Op}=`\textsf{write}'$, the client puts $b$ into $S[bid]$.
	Meanwhile, it assigns a new random leaf node ID to $bid$ in $\mathsf{position}$ to ensure that the next access to this block goes to a random path.
	\item $\mathsf{ORAM.Evict}(T, \mathsf{position}, S, \mathsf{lf})$: To write the blocks in $S$ back to a path with the given leaf node $\mathsf{lf}$, the client scans all blocks in $S$ and tries to fit the blocks into the nodes between the leaf node $\mathsf{lf}$ to the root node.
	If a node at the current level has enough space, the client encrypts the block and evicts the encrypted block from the stash to that node.
	Otherwise, the block remains in the stash.
	Finally, the nodes are written back to the store on the server.
\end{itemize}

In Path ORAM, each ORAM.Access always incurs path read operations in a tree path specified by the client. The accessed block will then be assigned with a new random tree path, so even the access on the same block will be randomised on the view of the server-side adversary.
Moreover, $\mathsf{ORAM.Evict}$ is always invoked after each $\mathsf{ORAM.Access}$ to write the accessed path, so the server-side adversary cannot distinguish read and write operations.

Under the CDN context, we specifically want to ensure efficient sequential accesses over data blocks as it is crucial to the streaming performance.
Hence, we employ a Path ORAM variant, named range ORAM (rORAM)~\cite{Chakraborti191}.
In rORAM, the client can have a predefined parameter $R\leq N$ to indicate the maximum sequence size that can be fetched in one access.
The server storage will be organised to have $\log_2R+1$ separate stores, denoted as $E_r$, where $r\in[0, \log_2{R}]$.
The rORAM client also keeps $\log_2R+1$ copies of data structures (i.e., $\mathsf{position}_r$ and $S_r$) that are corresponding to each $E_r$.

In rORAM, each $E_r$ stores $2^r$ blocks in a way that can serve $2^r$ sequential range queries with minimised disk seeks while preserving the randomness of blocks not in the targeted range.
Moreover, the blocks in each layer of all ORAM trees are stored adjacent to each other, which enables batch eviction that only requires fixed seeks independent of the number of blocks to be evicted.

With rORAM, CDN providers can accelerate sequential access on video blocks and thus improve the service quality in streaming. 
Particularly, let a video block sequence be $[bid,bid+2^r)$, and the stash of $E_r$ be $S_r$.
The scheme involves the following operations:

\begin{itemize}[leftmargin=*]

\item $E_r.\textsf{ReadRange}(bid)$ allows oblivious access $\mathsf{ORAM.Access}$ operation for blocks in the range $[bid,bid+2^r)$, where $2^r \leq R$. The requested $2^r$ blocks are loaded into the $S_r$.

\item $E_r.\textsf{BatchEvict}(k)$ loads all blocks from the targeted $k$ paths in $E_r$ to $S_r$. Then, it writes $k$ paths with blocks from $S_r$ to $E_r$. 

\end{itemize}

With rORAM, it only requires an $\mathcal{O}(\log_2 N)$ disk seeks to retrieve $2^r$ consecutive blocks and $\mathcal{O}(\log_2 N)$ seeks to evict $k$ paths to $E_r$, regardless of $r$.
As a leakage tradeoff, the optimal access reveals the range size supported by $E_r$.
More details are given in Appendix~\ref{subsec:rORAM_protocols}.

\noindent\textbf{Intel SGX}.
Intel SGX is an instruction set to enable hardware-assisted confidential computation~\cite{Costan161}.
It has been widely adopted by popular cloud platforms (e.g., Azure, AWS)~\cite{Azure21}.
Despite concerns about memory side-channel leakage~\cite{Lee2017InferringFC,Brasser17,Bulck2017TellingYS,Wang2017LeakyCO}, numerous techniques readily obfuscate the \textit{Enclave}'s access pattern to CPU caches in Intel SGX~\cite{Rane15,Sasy17,Ohrimenko16}. 
Additionally, cloud providers often respond promptly to TEE attacks~\cite{microsoft18}.

%

\noindent\textbf{Distributed Point Functions}.
Distributed Point Function (DPF) shares a point function $f(x)$, which evaluates to $1$ on the input $x^*$ and to $0$ on other inputs, to function shares $f_{i \in [1,m]}$ s.t.  $\Sigma^{m}_{i=1}f_i(x)=f(x)$ for any input $x$, while none of the $f_i$ \textit{information-theoretically} reveals $f$~\cite{Gilboa14,Boyle15,Boyle16}.
Under the two-party setting, the DPF consists of two probabilistic polynomial-time algorithms:
\begin{itemize}[leftmargin=*]
\item  $\textsf{Gen}(1^\lambda,x^*)$ is a key generation algorithm that takes the security parameter $1^\lambda$ and an input $x$ as inputs and outputs a tuple of keys $(k_1,k_2)$ for the point function at $x^*$. 

\item $\textsf{Eval}(k_p,x^\prime)$ is an evaluation algorithm executed at each party $p\in [1,2]$, which takes the inputs of $k_p$ and an evaluation point $x^\prime$, and outputs $y_p$, where $y_p$ is an additive share of $1$ if $x^\prime= x^*$, or a share of $0$ otherwise.
\end{itemize}

Considering the point function as an $n$-vector with $f(x), x\in[0, n)$, with DPF, the communication cost of sending such a vector is reduced to $\mathcal{O}(\log_2 n)$.
Moreover, each server only requires $\mathcal{O}(n\cdot \log_2 n)$ block cipher operations to obtain the shared vector~\cite{Boyle16}.

\section{System Overview}\label{sec:overview}

%
\begin{figure}[!t]
\centering
\includegraphics[width=\linewidth]{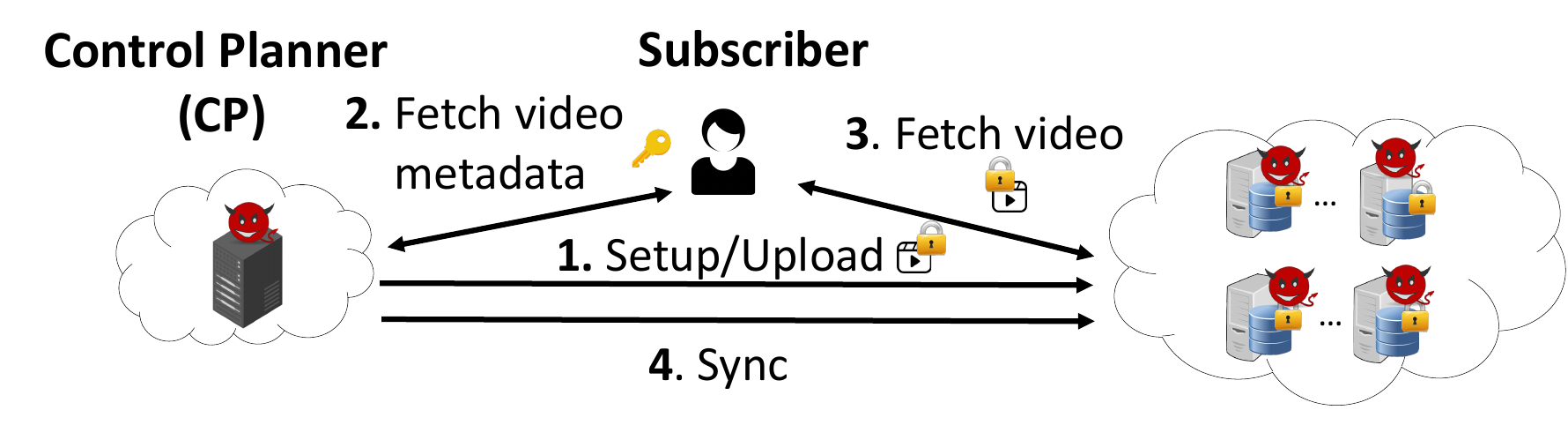}
\vspace{-15pt}
\caption{General CDN Architecture}
\vspace{-10pt}
\label{fig:arch}
\end{figure}

\subsection{System Entities and Operations}
\label{subsec:sysmodel}
In this paper, we follow the real-world CDN structure~\cite{NetflixAWS,NetflixOCA} as shown in Figure~\ref{fig:arch}.
In specific, a \textit{CDN service provider} (e.g., Netflix) hires a control planner \textit{CP} and a number of edge servers \textit{ES}s.
The \textit{CP} is hosted by a \textit{cloud service provider}~\cite{NetflixAWS}.
The \textit{ES}s are deployed within the network of an \textit{ISP partner}~\cite{NetflixOCA}. 
When a \textit{Subscriber} sends a video request, the \textit{CP} responds by providing the metadata for the requested blocks within the video.
Subsequently, the \textit{ES}s transmit sequences of blocks to the \textit{Subscriber}. 

Formally, a CDN system contains four operations as follows:

\noindent \ul{\textbf{Setup}:}
The \textit{CDN service provider} instructs the \textit{CP} to establish connections to the \textit{ES}s and initialise all required data structures.

\noindent \ul{\textbf{Upload}:} The service provider uploads the \textsf{video} to the \textit{CP}. 
Then, the \textit{CP} encrypts the \textsf{video} and distributes encrypted blocks to \textit{ES}s. 

\noindent \ul{\textbf{Fetch}:} Given a requested video ID, the \textit{CP} responds to the \textit{Subscriber} with the metadata of the video (i.e., decryption keys and the addresses of requested video blocks). 
Then, the \textit{Subscriber} accesses \textit{ES}s to download the video blocks and finally decrypts them.

\noindent \ul{\textbf{Sync}:} After users' access or video metadata updates from the \textit{CDN service provider}, the \textit{CP} instructs \textit{ES}s to update video blocks.
%
%
%
%

\subsection{Threat Model}
\label{subsec:threatmodel}

\noindent \textbf{Privacy threats}. Four parties are involved in the CDN system, i.e., the \textit{CDN service provider}, the \textit{cloud service providers}, the \textit{ISP partners}, and the \textit{Subscriber}.
In this paper, we consider the \textit{cloud service providers} and \textit{ISP partners} as untrusted due to frequent data breaches~\cite{Thalesgroup22,AkamaiIntel20,Adverline19,Pfizer07, Upguard19,Bitdefender20,Zdnet19}. 
Specifically, the \textit{cloud service providers} and \textit{ISP partners} are semi-honest, with the condition that they adhere to the Service Level Agreements with the \textit{CDN service provider}~\cite{NetflixOLA}. 
However, they do not collude with each other because practical CDN deployments often adopt a multi-cloud approach to explore the commercial benefits of various providers~\cite{AkamaiAlternative}\footnote{The collusion is still possible between \textit{ES}s within the same ISP network.}.
%

\noindent \textbf{Adversarial Capabilities}. Under our assumption, the adversary in the \textit{cloud service providers} and \textit{ISP partners} side can control the \textit{CP} and \textit{ES}s, respectively.
In particular, the adversary can access the entire software stack, including the hypervisor and OS of the instance hosting \textit{CP} and \textit{ES}s. 
They can obtain memory traces from the caches, memory bus, and RAMs, with the sole exception of accessing the processor to extract information. 
Moreover, the adversary can capture communications with other entities through traffic fingerprinting~\cite{Wang21,Panchenko2016,Cai12} and by exploiting ISP portals~\cite{NetflixOLAPortal,CDNLogTools}. 

In addition to software and hardware accesses, the adversary is collecting the access patterns of \textit{Subscriber}s on the \textit{CP} and \textit{ES}s.
The adversary can leverage the memory access pattern to raise attacks on encrypted storage, such as the leakage abuse attack~\cite{IslamKK12,Cash15}.
On the other hand, the access pattern can be exploited to conduct side-channel attacks against servers with TEE features~\cite{Brasser17,Gotzfried17,Xu15,Gruss17}.
We consider that adversaries in the \textit{CP} and \textit{ES}s can arbitrarily create \textit{Subscriber} accounts to collect the aforementioned information.

%
%

\noindent \textbf{Defence Assumption}. With real-world \textit{CDN service providers} having moved the \textit{CP} to the cloud (e.g., Netflix has moved to AWS~\cite{NetflixAWS}), we assume the TEE-enabled cloud instances~\cite{AWSTEE} are generally available.
We assume that the CDN deploys the advanced key management mechanism~\cite{Herwig2020AchievingKC} to frequently refresh the video decryption key at a low cost.
%
%
Furthermore, we note that Byzantine faulty behaviour can be prevented by existing countermeasures~\cite{Herwig2020AchievingKC}.
Besides, we assume that the communications between all parties involved in CDN are transmitted via a secure channel.

\noindent \textbf{Defence Limitation}. Our design does not address side-channel attacks that exploit speculative execution~\cite{Bulck18}, or implementation bugs~\cite{Jaehyuk17} against TEE.
We consider these attacks to be out of this work, and there are existing solutions available to address them.
For instance, MI6~\cite{Bourgeat19} and KeyStone~\cite{Lee20} can be employed to handle such attacks. 
Additionally, hardware providers have the option to release microcode patches to rectify these vulnerabilities~\cite{Bourgeat19,Costan161}.

We also do not consider side-channel attacks that exploit traffic bursts such as streaming bit-rate variation~\cite{Schuster17}, or ORAM volume attacks~\cite{BlackstoneKM20}.
These specific attacks can be mitigated by complementary works~\cite{Rishabh20} and~\cite{Kamara19}, respectively. 
Furthermore, our design does not consider attacks involving delay/denying content delivery, misrouting, and distributed DoS attacks~\cite{Guo2020CDNJB,Gilad2016CDNonDemandAA}.
Mainstream CDNs actively investigate and address them on a daily basis~\cite{cloudflare21}.
We also do not consider user anonymity (e.g., using Tor) due to copyright obligations from mainstream CDNs~\cite{NetflixCopyright22}.

%


\subsection{Formulating Design Goals}
\label{subsec:goals}

For ease of analysis, we will break down practical security goals (\textbf{SG}s), efficiency goals (\textbf{EG}s), and deployment requirements (\textbf{DR}) that align with mainstream CDNs~\cite{NetflixOCA} as follows:

\noindent \textbf{SG1}: \ul{\textbf{Video Confidentiality}}.
Video blocks of the requested \textsf{video} remain encrypted at the \textit{ES}s. 
Only the \textit{Subscriber}, who obtains the decryption keys for those blocks from the \textit{CP}, can decrypt them. 
The keys are refreshed per video request.

\noindent \textbf{SG2}: \ul{\textbf{Video Access Patterns}}. 
The \textit{ISP partner} cannot know which blocks of which video are sent to the uncompromised \textit{Subscriber}.

\noindent \textbf{SG3}: \ul{\textbf{Metadata Access Protection}}.
The \textit{cloud service provider} cannot know which video is requested by uncompromised \textit{Subscriber}.

\noindent \textbf{EG1:} \ul{\textbf{Efficient Computation}}.
The \textit{CP} does not operate on actual video blocks once the video has been pushed to the \textit{ES}s~\cite{Cloudflare212}.

\noindent \textbf{EG2: \ul{Low Bandwidth Overhead}}. 
Video blocks should not be redirected to the \textit{CP} for any purpose, e.g., cryptographic operations, following the current CDN design~\cite{NetflixOCA}.
Only low bandwidth is required to manipulate video \textsf{block}s between \textit{ES}s and \textit{CS}s.

\noindent \textbf{EG3:} \ul{\textbf{Low Fetch/Sync Latency}}.
The \textit{Subscriber} directly retrieves video blocks in sequential accesses from the \textit{ES}s, and the \textit{ES}s access/update blocks with optimal disk seeks. 

\noindent  \textbf{DR}: \ul{\textbf{CDN Compatibility}}. 
With mainstream CDNs moving the \textit{CP} to the clouds~\cite{NetflixAWS,AWSTEE}, we consider the usage of confidential computing (i.e., TEE) in the \textit{CP} to be compatible.
However, Internet-CDN infrastructures~\cite{Netflix212,Cloudflare212,Akamai18} deployed thousands of \textit{ES}s (e.g., Netflix-OCA devices~\cite{Netflix212}) globally. 
To be compatible with legacy devices, we consider the worst case when all \textit{ES}s do not have TEE.
Furthermore, we aim to avoid any solutions that rely on massive replicas and huge bandwidth costs when accessing the content to enable our design to be incrementally deployed on existing CDNs.  

%
%
%
%


\begin{figure}[!t]
\centering
\includegraphics[width=\linewidth]{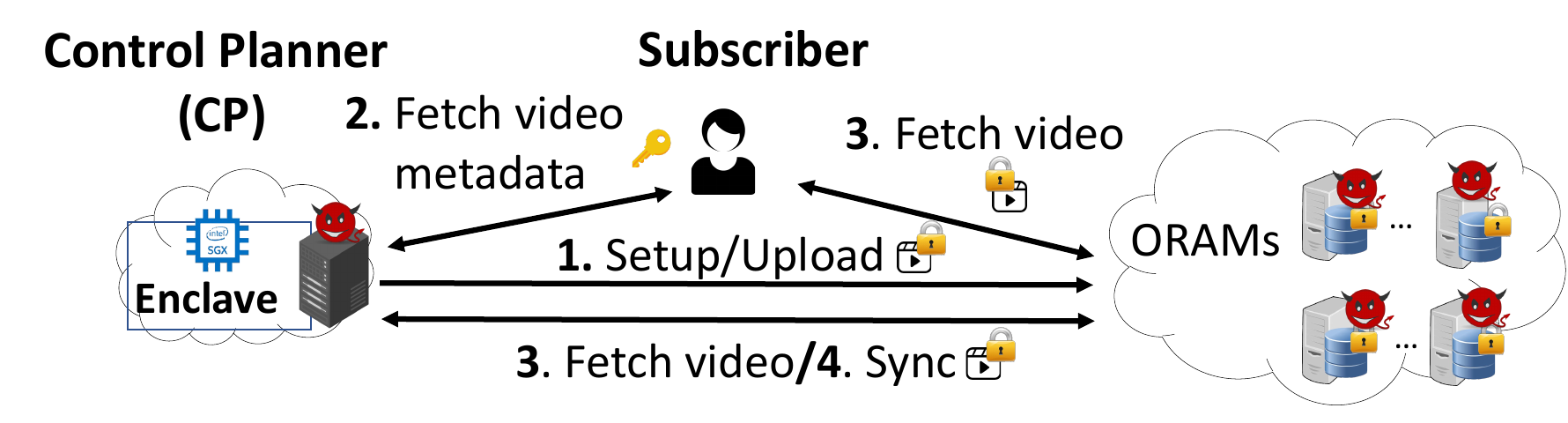}
\vspace{-15pt}
\caption{Strawman Design}
\vspace{-10pt}
\label{fig:strawman}
\end{figure}

\subsection{A Strawman CDN design using ORAM}
\label{subsec:strawman} 
%
%
%
With the above assumptions, it is intuitive to have a Strawman design that relies on TEE and Path ORAM solutions~\cite{Stefanov13,Chakraborti191}. 
As shown in Figure~\ref{fig:strawman}, the Strawman design leverages the \textit{Enclave} within the \textit{CP} as the ORAM client and the ORAM storage is distributed among the \textit{ES}s.
The \textit{Enclave} maintains two data structures, namely, 
\begin{itemize}[leftmargin=*]
 \item $\textsf{VideoMap}$, which stores the map between a video ID $vid$ and the corresponding block identifier list $\{bid_i\}, i\in[0,V)$, where $V$ is the number of blocks for a video.  
 \item $\textsf{BlockStateMap}$, which stores the map between a block identifier $bid$ and its block \textsf{state} $(c,l)$, where $c$ the latest \textbf{counter} for AES encryption/decryption and $l$ is the \textbf{path} information from the standard ORAM position map (e.g., $\mathsf{position}[bid]$, see Section~\ref{sec:preliminaries}).
\end{itemize}
, and it instantiates the four CDN operations as follows:

\noindent \ul{\textbf{Setup}:}
The \textit{CDN service provider} runs remote attestation with the \textit{CP} to attest the \textit{Enclave} code. Then, the \textit{Enclave} establishes connections to all \textit{ES}s and initialises ORAM trees on \textit{ES}s.

\noindent \ul{\textbf{Upload}:} The \textit{CDN service provider} uploads the \textsf{video} to the \textit{Enclave} through the secure channel established during attestation. 
The \textit{Enclave} firstly trunks the video into $V$ blocks $b_i$ and assigns those blocks' block states $(c_i,l_i)$ into $\textsf{BlockStateMap}[bid_i]$. 
Then, the \textit{Enclave} runs $\mathsf{ORAM.Access}(`\textsf{write}', bid_i, T, l_i, S, b_i)$ to write those video blocks. After running $\mathsf{ORAM.Access}$, the video blocks will be placed into the ORAM stash $S$ inside the \textit{Enclave}.
The $\{bid_i\}$ list of those blocks will be added into the $\textsf{VideoMap}[vid]$.

\noindent \ul{\textbf{Fetch}:} Given a requested $vid$, the \textit{Enclave} queries the $\textsf{VideoMap}$ and $\textsf{BlockStateMap}$ and responds to the \textit{Subscriber} the identifier list $\{bid_i\}$ and the \textsf{state}s $(c_i,l_i)$ of the requested video.
Then, the \textit{Subscriber} runs $\mathsf{ORAM.Access}(`\textsf{read}', bid_i, T, l_i, S, b_i)$ for $i\in[0,V)$ to download all video blocks from the \textit{ES}s following the metadata.
The \textit{Enclave} also runs the same $\mathsf{ORAM.Access}$ operation to download the retrieved blocks into the ORAM stash $S$ on the background and updates the path and counter $(c,l)$ of accessed $bid$ following the Path ORAM protocol accordingly.

\noindent \ul{\textbf{Sync}:} After each $\mathsf{ORAM.Access}$ operation, the \textit{Enclave} refers to $\textsf{BlockStateMap}$ to run $\mathsf{ORAM.Evict}(T,\textsf{BlockStateMap},S,l_i)$ and flush the content inside the ORAM stash $S$ into accessed path $l_i$ in the ORAM storage $T$ on \textit{ES}s.

This design can meet all desired security goals (\S\ref{subsec:goals}) as follows: 

\noindent \textbf{SG1}: All video blocks remain encrypted at the \textit{ES}s. Only the \textit{Enclave} in the \textit{CP} can generate and share decryption keys with \textit{Subscriber}s. The video block will be re-encrypted in each \textbf{Sync} operation.

\noindent \textbf{SG2}: With ORAM, fetched blocks are always re-encrypted and stored in new random ORAM paths.
Hence, the \textit{ES}s cannot distinguish which video and its blocks have been accessed (see \S\ref{sec:preliminaries}).

\noindent \textbf{SG3}: The video metadata is placed inside the \textit{Enclave}, which is inaccessible to the untrusted part of the \textit{CP}. 
Moreover, we follow prior works (\cite{Sasy17,Mishra18,Lai21}, just list a few) to use oblivious data structures (Path ORAM) and operations (see Appendix~\ref{subsec:oblivEnclave}) to store all data structures (i.e., $\textsf{VideoMap}$, $\textsf{BlockStateMap}$, and ORAM Stash $S$) inside the \textit{Enclave}.
This effectively hides the \textit{Enclave}'s memory access patterns and prevents relevant side-channel attacks~\cite{Lee2017InferringFC,Brasser17,Bulck2017TellingYS,Wang2017LeakyCO}. 

However, we notice that few requirements are partially or hardly achieved under the Strawman design, which include:

\noindent{\textbf{EG1}: \textbf{Computation Bottleneck}}.
The \textit{CP} directly manipulates the video block inside the \textit{Enclave}, which causes a non-trivial computation cost in the \textbf{Sync} operation.

\noindent{\textbf{EG2}: \textbf{Intercontinental Bandwidth Blowup}}.
Huge intercontinental communication costs are incurred in \textbf{Fetch} and \textbf{Sync} operations as the \textit{CP} needs to refresh all accessed blocks within the \textit{Enclave}.

\noindent{\textbf{EG3}: \textbf{Long Fetch/Sync Latency}}. The ORAM incurs an $\mathcal{O}(\log_2N)$ cost~\cite{Stefanov13} to access or evict one block, resulting in a noticeable delay when fetching or synchronising an actual video (with many blocks). 

\noindent{\textbf{DR}: \textbf{Subscriber Bandwidth Blowup}}. The Strawman is compatible with the current CDN structure~\cite{NetflixAWS,NetflixOCA}, having the TEE on \textit{CP} only. 
However, \textit{Subscriber}s have to download $\mathcal{O}(\log_2N)$ blocks for each targeted video block which leads to a surging bandwidth cost.

\begin{table}
  \centering
  \small
  \caption{Schemes against design goals in \S\ref{subsec:goals}}
  \vspace{-5pt}
   \label{tab:strawman_sec_eff}
   \tabcolsep 0.025in
   \footnotesize{\tikzcircle[black, fill=black]{2.5pt}\textemdash fully achieve, \halfcirc[2.5pt] \textemdash partially achieve, \tikzcircle[black, fill=white]{2.5pt} \textemdash not meet}\\
   \begin{tabular}{cccccccc}
    \toprule
      Scheme &  \textbf{SG1}  &  \textbf{SG2}  & \textbf{SG3}  & \textbf{EG1} & \textbf{EG2}  & \textbf{EG3}  & \textbf{DR} \\
    \midrule
    \centering
        Strawman &  \CIRCLE   &  \CIRCLE    & \CIRCLE  & \Circle & \Circle & \Circle & \LEFTcircle   \\
        rORAM-Strawman &  \CIRCLE   &  \CIRCLE & \CIRCLE &  \Circle &  \Circle & \CIRCLE & \LEFTcircle   \\
        Video Data/Metadata Separation &  \CIRCLE    &  \LEFTcircle     & \CIRCLE  & \CIRCLE & \CIRCLE  & \CIRCLE   &  \LEFTcircle \\
        \textit{CP} with Distributed Trust &  \CIRCLE    & \CIRCLE     & \CIRCLE  & \CIRCLE & \CIRCLE  & \CIRCLE  &  \CIRCLE \\
    \bottomrule
  \end{tabular}
  \vspace{-10pt}
\end{table}

\section{OblivCDN Design}
\label{sec:main}
In this section, we present \textsf{OblivCDN}, a CDN system that achieves all design goals set in Section~\ref{subsec:goals}.
We will show our three attempts to gradually refine our design to meet the final requirement.
In each attempt, we will elaborate on how we design customised components and integrate them into the CDN system to resolve specific challenges while maintaining other properties.
In Table~\ref{tab:strawman_sec_eff}, we summarise the security and efficiency of all our designs. 

\subsection{Attempt 1: rORAM-Strawman}
Without changing the system architecture in Figure~\ref{fig:strawman}, we can simply achieve the \textbf{EG3} requirement by replacing generic the Path ORAM instance with rORAM.
In particular, with rORAM, the data structures inside the \textit{Enclave} are modified as follows:
\begin{itemize}[leftmargin=*]
 \item $\textsf{VideoMap}$ stores the map between a video ID $vid$ and the corresponding block identifier list with ranges $\{(bid_i,r_i)\}, i\in[0,R_V)$, where $R_V$ is the number of ranges generated from the video. 
\end{itemize}
Also, rORAM operations (\textcolor{blue}{blue} operations) will be invoked instead of generic Path ORAM operations in CDN operations:

\noindent \ul{\textbf{Setup}:}
The \textit{CDN service provider} runs remote attestation with the \textit{CP} to attest the \textit{Enclave} code. Then, the \textit{Enclave} establishes connections to all \textit{ES}s and initialises \textcolor{blue}{rORAM storage $E_r$} on \textit{ES}s.

\noindent \ul{\textbf{Upload}:} The \textit{CDN service provider} uploads the \textsf{video} to the \textit{Enclave} through the secure channel established during attestation. 
The \textit{Enclave} firstly trunks the video into $V$ blocks and \textcolor{blue}{groups those blocks into $R_V$ ranges with $2^{r_i}$ consecutive $bid$s starting with $bid_i$}.
Their block states \textcolor{blue}{$\{(c_i,l_i),... ,(c_{i+2^{r_i}-1},l_{i+2^{r_i}-1})\}$} are added into the corresponding entries in $\textsf{BlockStateMap}$. 
Then, the \textit{Enclave} runs \textcolor{blue}{$E_{r_i}.\textsf{ReadRange}(bid_i)$} to write those video blocks. After running \textcolor{blue}{$E_r.\textsf{ReadRange}$}, the video blocks will be placed into the \textcolor{blue}{rORAM stash $S_r$} inside the \textit{Enclave}.
The \textcolor{blue}{$\{(bid_i,r_i)\}$} list of those blocks will be added into the $\textsf{VideoMap}[vid]$.

\noindent \ul{\textbf{Fetch}:} Given a requested $vid$, the \textit{Enclave} queries the $\textsf{VideoMap}$ and $\textsf{BlockStateMap}$ and responds to the \textit{Subscriber} the identifier list \textcolor{blue}{$\{(bid_i,r_i)\}$} and the \textsf{state}s \textcolor{blue}{$\{(c_i,l_i),... ,(c_{i+2^{r_i}-1},l_{i+2^{r_i}-1})\}$} of the requested video.
Then, the \textit{Subscriber} runs \textcolor{blue}{$E_{r_i}.\textsf{ReadRange}(bid_i)$} for $i\in[0,R_V)$ to download all video blocks from the \textit{ES}s following the metadata.
The \textit{Enclave} also runs the same \textcolor{blue}{$E_{r_i}.\textsf{ReadRange}$} operation to download the retrieved blocks into the \textcolor{blue}{rORAM stash $S_r$} on the background and update the path and counter $(c,l)$ of accessed $bid$s following the rORAM protocol accordingly.

\noindent \ul{\textbf{Sync}:} After each \textcolor{blue}{$E_r.\textsf{ReadRange}$} operation, the \textit{Enclave} refers to $\textsf{BlockStateMap}$ to run \textcolor{blue}{$E_r.\textsf{BatchEvict}(2^r)$} and flush the content inside the \textcolor{blue}{rORAM stash $S_r$} into $2^r$ accessed paths into the \textcolor{blue}{rORAM storage $E_r$} on \textit{ES}s.

As a result, \textbf{EG3} can be fully satisfied as follows:

\noindent \textbf{EG3}: With rORAM, the \textit{Subscriber} can fetch required video blocks in sequence after only $\mathcal{O}(\log_2N)$ seeks on targeted \textit{ES}s, independent of the number of blocks requested. The \textit{CP} can also update the video block in batch with optimal $\mathcal{O}(\log_2N)$ disk seeks.



\begin{figure}[!t]
\centering
\includegraphics[width=\linewidth]{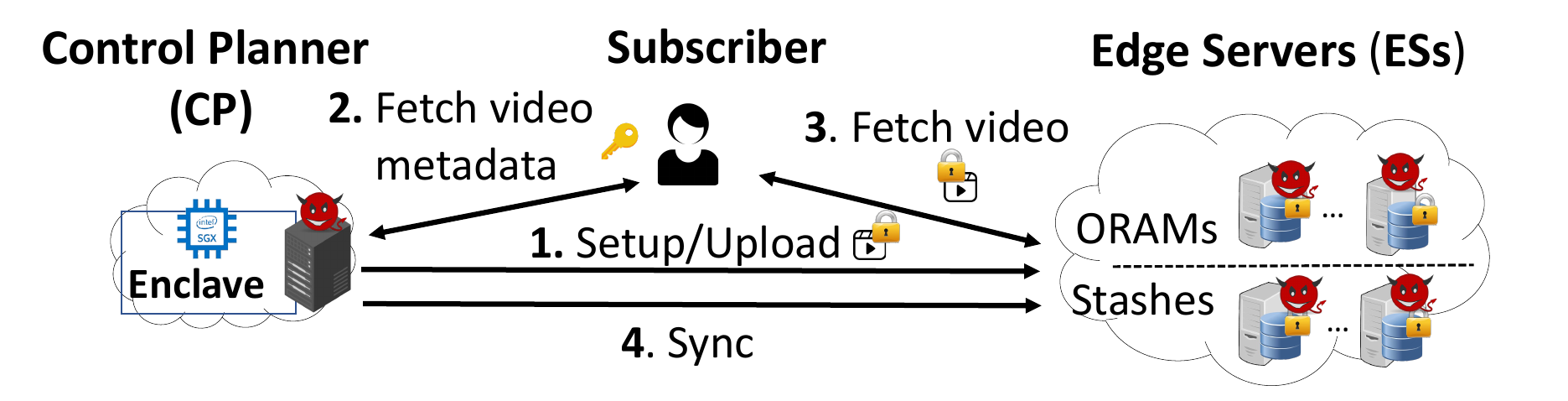}
\vspace{-15pt}
\caption{Video Data/Metadata Separation}
\vspace{-5pt}
\label{fig:separation}
\end{figure}

\subsection{Attempt 2: Video Data/Metadata Separation}
\noindent{\textbf{Overview.}} In the Strawman and rORAM-Strawman approaches, one major bottleneck is the substantial computational and communication costs associated with synchronising video blocks between \textit{CP} and \textit{ES}s.
Also, \textit{Subscriber}s incur a huge bandwidth cost ($\mathcal{O}(2^r\times\log_2N)$) to fetch targeted $2^r$ video blocks.
In order to achieve \textbf{EG1}, \textbf{EG2} and \textbf{DR}, we devise a customised rORAM protocol and integrate it into the CDN system to minimise the computation and communication between \textit{CP} and \textit{ES}s, as well as the bandwidth cost for \textit{Subscriber}s to download video blocks.

As shown in Figure~\ref{fig:separation}, there are two major changes upon the rORAM-Strawman design:
First, we split the \textit{ES}s into two groups with an equal amount of instances.
While the first group of those instances still host rORAM storage $E_r$ as in the rORAM-Strawman, the second instance group is dedicated to hosting the rORAM stash $S_r$ of each rORAM storage.
This design allows the \textit{CP}/\textit{Subscriber}s to outsource all real video data and operations to \textit{ES}s and thus avoids expensive data block manipulation inside the \textit{Enclave} and bandwidth cost between \textit{CP} and \textit{ES}s as well as \textit{Subscriber}s and \textit{ES}s.

To foster the first change, we introduce several new data structures on the \textit{CP} to manage the metadata of CDN videos.
Those data structures keep track of the real video blocks' location information and assist the \textit{CP} in instructing \textit{ES}s to manipulate real video blocks without retrieving them.
With the new data structure, when the \textit{CP} receives any \textbf{Fetch} requests from the \textit{Subscriber}, it lets the \textit{Subscriber} obtain real video blocks from \textit{ES}s via rORAM operations while performing the same \textbf{Fetch} operation on the metadata.
Later, when the \textit{CP} executes the \textbf{Sync} operation, the \textit{Enclave} performs ORAM eviction on the metadata and then remotely \textit{instructs} the \textit{ES}s to keep the ORAMs storing actual video blocks synchronised.

\begin{center}
\begin{table}[!t]

  \caption{Data structures in CDN Entities after Separating Video Data and Metadata}
  \vspace{-5pt}
  \centering
  \footnotesize
  \label{tab:components}
  \begin{tabular}{|p{1.5cm}|c|c|}
    \hline
    \multirow{2}{*}{\parbox{1.7cm}{\centering \textit{Control Plane (CP)}}} & \cellcolor[gray]{0.9} & \cellcolor[gray]{0.9} \textsf{MetaStash}$_r$ and \textsf{BlockTracker}$_r$, $r \in [0,\textnormal{log}_2R]$\\
    \cline{3-3}
    & \multirow{-2}{*}{\cellcolor[gray]{.9}\textit{Enclave}}  & \cellcolor[gray]{0.9} \textsf{VideoMap} and \textsf{BlockStateMap}\\
        \specialrule{.3em}{.05em}{.05em}

        \multirow{2}{*}{\parbox{1.7cm}{\centering \textit{Edge Servers (ESs)}}}  & $E_r$  &  {\centering ORAM supports $2^r$ sequential blocks access}\\ 
  
        \cline{2-3}
        
        & $S_r$ &  {\centering The corresponding stash of $E_r$} \\ 
    \hline 
  \end{tabular}
  \vspace{-10pt}
\end{table}
\end{center}

\noindent{\textbf{New Data Structures.}} To implement the above design,  we introduce crucial data structures as outlined in Table~\ref{tab:components}.
We now detail the design rationale of these structures as follows:

On the \textit{ES} side, each \textit{ES} will be assigned to host the rORAM storage $E_r$ or the stash $S_r$ on the \textit{ES}s, where $r \in [0,\textnormal{log}_2R]$.
The \textit{ES}s for the rORAM storage still initialises $E_r$, where $r\in[0, \log_2{R}]$, to enable access to $2^r$ sequential video blocks following the rORAM scheme (\S\ref{sec:preliminaries}).
On the other hand, the \textit{ES}s for the rORAM stash have the predefined parameters for stash size and its counterparty \textit{ES}s with $E_r$, so it can generate $S_r$ and pair with a specific $E_r$ to execute $E_r.\textsf{ReadRange}$ and {$E_r.\textsf{BatchEvict}$ on behalf of the \textit{CP}/\textit{Subscriber}s\footnote{With $S_r$ on \textit{ES}s, the stash only stores the encrypted video blocks fetched from $E_r$ to preserve \textbf{SG1}. The \textit{Subscriber}s will keep the decryption counters $c$ locally after receiving block states from \textit{CP} and use them decrypt the video blocks from $S_r$.}.

On the \textit{CP} side, we utilise a small memory within the \textit{Enclave} to store the metadata for $E_r$ and $S_r$, denoted as $\textsf{BlockTracker}_r$ and $\textsf{MetaStash}_r$, respectively.
As shown in Figure~\ref{fig:separated_ops}, each $\textsf{BlockTracker}_r$ and $\textsf{MetaStash}_r$ have the same ORAM and stash sizes and structure as their counterparty $E_r$ and $S_r$ on \textit{ES}s.
The only difference is that $\textsf{BlockTracker}_r$ and $\textsf{MetaStash}_r$ store the video block identifiers ($bid$), while $E_r$ and $S_r$ store the actual video blocks.
These meta structures facilitate the following oblivious accesses:
\begin{itemize}[leftmargin=*]

\item $\textsf{MetaData}_r.\textsf{ReadRange}(bid)$: Load a list of $2^r$ block identifiers $[bid,bid+2^r)$ from $\textsf{BlockTracker}_r$ to $\textsf{MetaStash}_r$ following rORAM $\textsf{ReadRange}$. 

\item $\textsf{MetaData}_r.\textsf{BatchEvict}(k)$: Write $k$ paths in $\textsf{BlockTracker}_r$ with block identifiers from $\textsf{MetaStash}_r$ following rORAM $\textsf{BatchEvict}$.
\end{itemize}
The use of these meta structures enables the \textit{Enclave} to perform the range ORAM access/eviction locally within the \textit{CP} without manipulating CDN video blocks stored in the \textit{ES}s.
Specifically, the \textit{Enclave} monitors the routing traces of blocks between the $\textsf{BlockTracker}_r$ and $\textsf{stash}_r$ with \textbf{Upload} and \textbf{Fetch} operations.
After that, these traces enable the \textit{Enclave} to execute logical eviction over $bid$s in $\textsf{MetaStash}_r$ and run the \textbf{Sync} operation with the \textit{ES}s to perform the same eviction on video blocks with much shorter instructions.

\noindent{\textbf{Updated CDN Operations.}} As shown in Figure~\ref{fig:separated_ops}, the CDN operations will be updated with new data structures. 
Particularly, \textcolor{blue}{blue} operations are used compared to the rORAM-Strawman as follows:

\noindent \ul{\textbf{Setup}:} The \textit{CDN service provider} runs remote attestation with the \textit{CP} to attest the \textit{Enclave} code. Then, the \textit{Enclave} establishes connections to all \textit{ES}s and initialises $E_r$ and \textcolor{blue}{$S_r$} on \textit{ES}s.

\begin{figure}[!t]
\centering
\includegraphics[width=0.9\linewidth]{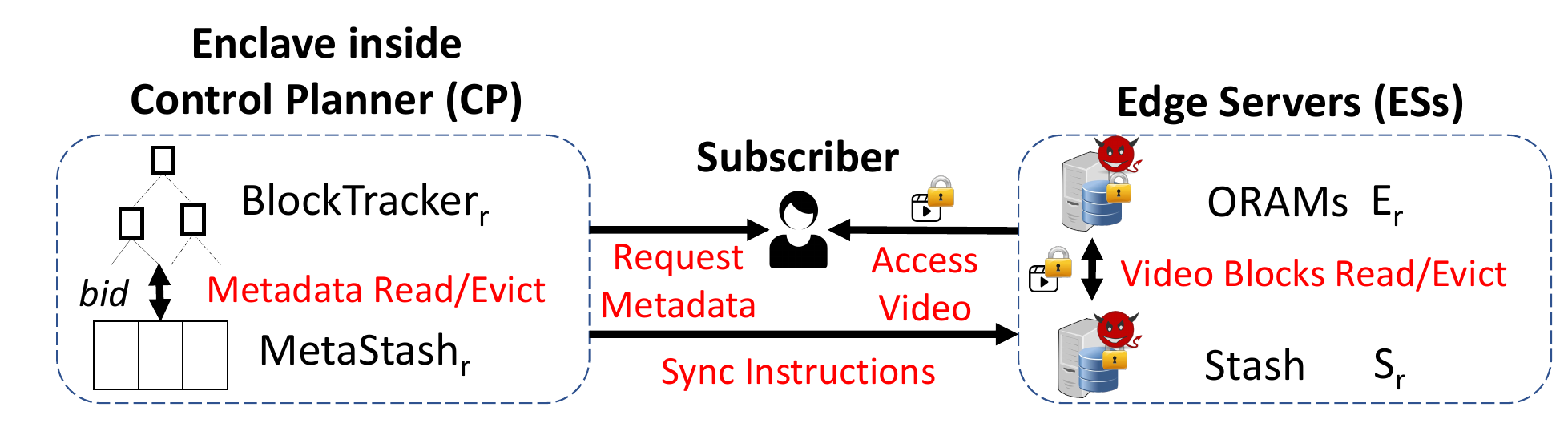} 
\caption{New Data Structures and Updated Operations with Separated Video Data/Metadata}
\vspace{-10pt}
\label{fig:separated_ops}
\end{figure}

\noindent \ul{\textbf{Upload}:} The \textit{CDN service provider} uploads the \textsf{video} to the \textit{Enclave} through the secure channel established during attestation. 
The \textit{Enclave} firstly trunks the video into $V$ blocks and groups those blocks into $R_V$ ranges with $2^{r_i}$ consecutive $bid$s starting with $bid_i$.
The \textit{Enclave} runs the \textcolor{blue}{$\textsf{MetaData}_{r_i}.\textsf{ReadRange}(bid_i)$ to put $bid$s in the range starting with $bid_i$ into $\textsf{MetaStash}_{r_i}$.}
Their block states $\{(c_i,l_i),... ,(c_{i+2^{r_i}-1},l_{i+2^{r_i}-1})\}$ are added into the corresponding entries in $\textsf{BlockStateMap}$. 
Then, the \textit{Enclave} runs $E_{r_i}.\textsf{ReadRange}(bid_i)$ to write those video blocks. After running $E_r.\textsf{ReadRange}$, the video blocks will be \textcolor{blue}{encrypted and placed into the $S_r$ inside the \textit{ES}s}.
The $\{(bid_i,r_i)\}$ list of those blocks will be added into the $\textsf{VideoMap}[vid]$.

\noindent \ul{\textbf{Fetch}:} Given a requested $vid$, the \textit{Enclave} queries the $\textsf{VideoMap}$ and $\textsf{BlockStateMap}$ and responds to the \textit{Subscriber} the identifier list $\{(bid_i,r_i)\}$ and the \textsf{state}s $\{(c_i,l_i),... ,(c_{i+2^{r_i}-1},l_{i+2^{r_i}-1})\}$ of the requested video.
Then, the \textit{Subscriber} runs $E_{r_i}.\textsf{ReadRange}(bid_i)$ for $i\in[0,R_V)$ to instruct \textit{ES}s to \textcolor{blue}{load video blocks from $E_r$ to $S_r$}.
\textcolor{blue}{The \textit{Subscriber} can directly download the video blocks from \textit{ES}s with $S_r$.}
Meanwhile, the \textit{Enclave} runs \textcolor{blue}{$\textsf{MetaData}_{r_i}.\textsf{ReadRange}(bid_i)$} to load $bid$s of the retrieved blocks into the \textcolor{blue}{$\textsf{MetaStash}_{r_i}$} on the background and update the path and counter $(c,l)$ of accessed $bid$s following the rORAM protocol accordingly.

\noindent \ul{\textbf{Sync}:} After each \textcolor{blue}{$\textsf{ReadRange}$} operation, the \textit{Enclave} follows the $\textsf{BlockStateMap}$ to run the \textcolor{blue}{$\textsf{MetaData}_r.\textsf{BatchEvict}(2^r)$ protocol to simulate the eviction on accessed $bid$s}. Then, \textcolor{blue}{the \textit{Enclave} follows the eviction result in the $\textsf{BlockTracker}_r$ to send instructions (i.e., refreshed AES-CTR masks and disk locations generated from the block state $(c,l)$) to \textit{ES}s with $S_r$. Those \textit{ES}s will re-encrypt video blocks and place them in designated locations inside \textcolor{blue}{$E_r$} on \textit{ES}s}.

This attempt is promising to minimise the intercontinental communication cost, as\textit{CP} only sends short instructions to \textit{ES}s.
It also allows \textit{Subscriber}s to only download $2^r$ required video blocks from \textit{ES}s because the ORAM access has been done between \textit{ES}s.
However, we realise that the attempt introduces new security threats.
This is because the read/write instructions from the \textit{CP} reveal the original storage location of a block inside the stash and the next location to write it to \textit{ES}s.
Since \textit{ES}s are deployed by a semi-honest party, i.e., the \textit{ISP partner}, this is a direct violation of the \textbf{SG2} requirement.
Although this can be remedied by employing TEE with oblivious primitives on \textit{ES}s, it also causes compatibility issues, since we do not expect \textit{ISP partner}s to equip their legacy devices with TEE (\textbf{DR}).

\begin{figure}[!t]
\centering
\includegraphics[width=\linewidth]{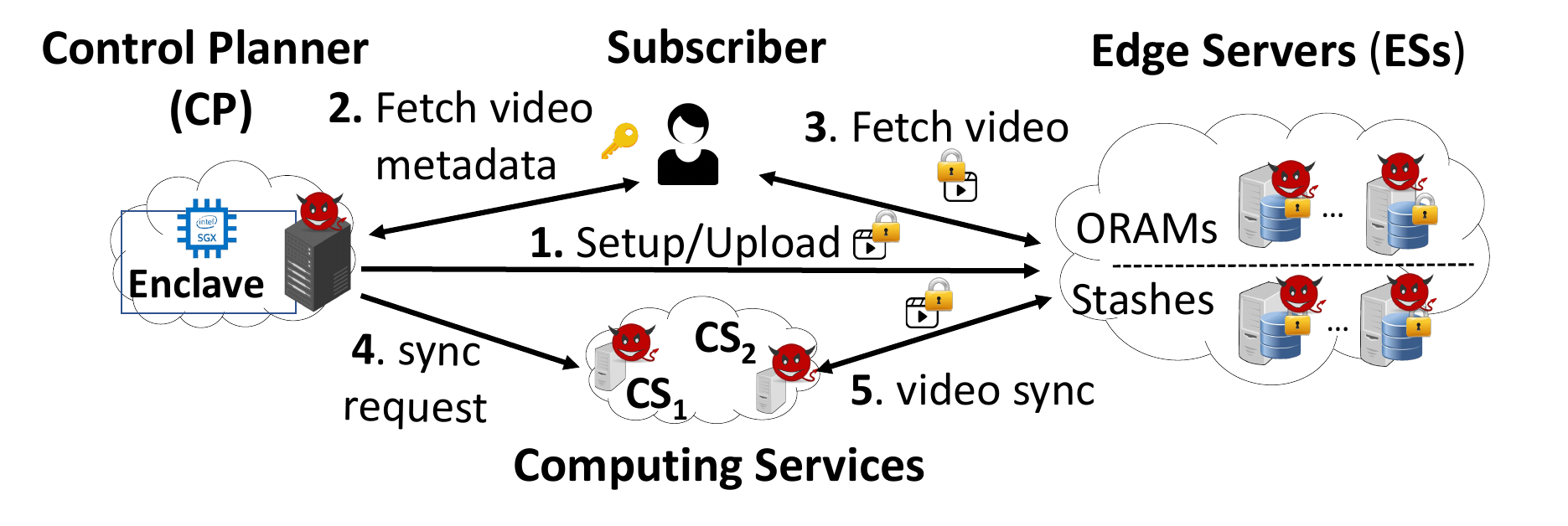}
\vspace{-15pt}
\caption{\textit{CP} with Distributed Trust}
\vspace{-10pt}
\label{fig:oblivcdn}
\end{figure}

\subsection{Attempt 3: \textit{CP} with Distributed Trust}\label{subsec:cp}
\noindent{\textbf{Overview.}} In Attempt 2, the core design is to delegate a part of the rORAM client to \textit{ES}s to alleviate the communication cost.
However, as we discussed above, this design violates the trust assumptions in \S\ref{subsec:threatmodel}.
To resolve this issue, our goal is to allow the party who manipulates video blocks to follow \textit{CP}'s instructions without knowing them, thus removing the trust requirement on that party.

As shown in \S\ref{sec:overview}, real-world \textit{CDN service provider}s have actively sought to deploy the CDN service with multiple cloud providers~\cite{AkamaiAlternative}.
In light of this observation, we consider integrating the distributed trust model~\cite{Dauterman20,Dauterman22,Eskandarian21} into CDN systems to make a secure split on \textit{CP} functionality.
Particularly, two semi-honest servers, denoted as computing service node \textit{CS}s ($\textit{CS}_1$ and $\textit{CS}_2$), are hired from non-colluding \textit{cloud service provider}s or \textit{ISP partner}s\footnote{The non-colluding assumption can be effectively achieved with local law and conflict of interests, especially for two providers in the same region~\cite{kamara2011outsourcing}.} that are close to \textit{Subscriber}s (see Figure~\ref{fig:oblivcdn}).
The \textit{CDN service provider} delegates \textit{CP}'s ORAM read/write operations into \textit{CS}s by instructing them to run oblivious video block routing protocols devised for CDN systems.
The new design preserves the benefits of separating video data/metadata.
Moreover, it ensures that, as far as the adversary can only compromise one \textit{CS}, he/she cannot learn the video access pattern. 
This security guarantee is valid even if the adversary is on the \textit{ISP partner} side, i.e., can compromise one \textit{CS} and \textit{ES}s.

\begin{figure}[!t]
\vspace{-2pt}
\centering
\includegraphics[width=1\linewidth]{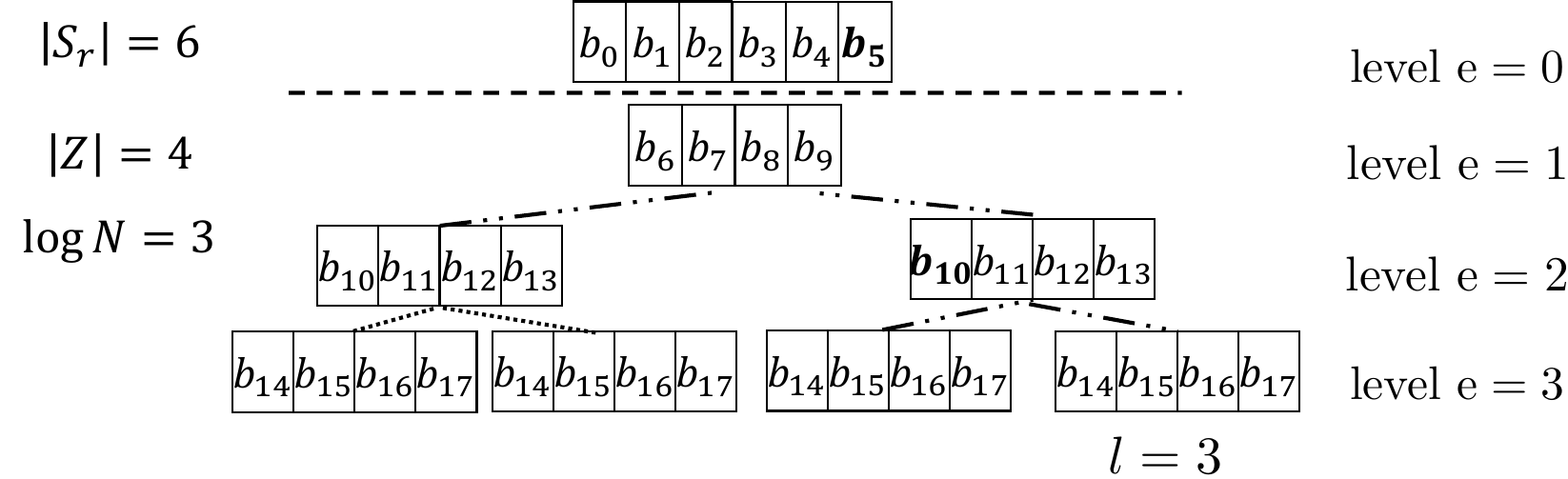}
\vspace{-15pt}
\caption{\textsf{OblivCDN} labels blocks with the index $b_s$ format, where $s$ signifies the block order. With the stash size $|S_r|=6$ and a 8-node ORAM tree with node size $Z=4$, $s\in[0,17]$ ($0$ to $|S_r|+\log_2N \times Z - 1$) will be assigned to blocks in the stash and each path in an ascending order.}
\vspace{-12pt}
\label{fig:index_oram}
\end{figure}

\noindent{\textbf{Video Block ID Allocation.}} To enable oblivious video block routing, our insight is that for a tree-based ORAM, it is always has a fixed number of blocks ($|S_r|+\log_2N \times Z$) in the stash and a given path while the required block is hidden in those blocks.
Therefore, if we assign a unified block ID to those blocks, we can locate the required block by a given path and ID.
In particular, as shown in Figure~\ref{fig:index_oram}, we denote each block in ORAM as $b_s$, where $s$ signifies block order in the path, which is determined as follows: 
\begin{enumerate}[leftmargin=*]
	\item The stash occupies the $s\in[0,|S_r|)$.
	\item For each path, the block will be assigned with $s$ in ascending order from $|S_r|$ to $|S_r|+\log_2N \times Z - 1$.
\end{enumerate}
We can easily target a specific block with its path and the newly-added ID, i.e., if $s<|S_r|$, we can find it in the $s$ position at the stash.
Otherwise, we can use $s$ to compute its level $e$ via $[\left(s-|S_r|\right)/Z] +1$ and offset $o$ in the level by $\left(s - |S_r|\right)~\%~Z$.
For instance, $\mathbf{b_{5}}$ is at the $5^{th}$ position in the stash, and $\mathbf{b_{10}}$ in path $l=3$ is at the $0$ position of the second level at path $l=3$ according to the setting in Figure~\ref{fig:index_oram}.

We then adapt the $\textsf{BlockStateMap}$ to store the map between a block identifier $bid$ and the block \textsf{state} $(c,l,s)$, which enables the \textit{CP} to compute the block location in path $l$ with the block \textsf{state}.

\noindent{\textbf{Oblivious Block Routing.}} The unified block ID in the path enables the \textit{CP} to generate instructions for two \textit{CS}s to jointly function as a rORAM client.
In particular, for $E_r.\textsf{ReadRange}$ operations, we consider it as a two-step process, i.e., 1) fetching $2^r$ blocks from $E_r$ to $S_r$ and 2) accessing $S_r$ for required blocks.

For the block fetching process, each targeted path can be represented as a vector with $\log_2N\times Z$ elements, while the \textit{CP} knows the exact location of targeted video blocks in that path.
Therefore, we can convert each block fetching process to a PIR process that aims to retrieve one block from a fixed offset at the targeted path in $E_r$.
We implement the following \textsf{BlockRangeReturn} operation with DPF operations (see \S\ref{sec:preliminaries}) that enables efficient $\textsf{ReadRange}$ from $E_r$ under the \textit{CP} with distributed trust setting:

\begin{figure}
\vspace{-15pt}
\centering
\subfloat[BlockRangeRet]{
		\label{fig:pri_range_ret}
		\includegraphics[width=0.5\linewidth,]{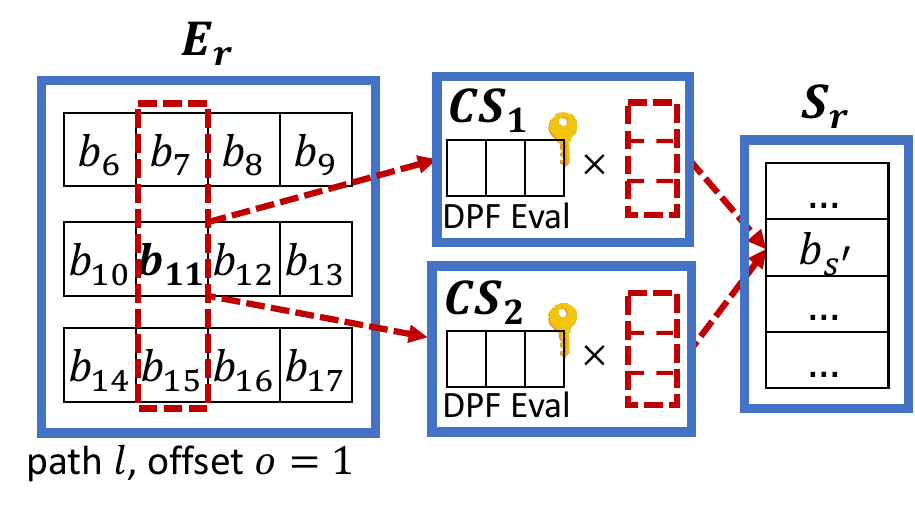}
	}
	\hspace{20pt}
	\subfloat[PerReFunc]{
		\label{fig:per_re_func}
		\includegraphics[width=0.3\linewidth]{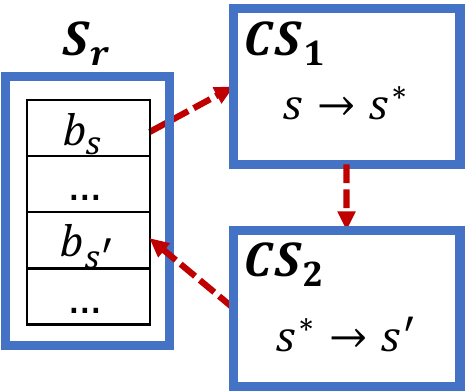}
	}
	\vspace{-5pt}
\caption{Examples on how to manipulate one block under our rORAM \textsf{ReadRange} design with distributed trust.}
\vspace{-15pt}
\label{fig:evalpath}
\end{figure}

\subsubsection{\textsf{BlockRangeReturn}}  
\label{subsubsec:BlockRangeRet}

\textsf{BlockRangeRet} obliviously routes video blocks from $E_r$ to $S_r$ without revealing which specific blocks are being routed to both the $E_r$ to $S_r$, as well as to each of $\textit{CS}_{p\in\{1,2\}}$.

We use an example of \textsf{BlockRangeRet} in Figure~\ref{fig:pri_range_ret} to demonstrate how it proceeds.
In the example, $b_{11} (s=11)$, which is encrypted with counter $c$ under the AES-CTR mode, is routed from $E_r$ to the new address $s^\prime$ in $S_r$ and re-encrypted with $c'$ as follows:

\noindent\textbf{On the \textit{Enclave}}:
\begin{enumerate}[leftmargin=*]
	\item Generates ($k_{1},k_{2}$) with $\textsf{DPF}.\textsf{Gen}(1^\lambda,11)$.
	\item Computes the re-encryption mask $m=G(c)\oplus G(c')$ with PRF $G(\cdot)$ and shares it as $m_1$ and $m_2$, where $m_1\oplus m_2=m$.
	\item Computes the offset $o=1$ with $Z=4$ and $\log N=3$.
	\item Gives the instruction $(k_{p}, m_p, l, o, s')$ to the query party (can be either the \textit{Enclave} in \textbf{Upload} or the \textit{Subscriber}s in \textbf{Fetch}), who then sends the instruction to $\textit{CS}_{p\in\{1,2\}}$.
\end{enumerate}
\noindent\textbf{On each $\textit{CS}_{p\in\{1,2\}}$}}:
\begin{enumerate}[leftmargin=*]
	\item Executes $y[1]\leftarrow\textsf{DPF}.\textsf{Eval}(k_{p},7)$, $y[2]\leftarrow\textsf{DPF}.\textsf{Eval}(k_{p},11)$, $y[3]\leftarrow\textsf{DPF}.\textsf{Eval}(k_{p},15)$ (i.e., $y[e]\leftarrow\textsf{DPF}.\textsf{Eval}(k_{p},|S_r|+(e-1)\times Z+o), e\in[1,\log N]$) for all block IDs in the offset.
	\item Retrieves corresponding encrypted blocks ($b_7, b_{11}, b_{15}$) from the path $l$ on $E_r$ into vector $b$.
	\item Computes shared block $b_{11}$ with $b_p\leftarrow\oplus_{e=1}^{\log N}(y[e]\times b[e])$.
	\item Re-encrypts the share with shared key $b_p\leftarrow b_p\oplus m_p$.
	\item Jointly writes $b_1\oplus b_2=b_{11}\oplus G(c)\oplus G(c')$ into the $s'$ position on $S_r$, so the original mask $G(c)$ on $b_{11}$ will be canceled and replaced with $ G(c')$.
\end{enumerate}
\textsf{BlockRangeRet} efficiently protects the targeted block against adversaries who can compromise one $\textit{CS}$ and $\textit{ES}$s at most since all blocks in the same offset of the path are accessed (no access pattern leakage).
Moreover, the final block will be re-encrypted before loading into $S_r$, so the adversary on $\textit{ES}$s cannot correlate the blocks from $E_r$ to that on $S_r$.
We note that \textsf{BlockRangeRet} can execute in a batch setting of $2^r$ video blocks.
This setting optimises disk seeks on $E_r$ and saves communication costs between $E_r$ and $\textit{CS}_{p\in\{1,2\}}$.
Algorithm~\ref{alg:BlockRangeRet} in Appendix~\ref{sec:oblivious} presents \textsf{BlockRangeRet} in detail.

For the stash access process, we should stress that the major difference between $\textsf{ReadRange}$ in the original rORAM and our design is that the stash is also placed in an untrusted party where the access pattern should also be suppressed. 
Therefore, we design one extra primitive to \textsf{Permute and Re-encryption} blocks inside $S_r$ after \textit{CP}/\textit{Subscriber}s accessing $S_r$ for specific blocks.

\subsubsection{\textsf{Permute Re-encryption Function}} 

\textsf{PerReFunc} permutes and re-encrypts all video blocks in $S_r$ following the meta structure $\textsf{MetaStash}_r$ in the \textit{Enclave} of the \textit{CP}.

Since all blocks in $S_r$ will be permuted and re-encrypted in the same way, we leverage an example from Figure~\ref{fig:per_re_func} to demonstrate the state of $S_r$ before and after the execution for one block.
In this example, the block was encrypted with counter $c$ and will be relocated from address $s$ to $s^\prime$ and re-encrypted with the counter $c'$: 

\noindent\textbf{On the \textit{Enclave}}:
	\begin{enumerate}[leftmargin=*]
		\item Selects a permutation with intermediate location, $\omega_1=s\mapsto s^*$ and $\omega_2=s^*\mapsto s'$.
		\item Computes the re-encryption mask $m=G(c)\oplus G(c')$ with PRF $G(\cdot)$ and shares it as $m_1$ and $m_2$, where $m_1\oplus m_2=m$.
	\item Send $(\omega_p,m_p)$ to $\textit{CS}_{p\in\{1,2\}}$.
	\end{enumerate}
\noindent\textbf{On $\textit{CS}_1$}:
	\begin{enumerate}[leftmargin=*]
		\item Retrieves encrypted block $b_s$ from $S_r$.
		\item Re-encrypts $b_s$ with shared key $b_s\leftarrow b_s\oplus m_1$.
		\item Uses $\omega_1$ to permute $b_s$ to $b_{s^*}$ and sends to $\textit{CS}_2$.
	\end{enumerate}
\noindent\textbf{On $\textit{CS}_2$}:
	\begin{enumerate}[leftmargin=*]
		\item Re-encrypts $b_{s^*}$ with shared key $b_{s^*}\leftarrow b_{s^*}\oplus m_2$.
		\item Uses $\omega_2$ to permute $b_{s^*}$ to $b_{s'}$ and sends to $S_r$.
	\end{enumerate}
%
After the execution, neither $\textit{ES}$s nor $\textit{CS}_{p\in\{1,2\}}$ can distinguish which blocks have been relocated to which locations. 
This ensures that the access pattern on $S_r$ is hidden after \textit{CP}/\textit{Subscriber}s' access from $S_r$.
%
Algorithm~\ref{alg:PerReFunc} in Appendix~\ref{sec:oblivious} presents \textsf{PerReFunc} in detail. 


Similarly, we also decompose the $E_r.\textsf{BatchEvict}$ operation into two steps: 1) fetch all blocks from $E_r$'s targeted paths to $S_r$; 2) evict blocks from $S_r$ to $E_r$ following the new block state.

The block fetching process for the eviction is simpler because it inherently requires to access all blocks in each targeted path.
Hence, we can consider it as the \textsf{PerReFunc} function between $E_r$ and $S_r$, which can be implemented with the following protocol:

\begin{figure}
\vspace{-7pt}
\centering
\subfloat[PathRangeRet]{
		\label{fig:path_range_ret}
		\includegraphics[width=0.43\linewidth,]{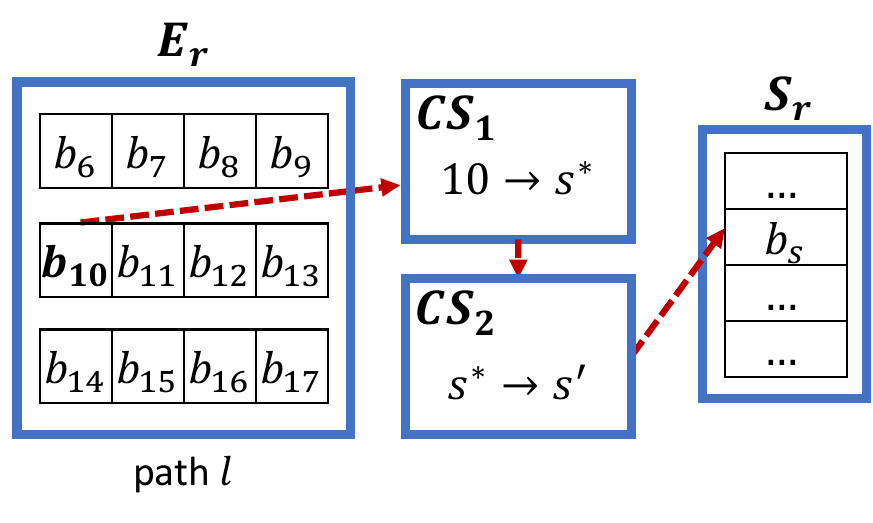}
	}
	\hfill
	\subfloat[PriRangeEvict]{
		\label{fig:pri_range_evict}
		\includegraphics[width=0.5\linewidth,]{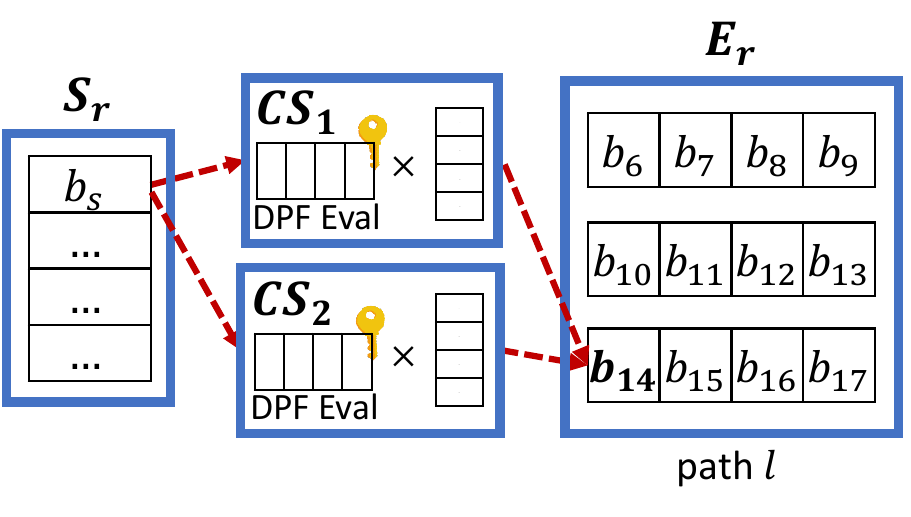}
	}
\vspace{-5pt}
\caption{Examples on how to manipulate one block under our rORAM \textsf{BatchEvict} design with distributed trust.}
\vspace{-10pt}
\label{fig:evalevict}
\end{figure}


\vspace{-5pt}
\subsubsection{\textsf{PathRangeReturn}} 
\textsf{PathRangeRet} privately routes blocks from $E_r$ to $S_r$ without revealing which specific blocks from $E_r$ are being routed to which locations in $S_r$.
During the routing, neither of $\textit{CS}_{p\in\{1,2\}}$ knows the origin or destination of the blocks.

An example of \textsf{PathRangeRet} is illustrated in Figure~\ref{fig:path_range_ret}. 
In this example, block $b_{10}$ is re-encrypted and written to $s$ in $S_r$:

\noindent\textbf{On the \textit{Enclave}}:
	\begin{enumerate}[leftmargin=*]
		\item Selects a permutation with intermediate location, $\omega_1=10\mapsto s^*$ and $\omega_2=s^*\mapsto s$.
		\item Computes the re-encryption mask $m=G(c)\oplus G(c')$ with PRF $G(\cdot)$ and shares it as $m_1$ and $m_2$, where $m_1\oplus m_2=m$.
	\item Sends $(\omega_p,m_p)$ to $\textit{CS}_{p\in\{1,2\}}$.
	\end{enumerate}
\noindent\textbf{On $\textit{CS}_1$}:
	\begin{enumerate}[leftmargin=*]
		\item Retrieves encrypted block $b_{10}$ from $E_r$.
		\item Re-encrypts $b_{10}$ with shared key $b_{10}\leftarrow b_{10}\oplus m_1$.
		\item Uses $\omega_1$ to permute $b_{10}$ to $b_{s^*}$ and sends to $\textit{CS}_2$.
	\end{enumerate}
\noindent\textbf{On $\textit{CS}_2$}:
	\begin{enumerate}[leftmargin=*]
		\item Re-encrypts $b_{s^*}$ with shared key $b_{s^*}\leftarrow b_{s^*}\oplus m_2$.
		\item Uses $\omega_2$ to permute $b_{s^*}$ to $b_{s}$ and sends to $S_r$.
	\end{enumerate}

%
%
\textsf{PathRangeRet} can also be executed in a batch setting, processing all blocks in $2^r$ ORAM paths with the same process.
Algorithm~\ref{alg:PathRangeRet} in Appendix~\ref{sec:oblivious} presents \textsf{PathRangeRet} in detail.

We then can leverage a reverse process of \textsf{BlockRangeRet} to route the blocks in $S_r$ back to $E_r$ with a similar PIR process with DPF evaluations as in the following protocol:
%

\subsubsection{\textsf{PrivateRangeEviction}}

\textsf{PriRangeEvict} routes video blocks from $S_r$ to $E_r$ without revealing which specific blocks are written to which addresses in $E_r$.
During the routing, both the origin and destination of the blocks remain hidden from each of $\textit{CS}_{p\in\{1,2\}}$.

Figure~\ref{fig:pri_range_evict} shows an example of routing the encrypted block $b_s$ to a specific address $b_{14} (s=14)$ in $E_r$, which proceeds as follows:

\begin{figure*}[!t]
\vspace{-7pt}
\begin{minipage}{\linewidth}
\begin{algorithm}[H]
	\footnotesize
	\raggedright
	\caption{OblivCDN System Operations}
	\label{oblivcdn}
	\underline{\textsc{Setup}}\\[2pt]
	\textbf{Input:} The addresses of \textit{ES}s and \textit{CS}s; The number of nodes $N$, The number of blocks for each node $Z$, and The block size $B$ for each rORAM tree; The maximum range $R$ supported by rORAM; The stash size $|S_r|, r\in[0, log_2 R]$ for each stash.\\[2pt]
	1: The \textit{CDN service provider} runs remote attestation with the \textit{CP} to attest the \textit{Enclave} code.\\[2pt]
	2: The \textit{Enclave} initialises all metadata structures. The \textsf{VideoMap} and \textsf{BlockStateMap} are initialised as Path ORAM with $N\times Z\times B$ space. The $\log_2 R + 1$ $\textsf{BlockTracker}_r$ with $N\times Z$ blocks for $bid$s is set for corresponding rORAM tree, and $|S_r|$ space for $bid$s is allocated for $\log_2 R + 1$ $\textsf{MetaStash}_r$.\\[2pt]
	3: The \textit{Enclave} establishes connections to all \textit{ES}s and initialise $\log_2 R + 1$ rORAM storage $E_r$ with $N\times Z\times B$ space and stash $S_r$ with $|S_r|$ space on \textit{ES}s.\\[2pt]
	4: The \textit{Enclave} establishes connections to all \textit{CS}s.\\[5pt]
	\underline{\textsc{Upload}}\\[2pt]
	\textbf{Input:} A video and its $vid$\\[2pt]
	1. The \textit{CDN service provider} uploads the video to the \textit{Enclave} through the secure channel established during attestation. \\[2pt]
	2. \textbf{On the \textit{Enclave}}:
	\begin{itemize}
		\item Trunks the video into $V$ blocks with size $B$.
		\item Groups those blocks into $R_V$ ranges, each range starts with $bid_i$ and contains $2^{r_i}$ consecutive $bid$s (i.e., $bid_i$, $bid_{i}+1$, ..., $bid_{i}+2^{r_i}-1$).
		\item For each range, initialises $\textsf{BlockStateMap}[\textsf{bid}_a]=(c_a,l_a,s_a)$ for $a\in[i,i+2^{r_i})$, where $c_a \leftarrow 0$, $l_i\xleftarrow{\$}[0,\frac{N}{2})$, $l_a\leftarrow (l_{a-1}+1)\mod \frac{N}{2}$, and $s_a\xleftarrow{\$}[0,|S_{r_i}|)$, ensuring that if $a\neq a'$, $s_a\neq s_{a'}$.
		\item For each range, stores $bid_a$ for $a\in[i,i+2^{r_i})$ into $\textsf{MetaStash}_{r_i}[s_a]$.
		\item For each range, follows $\textsf{BlockStateMap}[\textsf{bid}_a]$ to encrypt the block $b_a$ for $a\in[i,i+2^{r_i})$ and write into $S_{r_i}$.
		\item Stores the $\{(bid_i,r_i)\}, i\in[0, R_V)$ list into $\textsf{VideoMap}[vid]$.
		\item Invoke the \textbf{Sync} operation.\\[5pt]
	\end{itemize}
	\underline{\textsc{Fetch}}\\[2pt]
	\textbf{Input:} A video id $vid$\\[2pt]
	1. The \textit{Subscriber} sends the video id $vid$ to the \textit{Enclave}. \\[2pt]
	2. \textbf{On the \textit{Enclave}}:
	\begin{itemize}
		\item Reads $\textsf{VideoMap}[vid]$ to obtain $\{(bid_i,r_i)\}, i\in[0, R_V)$ and sends to the \textit{Subscriber}.
		\item For each range, reads $\textsf{BlockStateMap}[\textsf{bid}_a]=(c_a,l_a,s_a)$ for $a\in[i,i+2^{r_i})$ and sends to the \textit{Subscriber}.
	\end{itemize}
	3. \textbf{On the \textit{Subscriber}'s Client}:
	\begin{itemize}
		\item For each range, refers to $\textsf{BlockStateMap}[\textsf{bid}_a]=(c_a,l_a,s_a)$ for $a\in[i,i+2^{r_i})$ to compute the location of $b_a$, i.e., if $s_a<|S_{r_i}|$, then $b_a$ is on the position $s_a$ of $S_r$ ($l_a=0,e_a=0,o_a=s_a$); otherwise, $b_a$ is at the level $e_a=[\left(s_a-|S_{r_i}|\right)/Z] +1$ of the path $l_a$ of $E_r$, and its offset $o_a$ in the level is $\left(s - |S_{r_i}|\right)~\%~Z$.
		\item For each range, directly downloads $b_a$ with address $(E_{r_i}, S_{r_i}, l_a, e_a, o_a)$ for $a\in[i,i+2^{r_i})$ and decrypts $b_a$ with $c_a$.
		\item Notifies \textit{ES}s once download finished.
	\end{itemize}
	4. \textbf{On the \textit{Enclave} (Run simultaneously with Step 3)}:
	\begin{itemize}
		\item Refers to $\{(bid_i,r_i)\}, i\in[0, R_V)$ to run $\textsf{MetaData}_{r_i}.\textsf{ReadRange}(bid_i)$.
		\item For each $bid$ loaded into $\textsf{MetaStash}_{r_i}$, refers to $\textsf{BlockStateMap}[\textsf{bid}]=(c,l,s)$ to run \textsf{BlockRangeRet} between $E_{r_i}$ and $S_{r_i}$, and update $(c,l,s)$.
		\item For each range, permutes the $\textsf{MetaStash}_{r_i}$.
		\item Follow the permutation to instructs \textit{CS}s to run \textsf{PerReFunc} on $S_{r_i}$ and update $(c,s)$ for each $bid$ in $\textsf{MetaStash}_{r_i}$.
		\item Invoke the \textbf{Sync} operation.
	\end{itemize}
	4. \textbf{On the \textit{ES}s}:
	\begin{itemize}
		\item Replace $E_r$ and $S_r$ with shuffled version after sufficient number of \textit{Subscriber}s finishes download or same content is requested.\\[5pt]
	\end{itemize}
	\underline{\textsc{Sync}}\\[2pt]
	1. \textbf{On the \textit{Enclave}}:
	\begin{itemize}
		\item For each range $r\in[0,\log_2 R]$, check whether $2^r$ paths have been accessed on $\textsf{BlockTracker}_{r}$, skip the following steps if the condition is not met.
		\item Runs $\textsf{MetaData}_r.\textsf{BatchEvict}(2^r)$, which firstly loads all $bid$s in $2^r$ accessed paths into $\textsf{MetaStash}_{r}$.
		\item For each $bid$ loaded into $\textsf{MetaStash}_{r}$, refers to $\textsf{BlockStateMap}[\textsf{bid}]=(c,l,s)$ to run \textsf{PathRangeRet} between $E_{r}$ and $S_{r}$, and update $(c,s)$.
		\item Refers to $\textsf{BlockStateMap}[\textsf{bid}]=(c,l,s)$ to continually run $\textsf{MetaData}_r.\textsf{BatchEvict}(2^r)$, which writes $2^r$ accessed path of $\textsf{BlockTracker}_{r}$ with $bid$s from $\textsf{MetaStash}_{r}$.
		\item Follow the eviction process to instructs \textit{CS}s to run \textsf{PriRangeEvict} to evict blocks from $S_r$ to $E_r$ and update $(c,s)$ for each valid $bid$ evicted from $\textsf{MetaStash}_r$.
	\end{itemize}
\end{algorithm}
\end{minipage}
\vspace{-10pt}
\end{figure*}

\noindent\textbf{On the \textit{Enclave}}:
	\begin{enumerate}[leftmargin=*]
	\item Generates ($k_{1},k_{2}$) with $\textsf{DPF}.\textsf{Gen}(1^\lambda,s)$.
	\item Computes the re-encryption mask $m=G(c)\oplus G(c')$ with PRF $G(\cdot)$ and shares it as $m_1$ and $m_2$, where $m_1\oplus m_2=m$.
	\item Sends $(k_{p}, m_p, l, s=14)$ to $\textit{CS}_{p\in\{1,2\}}$.
	\end{enumerate}
\noindent\textbf{On each $\textit{CS}_{p\in\{1,2\}}$}:
	\begin{enumerate}[leftmargin=*]
	\item Executes $y[i]\leftarrow\textsf{DPF}.\textsf{Eval}(k_{p},i)$ for all block ID $s\in[0,|S_r|)$ in the stash.
	\item Retrieves the entire stash from $S_r$ into vector $b$.
	\item Computes shared block $b_{s}$ with $b_p\leftarrow\oplus_{i=0}^{|S_r|-1}(y[i]\times b[i])$.
	\item Re-encrypts the share with shared key $b_p\leftarrow b_p\oplus m_p$.
	\item Jointly writes $b_1\oplus b_2=b_{s}\oplus G(c)\oplus G(c')$ into the $b_{14}$ position on the path $l$ of $E_r$, so the original mask $G(c)$ on $b_{s}$ will be canceled and replaced with $ G(c')$.
	\end{enumerate}
With $E_r.\textsf{BatchEvict}$, all blocks in the targeted $2^r$ paths will be rewritten in batch to ensure the obliviousness of evictions.
Algorithm~\ref{alg:PriRangeEvict} in Appendix~\ref{sec:oblivious} presents \textsf{PriRangeEvict} in detail.

\noindent \textbf{Remarks}. In the above design, we consider that the \textit{CP}/\textit{Subscriber}s need to wait for \textit{CS}s to route blocks from $E_r$ to $S_r$ before accessing those blocks, and thus incurs an extra delay for the process.
Nonetheless, we realise that the CDN service is a typical ``read-only'' service whose content will never be changed but only shuffled after the initial upload to the cloud. 
This means we can follow~\cite{tople2019pro} to perform reads directly on the given address and use another background thread to obliviously route the blocks into a new memory space.
By the end of each range access, we replace the current rORAM tree $E_r$ with the new one and thus waive the waiting time for each access.
Such an approach can be extended to support multi-users by combining distinct video access into one batch. 
%

At the end of this section, we present the full system operations of \textsf{OblivCDN} in Algorithm~~\ref{oblivcdn}.
Note that all vector accesses (i.e., access stashes for \textsf{VideoMap} and \textsf{BlockStateMap}) inside the \textit{Enclave} are done by the oblivious operations in Appendix~\ref{subsec:oblivEnclave}.

\subsection{Theoretical Complexity}
\label{sec:analysis}

\label{subsec:analysis_performance}

\textbf{Intercontinental Communication Cost}. 
The intercontinental Communication costs in \textsf{OblivCDN} can be divided into 1) the cost of uploading videos, and 2) the cost of sending oblivious block routing instructions to \textit{CS}s.
As in existing CDN systems~\cite{NetflixAWS,NetflixOCA}, uploading a video is a one-time cost and linear to the video size. 
Moreover, the cost of sending instructions can be depicted as:
\begin{itemize}[leftmargin=*]    
    \item In \textsf{BlockRangeRet}, communication with each $\textit{CS}_{p \in \{1,2\}}$ is dominant by the DPF key size $\log_2 (\log_2 N \times Z + |S_r|)\times\lambda$ bits ($\mathcal{O}(r)$). 
    \item In \textsf{PerReFunc}, communication with each $\textit{CS}_{p \in \{1,2\}}$ is bounded by $|S_r|= 2^r \times v$ ($\mathcal{O}(2^r)$), where $v$ is the fixed-size stash factor~\cite{Chakraborti191}. 
    
    \item In \textsf{PathRangeRet}, communication with each $\textit{CS}_{p \in \{1,2\}}$ is bounded by the number of blocks in a path, which is $\log_2 N \times Z$ ($\mathcal{O}(\log_2 N)$).

    \item In \textsf{PriRangeEvict}, communication with each $\textit{CS}_{p \in \{1,2\}}$ is dominant by the DPF key size $\log_2 (\log_2 N \times Z + |S_r|)\times\lambda$ bits ($\mathcal{O}(r)$). 
\end{itemize}

As a result, the entire intercontinental communication cost (considering each rORAM operation manipulates $2^r$ paths/blocks in batch) for instructing rORAM operations is $\mathcal{O}(r\times2^r)$ (for \textsf{ReadRange}) and $\mathcal{O}(r\times2^r\times\log_2N)$ (for \textsf{BatchEvict}) regardless of the block size/video size.
In the real-world CDN, the video size is always much larger than the instruction size because a smaller $r$ (e.g., $r=8$) is already enough for most cases.
Hence, compared to loading the video back to the \textit{CP} in the Strawman design, \textsf{OblivCDN} significantly reduces the intercontinental communication cost without compromising security guarantees.
We will demonstrate the benefit of our design with real-world video access scenarios in \S\ref{sec:experiment}.

\noindent\textbf{Local Region Communication Cost}. 
In the local region, the communication costs can be divided into 1) the cost of fetching videos, and 2) the cost of routing video blocks between \textit{ES}s and \textit{CS}s.
Since we follow the read-only ORAM design~\cite{tople2019pro} to provide the video service, the communication cost for \textit{Subscriber}s' fetches remains unaltered compared to existing CDN systems~\cite{NetflixAWS,NetflixOCA}.

Regarding the communications between \textit{ES}s and \textit{CS}s, in the \textbf{Fetch} operation, $\textit{CS}_{p \in \{1,2\}}$ firstly acquires $2^r\times\log_2N\times B$ bytes from \textit{ES}s and writes $2^r\times B$ bytes back in \textsf{BlockRangeRet}, and then reads/writes $2^r\times v\times B$ bytes in \textsf{PerReFunc}.
In the \textbf{Sync} operation, $\textit{CS}_{p \in \{1,2\}}$ firstly reads/writes $2^r\times\log_2N\times Z\times B$ bytes with \textit{ES}s in \textsf{PathRangeRet}.
Then, it reads $2^r\times v\times B$ bytes from \textit{ES}s and tries to fill \textit{ES}s with $E_r$ by $2^r\times\log_2N\times Z\times B$ bytes via \textsf{PriRangeEvict}.

The above analysis shows that our design maintains a comparable bandwidth cost ($\mathcal{O}(2^r\times v)$) against the original rORAM($\mathcal{O}(2^r\times\log_2N)$)~\cite{Chakraborti191} while ensuring the obliviousness of all operations.

\noindent\textbf{Disk Seek on $\textit{ES}$s}. 
In both \textbf{Fetch} and \textbf{Sync} operations, each $E_r$ still takes $\mathcal{O}(\log_2N)$ seeks to fetch $2^r$ blocks and evict $2^r$ paths.

\noindent\textbf{Computation Cost}. 
The computation cost in the \textsf{OblivCDN} is minimised, as it only requires evaluating block cipher, permutation, and exclusive-or operations among all components.
To be specific, there are $2^r$ DPF keys and re-encryption masks to be generated and evaluated under \textsf{BlockRangeRet}, and $2^r\times v$ permutations and re-encryption masks for \textsf{PerReFunc} during \textsf{ReadRange}.
Also, $2^r\times \log_2N\times Z$ DPF keys/permutations and re-encryption masks are generated and used for \textsf{PathRangeRet} and \textsf{PriRangeEvict}.%

\subsection{Security Guarantees}
\label{subsec:analysis_security}
As mentioned in \S\ref{subsec:threatmodel}, \textsf{OblivCDN} follows real-world CDN systems~\cite{NetflixAWS,NetflixOCA} to consider four parties, i.e., the \textit{CDN service provider}, the \textit{cloud service providers}, the \textit{ISP partners}, and the \textit{Subscriber}.
\textsf{OblivCDN} guarantees that the content and the access pattern of all same-length videos will be indistinguishable against: the \textit{CDN service provider}s who host the \textit{CP} and one $\textit{CS}$, and the \textit{ISP partners} who host \textit{ES}s and potentially also control one $\textit{CS}$.
Due to the page limitation, we provide a detailed security analysis in Appendix~\ref{app:security_analysis}.

\section{Experiment and Evaluation}
\label{sec:experiment}
\subsection{Setup}
\noindent \textbf{Implementation}. 
We developed the \textit{CP}  and \textit{CS}s using C++, while the \textit{ES}s were implemented in Java.
The \textit{ES}s serve as  storage daemons for  memory/hard drive I/O access as requested.
To facilitate communication between these entities, we employed Apache Thrift.
For the implementation of the \textit{Enclave} within the \textit{CP} and DPF scheme, 
we utilised Intel SGX and Intrinsics, respectively.
For cryptographic primitives, such as PRF, PRG, and AES, we used the OpenSSL.
Overall, the prototype comprises $48,272$ LOC\footnote{\url{https://github.com/MonashCybersecurityLab/OblivCDN}}.
%
%

%

\noindent \textbf{Platform}. We used three devices to deploy \textsf{OblivCDN}. 
A server equipped with an Intel Xeon E2288G 3.70GHz CPU (8 cores) and 128GB RAM serves as the \textit{ES}s.
One SGX-enabled workstation featuring an Intel Core i7-8850H 2.60GHz CPU (6 cores) and 32GB RAM functions as the \textit{CP}. 
Another workstation with the same configuration is used to run the \textit{CS}s.
Note that, by design, the \textit{CS}s should be deployed to two non-colluding servers.
However, for evaluation purposes, we deployed them on the same workstation to simulate a scenario in which these two servers are in close proximity to the \textit{ES}s.
All servers are connected to a Gigabit network.

\begin{figure}[!t]
	\vspace{-7pt}
	\centering
	\subfloat[BlockRangeRet]{
		\label{fig:pri_range_ret_comp}
		\includegraphics[width=0.43\linewidth,height=2.6cm]{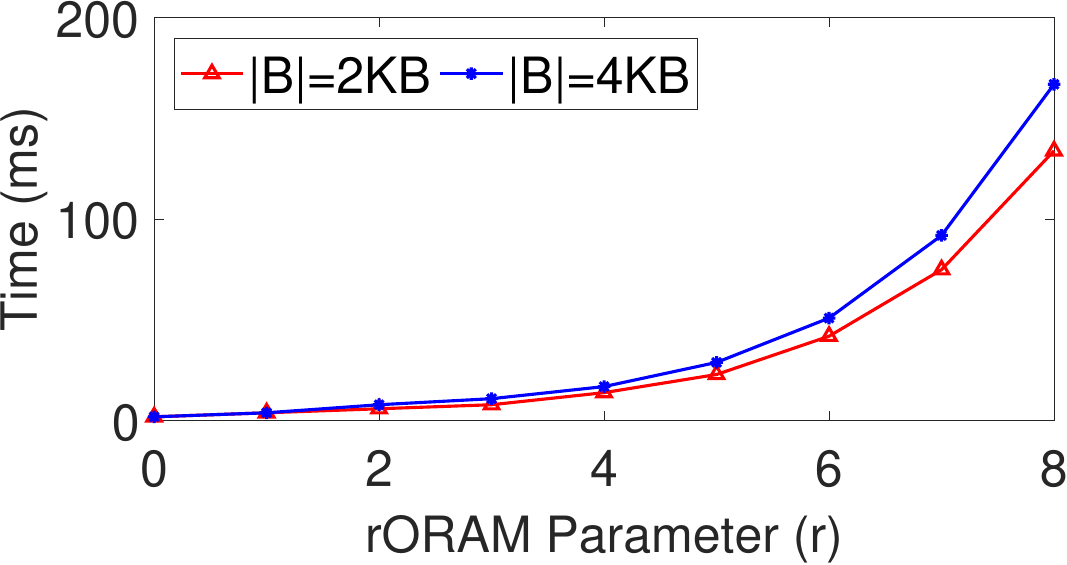}}
	\hfill
	\subfloat[PerReFunc]{
		\label{fig:per_re_func_comp}
		\includegraphics[width=0.43\linewidth,height=2.6cm]{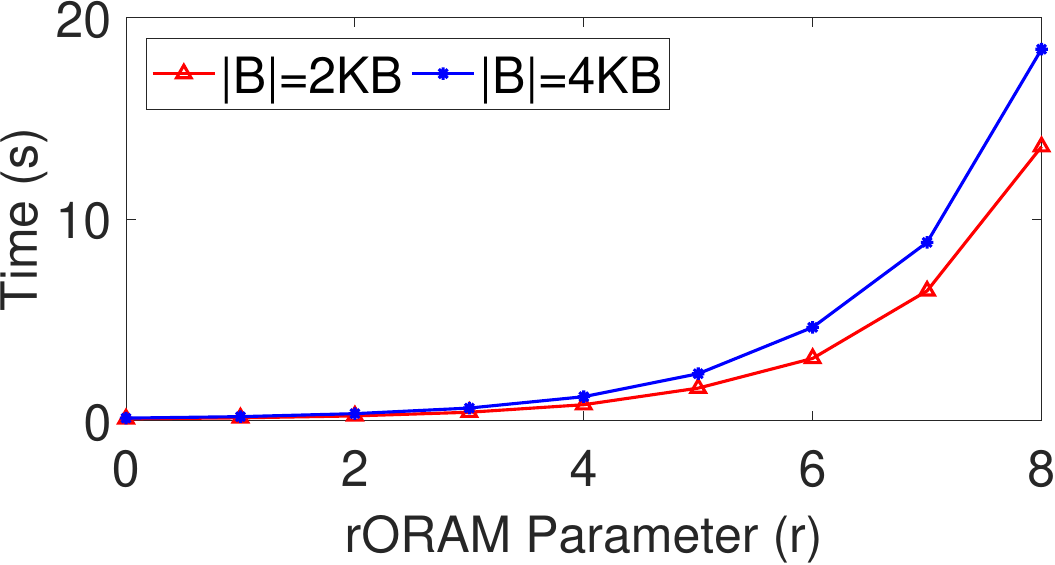}}
	\vspace{-5pt}
	\subfloat[PathRangeRet]{
		\label{fig:sec_range_ret_comp}
		\includegraphics[width=0.43\linewidth,height= 2.6cm]{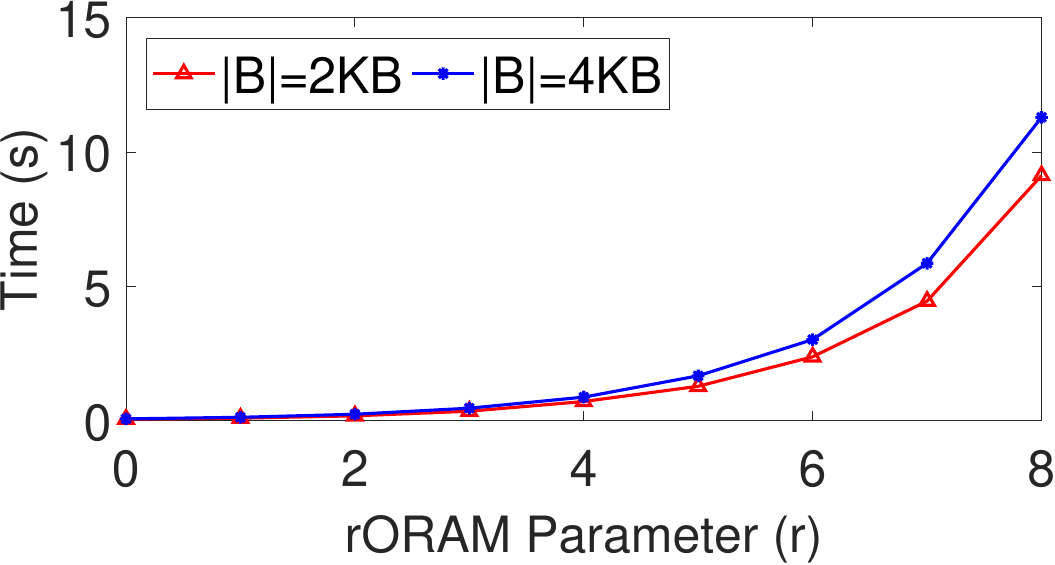}}
	\hfill
	\subfloat[PriRangeEvict]{
		\label{fig:pri_range_evict_comp}
		\includegraphics[width=0.43\linewidth,height=2.6cm]{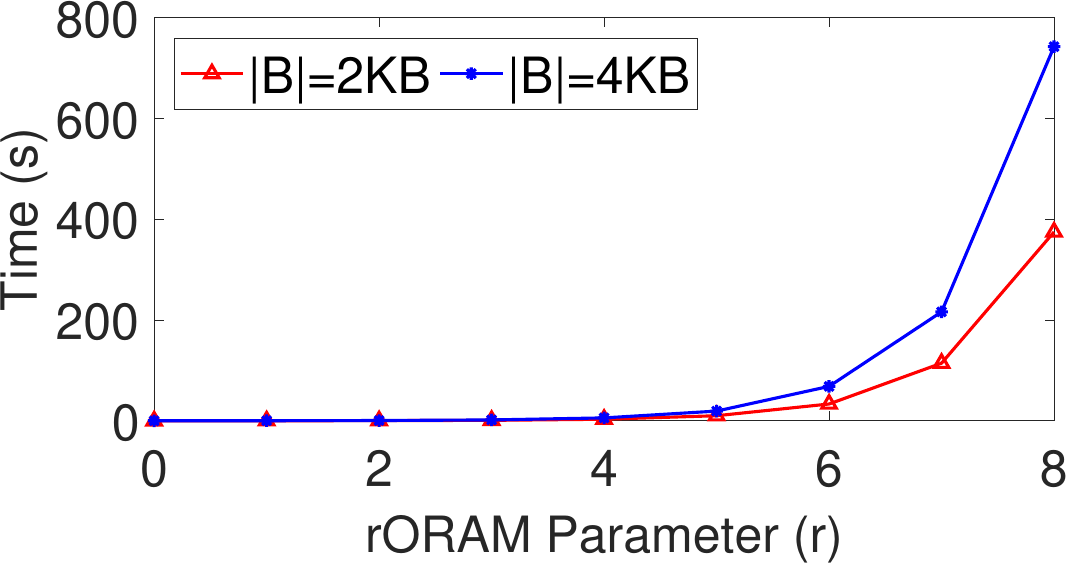}}
    \vspace{-5pt}
	\caption{Computational cost of oblivious operations.}
    \vspace{-10pt}
	\label{fig:primitive_comp}
\end{figure}

%

\subsection{Benchmark Oblivious Building Blocks}

%
%


\noindent \textbf{Benchmark Configuration}.
We first evaluate four oblivious building blocks in \textsf{OblivCDN} (\S\ref{subsec:cp}) with three metrics: computational, local I/O, and communication costs.
For this benchmark, we use $2^8$ 5-MB videos, which can be stored by an ORAM tree with 65536 leaf nodes with 5 blocks, i.e., $\log_2N=17$ and $Z=5$.
To enable range access, we deployed nine \textit{ES}s, with each \textit{ES} dedicated to supporting both an $E_r$ and $S_r$, where $r\in [0,8]$.
We adapted rORAM disk layout~\cite{Chakraborti191} to create SSD storage optimised for oblivious range access.
Consequently, each $E_r$ has $12$ GB and $23$ GB storage capacity when using a block size of $|B|=2$ and $4$ KB, respectively.
To achieve a negligible false positive rate of $2^{-128}$, we configured the fixed-size stash factor $v=105$, so the stash size $|S_r|=105 \times 2^r$ blocks~\cite{Chakraborti191}. 
In the \textit{CP}, the \textit{Enclave} only requires $21.05$ MB to store the metadata for $2^8$ videos (with around $2^{19}$ blocks to track).

\noindent \textbf{Overall Runtime Cost}. 
Figure~\ref{fig:primitive_comp} presents the overall cost of running four proposed oblivious building blocks.
%
%
As shown in the figure, most of our oblivious operations can be completed within $20$ s, except for \textsf{PriRangeEvict} (Figure~\ref{fig:pri_range_evict_comp}).
This is because \textsf{PriRangeEvict} involves evicting $2^r\times \log_2 N \times Z$ blocks.
Moreover, for each block eviction, both $\textit{CS}_{p\in \{1,2\}}$ need to scan the entire stash.
In contrast, the operations of the other building blocks work on fewer blocks, i.e., $2^r \times \log_2N$ blocks for \textsf{BlockRangeRet}, and have fewer operations on blocks, i.e., each scan can permute and re-encrypt $\log_2 N \times Z$ blocks in \textsf{PathRangeRet}
and $|S_r|$ blocks in \textsf{PerReFunc}.
We note that these building blocks are executed during the video upload or in the background process.
Our later evaluation in \S\ref{subsec:intercontinental} will show that they do not impact the \textit{Subscriber}'s experience when fetching the video from the edge.

\begin{figure}[!t]
	\centering
 	\vspace{-7pt}
	\subfloat[BlockRangeRet]{
		\label{fig:pri_range_ret_io}
		\includegraphics[width=0.43\linewidth,height= 2.6cm]{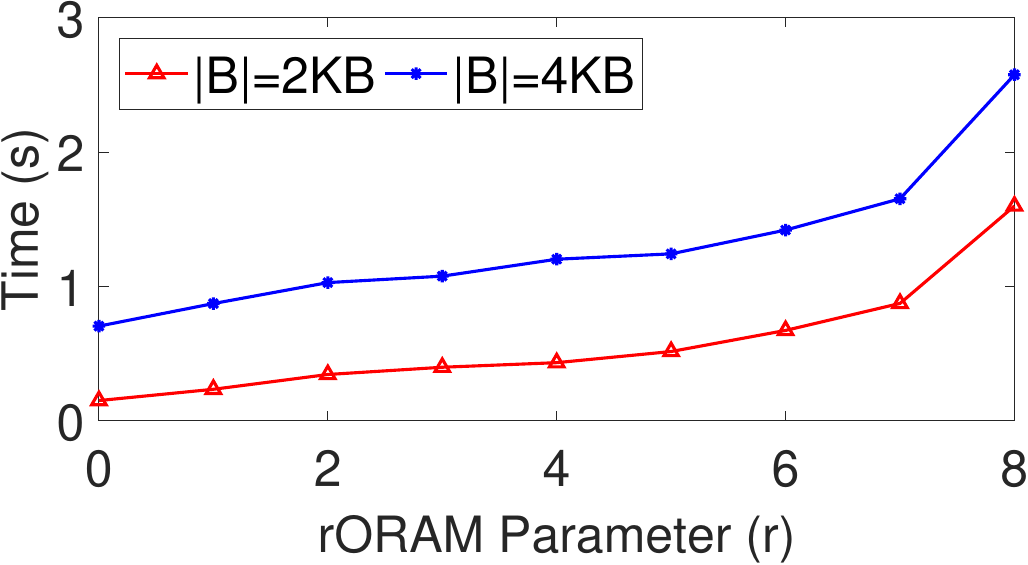}
	}
	\hfill
	\subfloat[PerReFunc]{
		\label{fig:per_re_func_io}
		\includegraphics[width=0.43\linewidth,height= 2.6cm]{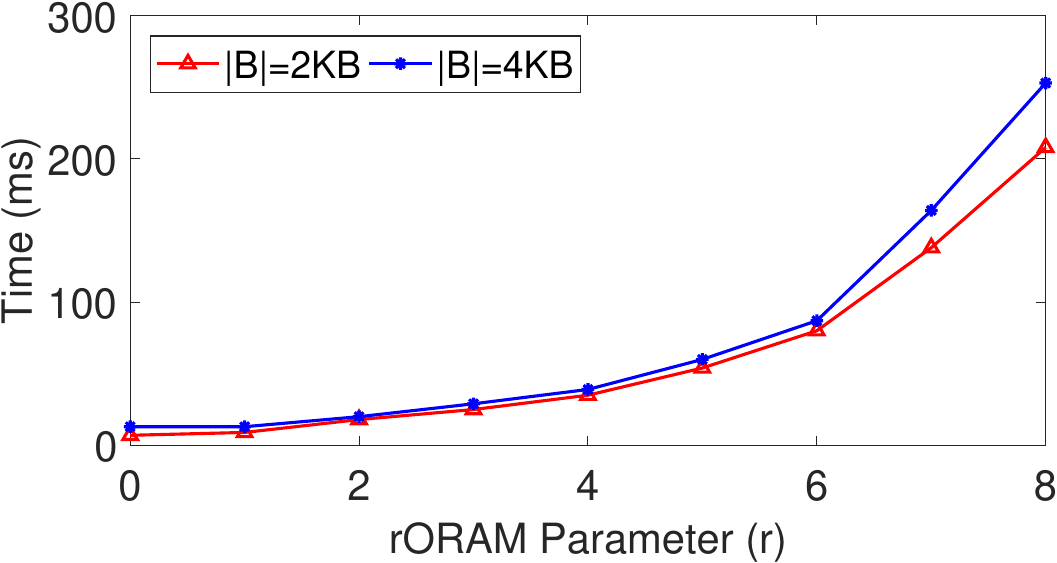}
	}
	\vspace{-5pt}
	\subfloat[PathRangeRet]{
		\label{fig:sec_range_ret_io}
		\includegraphics[width=0.43\linewidth,height= 2.6cm]{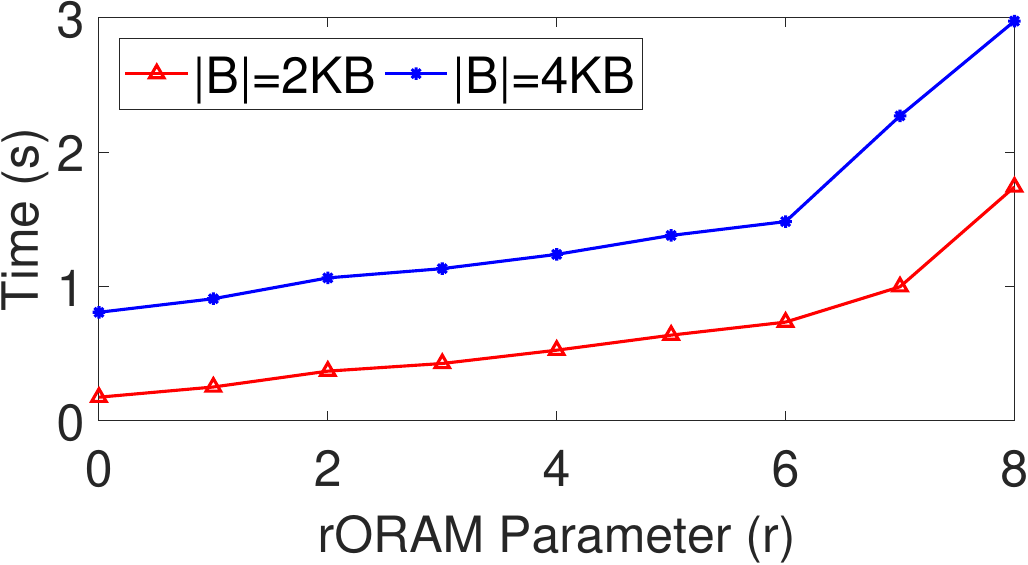}
	}
	\hfill
	\subfloat[PriRangeEvict]{
		\label{fig:pri_range_evict_io}
		\includegraphics[width=0.43\linewidth,height= 2.6cm]{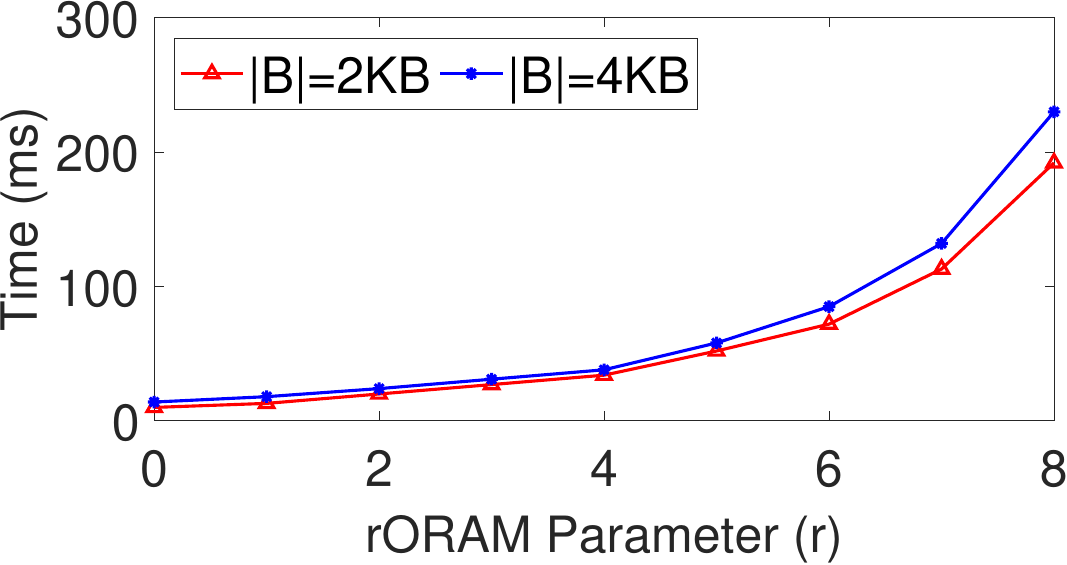}
	}
    \vspace{-5pt}
	\caption{I/O cost of oblivious operations in the \textit{ES}s.}
    \vspace{-10pt}
	\label{fig:primitive_io} 
\end{figure}

\noindent \textbf{Local I/O Cost}. 
We further dig into the I/O costs incurred by the $ES$s during these oblivious operations.
Figure~\ref{fig:primitive_io} shows that the I/O cost is almost negligible ($<3$ s) among the overall cost.
%

When looking into each operation, we can see that the disk seek is still the dominant cost for the I/O.
In specific, even \textsf{BlockRangeRet} read less blocks than \textsf{PathRangeRet}, i.e., $2^r \times \log_2N$ blocks v.s. $2^r \times \log_2N\times Z$ blocks, they incur the same I/O time because they still perform sequential seeks to all $2^r \times \log_2N\times Z$ rORAM tree nodes.
%
Due to the physical feature of the hard drive, the time for writing the block is faster than reading.
Hence, as shown in Figure~\ref{fig:pri_range_evict_io}, the cost of writing $2^r$ paths ($2^r \times \log_2N\times Z$ blocks) in \textsf{PriRangeEvict} is much faster than the reading overhead of the aforementioned two building blocks.
Finally, the cost of \textsf{PerReFunc} is the smallest among all building blocks (see Figure~\ref{fig:per_re_func_io}).
This is because \textsf{PerReFunc} only operates on $S_r$, which resides in volatile memory.

%

\noindent \textbf{Network Communication Cost}. 
Figure~\ref{fig:ovliv_op_comm} shows the network communication cost with the four proposed oblivious operations when $r=8$.
It is consistent with our theoretical analysis that the \textit{CP} incurs a constant and moderate ($<8$ MB) inter-continental communication cost regardless of the block size and video size as it only sends instructions to the \textit{CS}s.
%

Regarding the local region communication, we can observe a symmetric communication cost for \textsf{PerReFunc} and \textsf{PathRangeRet} since they leverage \textit{CS}s to retrieve blocks from rORAM/stashes and re-route them to the stashes.
For the other two operations (\textsf{BlockRangeRet} and \textsf{PriRangeEvict}), the traffic is re-shaped by DPF evaluations:
In \textsf{BlockRangeRet}, both \textit{CS}s receive $2^r \times \log_2N$ blocks from $E_r$ but filter out to $2^r$ blocks that \textit{CS}s intend to reroute from the ORAM tree to the stash.
Our result shows that \textit{CS}s only send $1$ MB (for $2$ KB blocks) or $2$ MB (for $4$ KB blocks) to the stash even though they fetch $16.1$ or $32.2$ MB from the ORAM tree.

Similarly, in \textsf{BlockRangeRet}, both \textit{CS}s receive blocks from the entire stash ($2^r\times v$ blocks) but only send a share of the targeted blocks ($2^r\times \log_2 N \times Z$ blocks) to be evicted to the ORAM tree.
Since $v$ is always larger than $\log_2 N \times Z$ to support batch fetch, this process retrieves $106$ (for $2$ KB blocks) and $211.4$ MB (for $4$ KB blocks) from the stash but only sends $50.3$ and $100.2$ MB to the ORAM tree.
We further detail the breakdown cost under different rORAM parameters ($r$) in Appendix~\ref{app_sec:buildingblock}. 
%

\subsection{Inter-continental Streaming Simulation}\label{subsec:intercontinental}

\textbf{Streaming Setting}. We replicate a real-world scenario where a \textit{Subscriber} in Australia requests a video from a \textit{CP} located in the U.S.
The service provider has deployed \textit{CS}s and \textit{ES}s in Australia according to \textsf{OblivCDN} design.
To simulate the intercontinental communication, we refer to real-world intercontinental delays~\cite{matt2020aws} between cloud regions to introduce a $400$ ms round-trip delay between \textit{CP}\&\textit{CS}s and \textit{CP}\&\textit{ES}s. 
Additionally, we put a $1$ ms round-trip delay between the \textit{Subscriber} and \textit{ES}s to simulate a tiny cost incurred in the local zone~\cite{matt2020aws}.
Under the network setting, we test the streaming performance of three types of videos as listed in Table~\ref{tlb:video_info} and compare \textsf{OblivCDN} with the Strawman design (\S\ref{subsec:strawman}) and OblivP2P~\cite{Jia16OblivP2P}\footnote{While OblivP2P\cite{Jia16OblivP2P} and Multi-CDN~\cite{Shujie20} are compatible with real-world CDN architecture, we only pick OblivP2P for comparison, because Multi-CDN uses a different setting that relies on the collaboration between multiple \textit{CDN provider}s for privacy.}.
Note that with a larger block size (256 KB and 512 KB), the rORAM with the same parameter, i.e., $\log_2N=17$, $Z=5$, and $r\in[0,8]$, can store 108 GB and 207 GB of CDN content in the \textit{ES}s, respectively.
Hence, our simulation can be a good illustration of the performance of streaming $180-345$ one-hour SD (720p) videos through the Disney+ app on a mobile device \textit{Subscriber}~\cite{Disney21}.

\begin{figure}[!t]
	\centering
	\vspace{-7pt}
	\subfloat[Inter-continental]{
		\label{fig:inter_comm}
		\includegraphics[width=0.45\linewidth,height= 2.7cm]{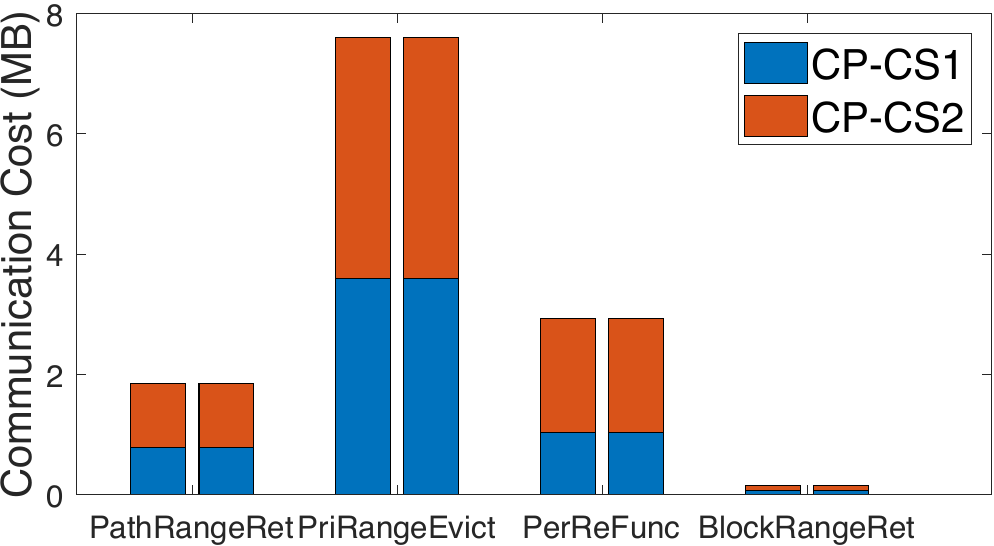}
	}
	\hfill
	\subfloat[Local region]{
		\label{fig:intra_comm}
		\includegraphics[width=0.45\linewidth,height= 2.7cm]{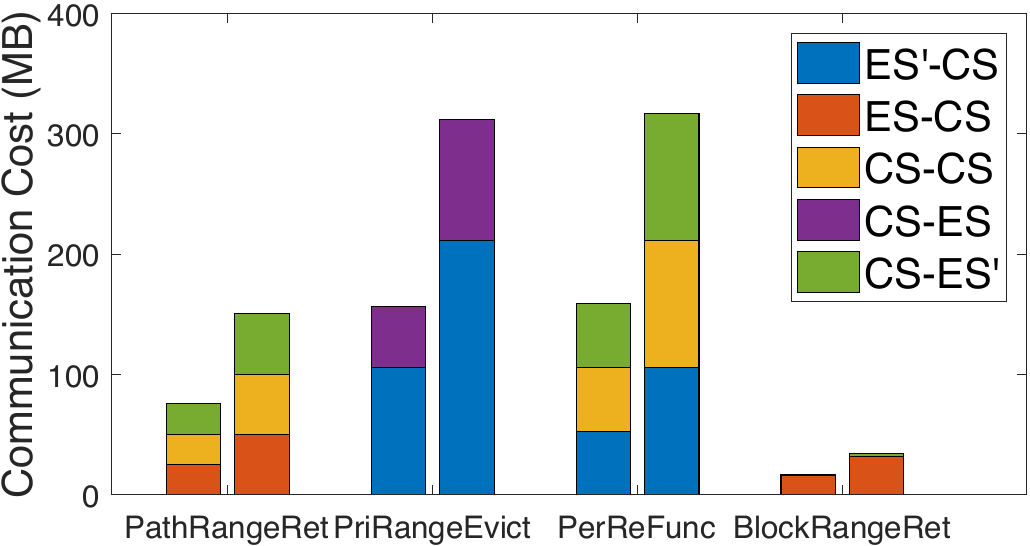}
	}
	\vspace{-5pt}
	\caption{The communication cost (MB) of oblivious operations when $r=8$. For each operation, the left bar is for $|B|=2$ KB, and the right one is for $|B|=4$ KB.}
	 \vspace{-5pt}
    \label{fig:ovliv_op_comm}
\end{figure}

\begin{table}[!t]
\small
	\centering
	\caption{Video specifications used in the evaluation.}
	\vspace{-5pt}
	\begin{tabular}{|c|c|c|c|}
	\hline
		\textbf{Video Size} & $2$ MB & $127$ MB & $256$ MB  \\
	\hline
		\textbf{Block Size} & $4$ KB & $256$ KB & $512$ KB  \\
	\hline
		\textbf{Length/Quality} & $30$ s/360p & $5$ mins/720p & $5$ mins/1080p  \\
	\hline
	\end{tabular}
 	\vspace{-5pt}
	\label{tlb:video_info}
\end{table}

\begin{table}[!t]
\small
	\centering
	\caption{The upload latency (s) of test videos under the Strawman and our approach, $(\times)$: improvement.}
	\vspace{-5pt}
	\begin{tabular}{|c|c|c|c|}
	\hline
		Size & $2$ MB & $127$ MB & $256$ MB  \\
	\hline
		Strawman & 3403 ($109\times$) & 8356 ($203\times$) & 13403 ($263\times$) \\
	\hline
		\textbf{\textsf{OblivCDN}} & \textbf{31} & \textbf{41} & \textbf{51} \\
	\hline
	\end{tabular}
	\vspace{-5pt}
	\label{tlb:video_upload_latency}
\end{table}

\noindent \textbf{Video Upload Cost}. 
We start with the upload cost comparison between the Strawman and \textsf{OblivCDN}.
Note that OblivP2P is a peer-to-peer communication system with no upload phase, so we omit the comparison with OblivP2P in this part.

During the video upload phase, the \textit{CP} should upload a video to the targeting \textit{ES}s.
In the Strawman, \textit{CP} conducts this task via Path ORAM operations, which involves heavy intercontinental data transfer.
As shown in Table~\ref{tlb:video_upload_latency}, the Strawman approach takes $\sim$1 hour to upload a $2$ MB video and almost 4 hours to upload a $256$ MB one.
\textsf{OblivCDN} significantly accelerates the process by offloading the ORAM operations to \textit{ES}s and \textit{CS}s.
The \textit{CP} only needs to upload all video blocks into $S_r$ and wait for the \textbf{Sync} operation to be done.
Our result shows that the upload latency for \textsf{OblivCDN} is at least $109\times$ faster than the Strawman, as it only requires $31$ s to upload the $2$ MB video and $51$ s for the $256$ MB one.

\begin{table}[!t]
\small
	\centering
	\caption{The upload communication cost (MB) of test videos under the Strawman and our approach, $(\times)$: improvement.}
	\vspace{-5pt}
	\begin{tabular}{|c|c|c|c|}
	\hline
		Size & $2$ MB & $127$ MB & $256$ MB  \\
	\hline
		Strawman & 159.6 ($7.2\times$) & 10220 ($69\times$) & 20440 ($74\times$) \\
	\hline
		\textbf{\textsf{OblivCDN}}  & \textbf{22} & \textbf{148}  &  \textbf{276}  \\
	\hline
	\end{tabular}
	\vspace{-10pt}
	\label{tlb:video_upload_comm}
\end{table}

Moreover, the Strawman design experiences communication blow-ups for large videos as it requires fetching the entire ORAM path to write one video block.
Table~\ref{tlb:video_upload_latency} shows that it requires $80\times$ more communication cost to upload the original video to the edge.
In contrast, \textsf{OblivCDN} leverages the batch read/evict strategy to save bandwidth.
The communication cost is reduced by $7.2\times$ at least, and for the $256$ MB video, it only sends $276$ MB data to the edge, which is $74\times$ smaller than that in the Strawman.

\begin{table}[!t]
\small
	\centering
	\caption{\textit{Subscriber}'s fetch latency (s), $(\times)$: improvement}
	\vspace{-5pt}
	\begin{tabular}{|c|c|c|c|c|c|c|}
 \hline
		\textbf{Video Size} & $2$ MB & $127$ MB & $256$ MB  \\
	\hline
		Strawman  & 457.9 ($89\times$)& 466 ($87\times$) & 504.4 ($90\times$)\\
  	\hline
		\textsf{OblivP2P}  & 16.1 ($3.1\times$)& 1016.9 ($191.8\times$)& 2050.1 ($366\times$)\\
    \hline
  	\textbf{\textsf{OblivCDN}} & \textbf{5.1} & \textbf{5.3} & \textbf{5.6} \\
	\hline
	\end{tabular}
	\vspace{-5pt}
	\label{tlb:video_subscriber_fetch}
\end{table}

\begin{table}[!t]
\small
	\centering
	\caption{The background latency (s) of fetch under the Strawman and our approach, $(\times)$: improvement.}
	\vspace{-5pt}
	\begin{tabular}{|c|c|c|c|c|c|c|}
	\hline
		\textbf{Video Size} & $2$ MB & $127$ MB & $256$ MB  \\
	\hline
		Strawman & 3403.3 ($82\times$) & 8355.9 ($201\times$) & 13403 ($322\times$) \\
	\hline
		\textbf{\textsf{OblivCDN}} & \textbf{41.5} & \textbf{41.5} & \textbf{41.5} \\
	\hline
	\end{tabular}
	\vspace{-5pt}
	\label{tlb:video_background_fetch_latency}
\end{table}

\begin{table}[!t]
\small
	\centering
	\caption{The background communication cost (MB) of fetch under the Strawman and our approach, $(\times)$: improvement.}
	\vspace{-5pt}
	\begin{tabular}{|c|c|c|c|}
	\hline
		\textbf{Video Size} & $2$ MB & $127$ MB & $256$ MB  \\
	\hline
		Strawman & 159.6 ($6.1\times$) & 10220 ($393\times$) & 20440 ($786\times$) \\
	\hline
		\textbf{\textsf{OblivCDN}} & \textbf{26} & \textbf{26} & \textbf{26} \\
	\hline
	\end{tabular}
	\vspace{-5pt}
	\label{tlb:video_background_fetch_comm}
\end{table}

\begin{table}[!t]
\small
	\centering
	\caption{Query throughput (MByte/s), $(\times)$: improvement.}
	\vspace{-5pt}
	\begin{tabular}{|c|c|c|c|c|c|c|}
 \hline
		\textbf{Sup. Peers} & $2^5$ & $2^7$ & $2^9$  \\
	\hline
  	\textsf{OblivP2P}   &0.0625 ($40.6\times$)& 0.25 ($42.2\times$)& 0.5 ($84\times$)\\
  \hline
  		\textbf{\textsf{OblivCDN}}  & \textbf{2.64} & \textbf{10.56} & \textbf{42}\\
	\hline
	\end{tabular}
	\vspace{-10pt}
	\label{tlb:video_subscriber_throughput}
\end{table}

\noindent \textbf{Video Fetch Cost}. 
For the fetch operation, we split it into two phases, i.e., the foreground download process for \textit{Subscriber}s to obtain the requested content, and the background synchronisation process for the \textit{CP} to instruct \textit{CS}s to re-route the blocks.

In the first phase, the strawman approach follows the PathORAM design to scan a path and filter the targeted block in that path.
On the other hand, \textsf{OblivCDN} follows the ``read-only'' ORAM design~\cite{tople2019pro} to allow direct downloads on targeted blocks.
As shown in Table~\ref{tlb:video_subscriber_fetch}, \textsf{OblivCDN} only requires $5.1$ s to retrieve a video under the network setting.
As a result, \textsf{OblivCDN} accelerates  $\sim90\times$ faster than the fetch process of the Strawman.
\textsf{OblivCDN} is also much more efficient than OblivP2P ($366\times$ faster) since OblivP2P requires multiple peers to jointly execute the PIR scheme for each video block retrieval.

In the latter phase, \textsf{OblivCDN} incurs only a constant cost with given parameters, which is consistent with our theoretical analysis.
In contrast, the Strawman approach imposes prohibitively high costs, as shown in Table~\ref{tlb:video_background_fetch_latency} and~\ref{tlb:video_background_fetch_comm}.
This discrepancy arises because the Strawman operates as the PathORAM client during this phase.
It needs to retrieve all data blocks from the \textit{ES} and execute the Path ORAM protocol to relocate them into new paths.
Since OblivP2P does not have the background phase, we omit the comparison here.
%


\noindent \textbf{Streaming Throughput}. 
%
In \textsf{OblivCDN}, each \textit{ES} operates independently for \textit{Subscriber}s to fetch.
Moreover, it minimises the blocking time of \textbf{Sync} operations as \textit{ES} utilises another background thread to run oblivious routing protocols with \textit{CS}s on a copy of the ORAM tree and replace the old rORAM tree with the updated one based on requests.
This means if we employ more \textit{ES}s, we can easily improve \textit{Subscriber}s' streaming throughput via multi-thread downloading.

We subsequently conduct a streaming performance comparison with OblivP2P~\cite{Jia16OblivP2P} by scaling the number of \textit{ES}s in \textsf{OblivCDN} to match the number of supporting peers in OblivP2P.
As shown in Table~\ref{tlb:video_subscriber_throughput}, we observe that \textsf{OblivCDN} improves $40.6\times-84\times$ query throughput compared to OblivP2P.
Even with $2^5$ servers, \textsf{OblivCDN} can effectively support the streaming of Ultra High Definition (4K 
UHD) videos, meeting the minimal requirement of $1.875$ MBps~\cite{NetflixRequire}.


%


%

%
%
%
%

\noindent \textbf{Adaptation to multi-user concurrent accesses}.
The current design of \textit{ES}s allows end-users to directly fetch entire encrypted video blocks.
Hence, \textsf{OblivCDN} can be integrated with multi-user support systems using untrusted Load Schedulers {\color{blue}\cite{Chakraborti19,Dauterman21}}.
Specifically, each \textit{ES} can work with a scheduler to group queries in a batch setting.
Alternatively, an ephemeral key mechanism could be employed to support concurrent access by multiple users for the same video requests.
These directions are left for future work.

\section{Related works}

\noindent \textbf{Access Pattern Hiding CDN Systems}.
Protecting content and user privacy in outsourced CDN systems is an important topic and has been widely discussed.
While encryption is a straightforward solution for data confidentiality, several prior works mainly focus on the access pattern hiding feature:
%
%
The first line of those works rely on anonymous routing~\cite{Michael14}, which could be prone to long-term traffic analysis~\cite{TaoWang14} and may have potential copyright issues~\cite{NetflixCopyright22}. 
%
%
%
%
%
%
The second line focuses on fortifying users' privacy in peer-assisted CDN networks~\cite{Jia16}, which does not consider users' privacy when fetching from edge servers.
The closest line of work aims to transfer CDN systems to oblivious ones with customised primitives, such as ORAM/PIR~\cite{Gupta16,Jia16OblivP2P} and queries with distributed trust~\cite{Shujie20}.
However, these works are not scalable in the real-world scenario with multiple edge servers, which will multiply the bandwidth cost of the current design. 
Some recent works suggest having TEE-enabled devices~\cite{Jia2017,Silva19} for oblivious operations.
These designs incur compatibility issues in CDNs with massive legacy devices. 

\noindent \textbf{Oblivious Query Systems}. ORAMs have been widely applied in encrypted systems to ensure the obliviousness of queries.
This includes searchable encryption~\cite{RaphaelBO17,GharehChamani18,Vo20}, oblivious storage and databases~\cite{Cetin16,Vincent15,Chen20,Elaine20,Mishra18,Sasy17,Hoang19,Ahmad18}, and data analytic frameworks~\cite{Zheng2017,Eskandarian19}.
We note that the adoption of ORAMs in those studies is similar to our Strawman design and will incur noticeable delays or compatibility issues under the CDN setting.

Some recent works employ the distributed trust model to enable oblivious queries.
With two servers, several applications, such as ORAM storage~\cite{Doerner17,Bunn20}, data search and analytics~\cite{Dauterman20,Dauterman22,Frank17}, private messaging~\cite{Eskandarian21}, and contact tracing~\cite{Trieu20,Dittmer20}, are efficiently supported. 
However, those systems often duplicate the database into multiple servers' storage to support its functionality.
This increases both \textit{CDN provider}s' running costs and \textit{Subscriber}s' bandwidth costs and thus makes the design not practical.

\noindent \textbf{ORAM Optimisations}.
ORAM design has been optimised in various directions to improve computation efficiency~\cite{WangORAM15,Zahur16,Doerner17,Hoang20}, bandwidth cost~\cite{Ren15,Chen19}, and query throughput~\cite{Chakraborti19,Chakraborti191,Demertzis17,tople2019pro}. 
%
However, these works often have the eviction from a local stash to an outsourced ORAM storage. 
Instead, our system brings ORAM design to a practical CDN setting, which allows the eviction between both the outsourced stash and the ORAM storage deployed in \textit{ES}s. 
%

%
%


\section{Conclusions} 

\textsf{OblivCDN} provides a novel combination of customised oblivious building blocks, range ORAM, and TEE in CDN systems to protect content confidentiality and end-user privacy from data breaches and access pattern exploitation.
\textsf{OblivCDN} seamlessly integrates with mainstream CDNs, requiring minimal modifications.
In real-world streaming evaluation, \textsf{OblivCDN} effectively meets CDN requirements: low user fetch latency, efficient computation, and minimal communication cost, even in intercontinental environments.


\section{Acknowledgements}
 
The authors would like to thank the anonymous reviewers for their valuable comments. This work was supported in part by the Australian Research Council Future Fellowship grant FT240100043 and a Data61-Monash Collaborative Research Project (D61 Challenge).

\bibliographystyle{ACM-Reference-Format}
\bibliography{sgxse}

\appendix

\section{$\textnormal{r}$ORAM Protocol}
\label{subsec:rORAM_protocols}

\begin{figure}[!t]
\centering
\includegraphics[width=0.9\linewidth]{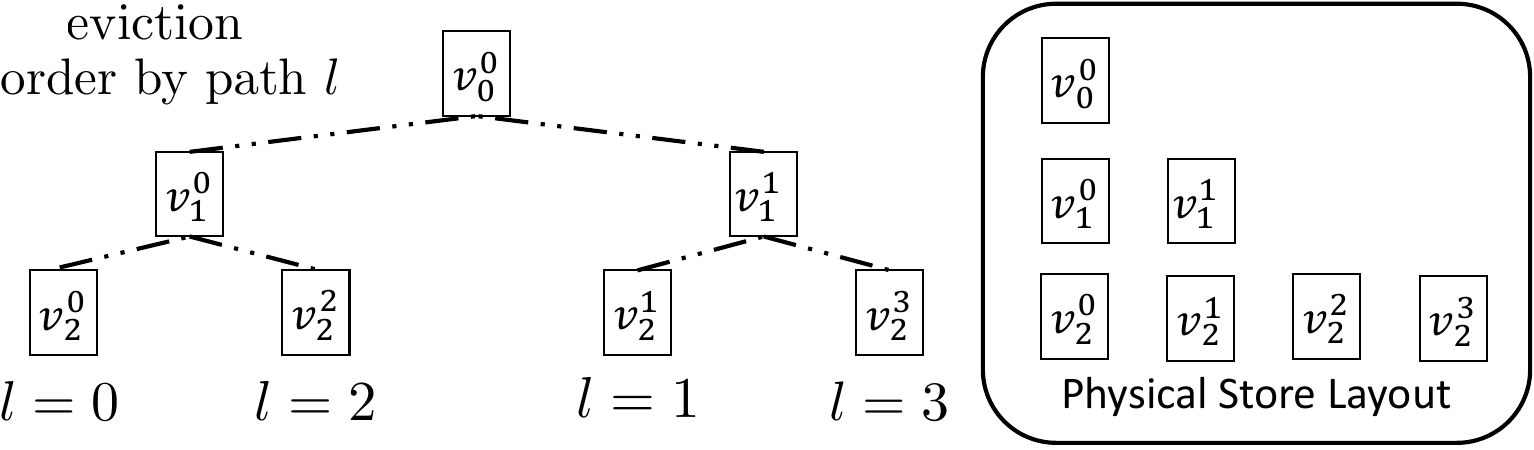}
\caption{rORAM physical and logical storage layout.}
\label{fig:rORAM_layout}
\vspace{-10pt}
\end{figure}

Let $N$ be the number of blocks in an ORAM storage with the height $L$, and $\textsf{bid}_{i \in [0,N)}$ be the identifier of a block, and $R$ be the largest sequence of \textsf{block}s allowed to fetch each time from storage. 
The rORAM scheme~\cite{Chakraborti19} consists of $\textnormal{log}_2R + 1$ ORAM storage deployed at the server side, denoted as $E_r$, where $r \in [0, \textnormal{log}_2R]$. 
The client maintains the corresponding $S_r$ and a position map $\textsf{position}_r$. 

A unique design for rORAM is that it partitions the server's disk to store each layer of the ORAM tree in consecutive physical space (see the right part of Figure~\ref{fig:rORAM_layout}) but arrange them in a bit-reversed order, which means the paths are encoded as 0(00), 2(10), 1(01), 3(11) in a three-layer ORAM tree (the left part of Figure~\ref{fig:rORAM_layout}).
The above design enables sequential accesses on consecutive paths and minimises the overlapping between consecutive paths to mitigate overflow when evicting blocks.
For instance, when accessing the path 0, 1, 2, the rORAM client can sequentially access $v_0^0,v_0^0,v_0^0$ on layer 0, $v_1^0,v_1^1,v_1^0$ on layer 1 and $v_2^0,v_2^1,v_2^2$ on layer 2 to finalise the access, which only requires $\mathcal{O}(\log_2N)$ seeks.
The eviction on those paths are following the same process but in a bottom-up order, which also indicates $\mathcal{O}(\log_2N)$ seeks.

Furthermore, rORAM realises a location-sensitive mapping strategy to assign $bid$s and paths to blocks.
In particular, to store $2^r$ blocks into $E_r$, the rORAM client assigns a random $bid$ and path $l$ to the starting block, and then assigns $bid+i$ and $(l+i) \mod 2^{(L-1)}$ for $i\in[1,2^r)$ to the following blocks.
By doing this, rORAM ensures that for blocks that belong to the same range, they are always in consecutive paths, and thus store close to each other to maintain minimised seek cost $\mathcal{O}(\log_2N)$.
For blocks that are not in the same range, the ORAM randomisation strategy is still in place to ensure the full obliviousness.
At the end, we provide the detailed rORAM protocols as in Algorithm~\ref{alg:roram_readrange} and~\ref{alg:roram_batchevict}.

\begin{figure}[!t]
\vspace{-7pt}
\begin{minipage}{\linewidth}
\begin{algorithm}[H]
\caption{$E_r.\textsf{ReadRange}(bid)$}
\label{alg:roram_readrange}
\begin{algorithmic}[1]
\small
\State Let $U :=[bid,bid+2^r)$ 
\State Scan $S_r$ for blocks in the range $U$.
\State $p \leftarrow \textsf{position}_r.\textsf{query}(bid)$ \Comment{get the ORAM path $p$ for the block $\textsf{bid}$}.
\For{$e=0, \dots, L-1$} 
\State Read ORAM buckets $V=\{v^{t ~\textsf{mod}~ 2^e}_e: t \in [p,p + 2^r) \}$
\State Identify blocks of $U$ from $V$.  
\EndFor
\State $p^\prime \leftarrow [0,N)$ \Comment{identify a new random path}
\State $\textsf{position}_r.\textsf{update}(bid+i,p^\prime + i)$ for $i \in [0,2^r)$, wrap around $N$.
\end{algorithmic}
\end{algorithm}
\end{minipage}
\vspace{-10pt}
\end{figure}




\begin{figure}[!t]
\begin{minipage}{\linewidth}
\begin{algorithm}[H]
\caption{$E_r.\textsf{BatchEvict}(k)$}
\label{alg:roram_batchevict}
\begin{algorithmic}[1]
\small
\State Let $G_r$ be an integer tracking the eviction schedule of $E_r$
\For{$e=0,\dots,L-1$} \Comment{fetch blocks on $k$ paths}
\State Read ORAM buckets $V=\{v^{t ~\textsf{mod}~ 2^e}_e: t \in [G_r,G_r + k) \}$
\State Put not-dummy blocks in $V$ to $S_r$
\EndFor
\For{$e=L-1,\dots,0$} \Comment{evict $k$ paths: bottom-up, level-by-level}
\For{$i \in \{ t ~\textsf{mod}~ 2^e : t \in \in [G_r,G_r + k) \}$} \Comment{for each path}
\State \Comment{Let $l_j$ be the current ORAM path of the $\textsf{block}_j$}
\State $S^\prime \leftarrow \{\textsf{block}_j \in S_r$ : $l_j \equiv i~ (\textsf{mod} ~2^e)\}$
\State $S^\prime \leftarrow$ Select $\textsf{min}(|S^\prime|,Z)$ \textsf{blocks} from $S^\prime$
\State $S_r \leftarrow S_r ~ / ~S^\prime$
\State \Comment{Let $v^{i~ \textsf{mod} ~2^e}_e$ be the partition $e$ with offset $(i~ \textsf{mod} ~2^e)$}
\State $v^{i~ \textsf{mod} ~2^e}_e \leftarrow S^\prime$ \Comment{note the starting offset to write $S^\prime$ at $e$}
\EndFor
\EndFor
\For{$e=0,\dots,L-1$} 
\State Write ORAM buckets $\{v^{t~ \textsf{mod} ~2^e}_e : t \in [G_r,G_r + k) \}$ to $E_r$
\EndFor
\end{algorithmic}
\end{algorithm}
\end{minipage}
\end{figure}

\section{Algorithms of Building Blocks}\label{sec:oblivious}

In this section, we present the set of building blocks used to build the system in \S\ref{sec:main}.  
In general, we denote $G$ as pseudo random function we used in this paper. 

\begin{center}
\begin{figure}[!t]
\begin{minipage}[b]{\linewidth}
\fbox{%
  \parbox{\textwidth}{
$
  \begin{array}{l|l}
    \underline{\textsf{OGet}(S,k)}: & \underline{\textsf{OUpdate}(S,k,v)}: \\
   1: v\leftarrow\bot & 1: \textbf{for}~i=0~\textbf{to} ~|S|-1~\textbf{do}\\
   2: \textbf{for}~i=0~\textbf{to} ~|S|-1~\textbf{do} & 2: \hspace{6pt}c\leftarrow\textbf{ocmp}(i,k)\\
   3: \hspace{6pt}c\leftarrow\textbf{ocmp}(i,k) & 3: \hspace{6pt}\textbf{oupdate}(c,v,S[i]) \\
   4: \hspace{6pt}\textbf{oupdate}(c,S[i],v) &  \\
   5: \textbf{return}~v &  \\
  \end{array}
$
  }%
}
\end{minipage}
\vspace{-5pt}
\caption{\textnormal{\textsf{OGet}} and \textnormal{\textsf{OUpdate}}}
\label{tab:oprimitive}

\end{figure}
\end{center}
%
%
%
%
%
%
%

\subsection{Oblivious Functions in the \textit{Enclave}}
\label{subsec:oblivEnclave}
Many techniques have been proposed to mitigate memory access side-channel attacks~\cite{Lee2017InferringFC,Brasser17,Bulck2017TellingYS,Wang2017LeakyCO} by obfuscating the access of the \textit{Enclave} to CPU's caches~\cite{Rane15,Sasy17,Ohrimenko16}. 
Figure~\ref{tab:oprimitive} demonstrates common oblivious functions, i.e., \textsf{OGet} and \textsf{OUpdate}, that enable the \textit{Enclave} to \textit{obliviously} access the vector, including the stash in \textsf{VideoMap}, \textsf{BlockStateMap},and the Strawman design). Note that \textbf{oupdate} (i.e., oblivious update) and \textbf{ocmp} (i.e., oblivious comparision) are constructed based on \textsf{CMOV} and \textsf{SETE} instructions. 

%
%
%

\begin{figure}[!t]
\vspace{-7pt}
\begin{minipage}{\linewidth}
\begin{algorithm}[H]

\caption{$\textsf{BlockRangeRet}(Q)$} 
\label{alg:BlockRangeRet}

\begin{algorithmic}[1]

\footnotesize
\Require Query List $Q=\{(\textsf{bid}_i,c_i,l_i,l'_i,s_i,s^\prime_i)\}$, where $i\in[a,a+2^r)$
\State \textbf{Inside the \textit{Enclave}:}
\State Init query batches $Q_1$ and $Q_2$
\For{$(\textsf{bid}_i,c_i,l_i,s_i,s^\prime_i) \in Q$} \Comment{generate instructions}

\State  $(k_{i,1},k_{i,2}) \leftarrow \textsf{DPF.Gen}(1^\lambda, s_i)$
\State  $m_{i,1} \xleftarrow{\$} (0,1)^\lambda$
\State $m_{i,2} \leftarrow m_{i,1}\oplus G(c_i)\oplus G(c_i+1)$
\State $o_i\leftarrow (s_i-|S_r|)\% Z$
\State $Q_1 \leftarrow Q_1 \cup \{(k_{i,1},m_{i,1},l_i,o_i,s^\prime_i)\}$ 
\State $Q_2 \leftarrow Q_2 \cup \{(k_{i,2},m_{i,2},l_i,o_i,s^\prime_i)\}$ 

\State Oblivious update $\textsf{BlockState}[\textsf{bid}_i]$ with $(c_i+1)$, $l'_i$ and $s'_i$  

\EndFor
\State \textit{Enclave} sends $Q_p$ to $CS_p$, $p\in\{1,2\}$.
\State \textbf{On each $\textit{CS}_p$, $p\in [1,2]$:}
\For{$i \in [a,a+2^i)$} \Comment{convert instructions to oblivious operations}
\For{$e \in[1,\log_2N]$} \Comment{on each layer of $E_r$}
	\State Compute target block ID $s_e=|S_r|+(e-1)\times N + o_i$
	\State Retrieve block $b_{s_e}$ at the offset $o_i$ of the $e$ level of path $l_i$ inside $E_r$
	\State $y_{i,p}^e\leftarrow \textsf{Eval}(k_{i,p}, s_e)$
	\EndFor
    \State $b_{i,p} \leftarrow \bigoplus\limits^{\log_2N}_{e=1}  \left(y_{i,p}^e\times b_{s_e} \right)$  
    \State Send $b_{i,p} \leftarrow (b_{i,p}\xor m_{i,p})$ and $s'_i$ to $S_r$
\EndFor  
\State \textbf{On $S_r$:}
\For{$i \in [a,a+2^i)$} \Comment{Update on $S_r$}
\State Store $(b_{i,1}\xor b_{i,2})$ at $s^\prime_i$ of $S_r$ %
\EndFor
    
\end{algorithmic}
\end{algorithm}
\end{minipage}
\vspace{-5pt}
\end{figure}

\subsection{Block Range Return}
\label{app:BlockRangeRet}

\textsf{BlockRangeRet} enables routing a collection of $2^r$ video \textsf{block}s from the ORAM storage $E_r$ to the $S_r$~(Algorithm~\ref{alg:BlockRangeRet}).

Let $Q$ be the range query collection, where each block $b_{s_i}$, identified by $\textsf{bid}_i$, with block state $(c_i,l_i,s_i)$.
The algorithm re-encrypts $b_{s_i}$ with new counter $c_i+1$ and places the re-encrypted versions of the block in a randomly selected $s'_i$ index in the $S_r$. and its block state is updated accordingly.

\subsection{Permutation \& Re-encryption}
\label{app:PerReFunc}

\textsf{PerReFunc} re-encrypts and shuffles all \textsf{block}s in the $S_r$, which is presented in Algorithm~\ref{alg:PerReFunc}.

The \textit{Enclave} in the \textit{CP} triggers \textsf{PerReFunc}  by permuting the $\textsf{MetaStash}_r$ via the mapping $\omega$ that indicates new random location and re-encrypted block for every \textsf{block} within the $S_r$.
Then, $\omega$ and re-encryption keys are then shared to each $\textit{CP}_{p \in \{1,2\}}$.
In this way, each of them cannot determine the re-encryption keys or trace the location replacement of \textsf{block}s in the $S_r$.



\begin{figure}[!t]
\vspace{-7pt}
\begin{minipage}{\linewidth}
\begin{algorithm}[H]
\caption{$\textsf{PerReFunc}(Q)$}
\label{alg:PerReFunc}
\begin{algorithmic}[1]
\footnotesize
\Require Query List $Q=\{(bid_i,s_i,s^\prime_i)\}$, where $i\in[0, |S_r|)$
\State \textbf{Inside the \textit{Enclave}:}
\State Init query batches $Q_1$ and $Q_2$
\For{$(bid_i,s_i,s^\prime_i) \in Q$} \Comment{generate instructions}
\State Sample a permutation $ \omega_{i,1}: s_i\mapsto s^*_i; \omega_{i,2}: s^*_i\mapsto s'_i$
\State  $m_{i,1} \xleftarrow{\$} (0,1)^\lambda$
\State $m_{i,2} \leftarrow m_{i,1}\oplus G(c_i)\oplus G(c_i+1)$
\State $Q_1 \leftarrow Q_1 \cup \{(\omega_{i,1},m_{i,1})\}$ 
\State $Q_2 \leftarrow Q_2 \cup \{(\omega_{i,2},m_{i,2})\}$ 
\State Oblivious update $\textsf{BlockState}[\textsf{bid}_i]$ with $(c_i+1)$ and $s'_i$  
\EndFor
\State \textit{Enclave} sends $Q_p$ to $CS_p$, $p\in\{1,2\}$.
\State \textbf{On $\textit{CS}_1$:}
\For{$s\in[0,|S_r|)$}
\State Retrieve block $b_{s}$ from $S_r$
\State Compute $b_{s}\leftarrow b_{s}\oplus m_{{s},1}$
\State Permute $b_{s}$ with $\omega_{{s},1}$
\EndFor
\State Send permuted blocks to $\textit{CS}_2$ in order
\State \textbf{On $\textit{CS}_2$:}
\For{$s\in[0,|S_r|)$}
\State Received block $b_{s}$ from $\textit{CS}_1$
\State Compute $b_{s}\leftarrow b_{s}\oplus m_{{s},2}$
\State Permute $b_{s}$ with $\omega_{{s},2}$
\EndFor
\State Store permuted blocks back to $S_r$ in order




\end{algorithmic}
\end{algorithm}
\end{minipage}
\end{figure}

\subsection{Path Range Return}
\label{app:PathRangeRet}

Algorithm~\ref{alg:PathRangeRet} details \textsf{PathRangeRet}.
Given the set $Q$ which contains \textsf{blocks} in $2^r$ ORAM paths of an ORAM storage $E_r$, the \textsf{PathRangeRet} allows the \textit{CP} to route them  to \textit{ES}s that maintains the $S_r$. 
For each block $\textsf{bid}_i$ in $Q$, the \textit{Enclave} generates the random permutation $\omega$ and re-encryption keys that can be shared to \textit{CS}s.
The path information is given, so $\textit{CS}_{1}$ and $\textit{CS}_{2}$ can correctly fetch, re-encrypt and permute the blocks according to the instruction.
But throughout the process, $\textit{CS}_{1}$ will only knows where the block comes from while $\textit{CS}_{2}$ knows where the block will be stored in the $S_r$.

\begin{figure}[!t]
\vspace{-12pt}
\begin{minipage}{\linewidth}
\begin{algorithm}[H]
\caption{\textsf{PathRangeRet}($Q$)} 
\label{alg:PathRangeRet}
\begin{algorithmic}[1]

\footnotesize
\Require Query List $Q=\{(bid_i,l_i,s_i,s^\prime_i)\}$, where $i\in[0, 2^r\times\log_2N\times Z)$
\State \textbf{Inside the \textit{Enclave}:}
\State Init  query batches $Q_1$ and $Q_2$
   \For{$(bid_i,l_i,s_i,s^\prime_i) \in Q$}  \Comment{generate instructions}
	\State Sample a permutation $ \omega_{i,1}: s_i\mapsto s^*_i; \omega_{i,2}: s^*_i\mapsto s'_i$
	\State  $m_{i,1} \xleftarrow{\$} (0,1)^\lambda$
	\State $m_{i,2} \leftarrow m_{i,1}\oplus G(c_i)\oplus G(c_i+1)$
	\State $Q_1 \leftarrow Q_1 \cup \{(l_i,\omega_{i,1},m_{i,1})\}$ 
	\State $Q_2 \leftarrow Q_2 \cup \{(l_i,\omega_{i,2},m_{i,2})\}$ 
    \State Oblivious update $\textsf{BlockState}[\textsf{bid}_i]$ with $(c_i+1)$ and $s'_i$ 
     
    \EndFor  
	\State \textit{Enclave} sends $Q_p$ to $CS_p$, $p\in\{1,2\}$.

	\State \textbf{On $\textit{CS}_1$:}
	\For{each instruction $\{(l_i,\omega_{i,1},m_{i,1})\}$ in $Q_1$}
	\State Retrieve block $b_{s_i}$ from path $l_i$ of $E_r$
	\State Compute $b_{s_i}\leftarrow b_{s_i}\oplus m_{{s_i},1}$
	\State Permute $b_{s_i}$ with $\omega_{{s_i},1}$
	\EndFor
	\State Send permuted blocks to $\textit{CS}_2$ in order
	\State \textbf{On $\textit{CS}_1$:}
	\For{each instruction $\{l_i,\omega_{i,1},m_{i,1}\}$ in $Q_2$}
	\State Retrieve block $b_{s_i}$ of path $l_i$ from $\textit{CS}_1$
	\State Compute $b_{s_i}\leftarrow b_{s_i}\oplus m_{{s_i},2}$
	\State Permute $b_{s_i}$ with $\omega_{{s_i},2}$
	\EndFor
\State Store permuted blocks back to $S_r$ in order
\end{algorithmic}
\end{algorithm}
\end{minipage}
\vspace{-10pt}
\end{figure}

\begin{figure}[!t]
\vspace{-7pt}
\begin{minipage}{\linewidth}
\begin{algorithm}[H]
\caption{$\textsf{PriRangeEvict}(Q)$} 
\label{alg:PriRangeEvict}
\begin{algorithmic}[1]
\footnotesize 
\Require Query List $Q=\{(\textsf{bid}_i,c_i,l_i,l'_i,s_i,s^\prime_i)\}$, where $i\in[a,a+2^r\times\log_2N\times Z)$
\State \textbf{Inside the \textit{Enclave}:}
    \State Init query batches $Q_1$ and $Q_2$

\For{$(\textsf{bid}_i,c_i,l_i,s_i,s^\prime_i) \in Q$} \Comment{generate instructions}

\State  $(k_{i,1},k_{i,2}) \leftarrow \textsf{DPF.Gen}(1^\lambda, s_i)$
\State  $m_{i,1} \xleftarrow{\$} (0,1)^\lambda$
\State $m_{i,2} \leftarrow m_{i,1}\oplus G(c_i)\oplus G(c_i+1)$ if $\textsf{bid}_i \neq -1$; otherwise, $k^2_i \xleftarrow{\$} (0,1)^\lambda$
   
\State $Q_1 \leftarrow Q_1 \cup \{(k_{i,1},m_{i,1},l_i,s^\prime_i)\}$ 
\State $Q_2 \leftarrow Q_2 \cup \{(k_{i,2},m_{i,2},l_i,s^\prime_i)\}$ 
\State Oblivious update $\textsf{BlockState}[\textsf{bid}_i]$ with $(c_i+1)$ and $s'_i$ 
   \EndFor

\State \textit{Enclave} sends $Q_p$ to $CS_p$, $p\in\{1,2\}$
\State \textbf{On each $\textit{CS}_p$, $p\in [1,2]$:}
    \State Retrieve the entire $S_r$ with block $b_s, s\in[0, |S_r|)$
    \For{each instruction $(k_{i,p},m_{i,p},l_i,s^\prime_i)$ in $Q_p$} \Comment{convert instructions to oblivious operations}
    \For{$s\in[0,|S_r|)$}
    \State $y_{i,p}^s\leftarrow \textsf{Eval}(k_{i,p}, s)$
    \EndFor
    \State $b_{i,p} \leftarrow \bigoplus\limits^{|S_r|-1}_{s=0}  \left(y_{i,p}^s\times b_{s} \right)$  

    \EndFor 
    \State Send $b_{i,p} \leftarrow (b_{i,p}\xor m_{i,p})$ and $(l_i,s'_i)$ to $E_r$
\State \textbf{On $E_r$:}
\For{$i \in [a,a+2^r\times\log_2N\times Z)$} \Comment{Update on $E_r$}
\State Compute $e_i\leftarrow [\left(s'_i-|S_r|\right)/Z] +1$ and offset $o_i\leftarrow\left(s'_i - |S_r|\right)~\%~Z$
\State Store $(b_{i,1}\xor b_{i,2})$ at offset $o_i$ of the layer $e_i$ of path $l_i$ on $E_r$ %
\EndFor
\end{algorithmic}
\end{algorithm}
\end{minipage}
\vspace{-5pt}
\end{figure}

\subsection{Private Range Eviction}
\label{app:PriRangeEvict}
Given the set $Q$, the collection of $2^r \times \log_2N \times Z$ blocks currently located in the $S_r$, the \textsf{PriRangeEvict} privately writes them to $2^r$ ORAM paths in the storage $E_r$.

Let a $\textsf{block}_i$, with $\textsf{bid}_i$ and the current state $c_i$, is located at the index $s_i$ in the $S_r$.
\textsf{PriRangeEvict} privately route the block to the location $s'_i$ in  $E_r$.
Algorithm~\ref{alg:PriRangeEvict} presents this building block.
Note that $s'_i$ can be easily converted to layer and offset information that can be used to uniformly locate a block with path information $l_i$.

\section{Security Analysis}
\label{app:security_analysis}

\subsection{Overview}
As mentioned, in \textsf{OblivCDN}, we follow real-world CDN systems~\cite{NetflixAWS,NetflixOCA} to consider four parties, i.e., the \textit{CDN service provider}, the \textit{cloud service providers}, the \textit{ISP partners}, and the \textit{Subscriber}, where the \textit{cloud service providers} and \textit{ISP partners} are considered as untrusted.
Hence, the security analysis will start from the most simple case, where the \textit{CDN service provider} leverages three distinct \textit{cloud service providers} to deploy \textit{CP}, $\textit{CS}_1$, and $\textit{CS}_2$, respectively.
And for the \textit{ISP partner}, the \textit{CDN service provider} is contracted with an \textit{ISP partner} to host all \textit{ES}s.
In the above model, it clear that each party can only observer the local view for the entire \textsf{OblivCDN} system, and we will depict the view of those parties and demonstrate their security guarantees under the real/ideal paradigm.

Then, we will discuss more realistic cases with bandwidth-saving and economical considerations:
The first case is that the \textit{CDN service provider} deploys one $\textit{CS}$ on the \textit{ISP partner} who hosts \textit{ES}s, which will further enable the \textit{ISP partner} to collect information from $\textit{CS}$.
The second one, on the top of the first case, further considers the \textit{CDN service provider} to use the servers from the same \textit{cloud service provider} but located in different regions (e.g., AWS instances in multiple available zones) to deploy \textit{CP} and another \textit{CS}, respectively.
This enables the \textit{cloud service provider} to collect the information both from \textit{CP} and one \textit{CS}.
Our goal is to illustrate that even \textsf{OblivCDN} contains some servers controlled by the same concerned party, its security guarantee will not be compromised if \textit{cloud service providers} and \textit{ISP partners} are not colluding.


%


%
%




\subsection{Primitive Security}
\noindent{\bf PRF/PRP Security.} Both pseudorandom functions (PRFs) and pseudorandom permutations (PRPs) can be considered as indistinguishable from the random functions to adversaries~\cite{KatzLindell2007}, 
In the following sections, we use random functions to replace PRFs/PRPs, while the adversary only has a negligible chance to distinguish them.

\noindent{\bf ORAM Security.} The ORAM security ensures that any access request sequences with the same length always result the an indistinguishable access pattern.
The only thing can be observe is the statistical information regarding the ORAM store, which is the number of tree nodes $N$, the number of blocks in each node $Z$, and the block size $B$ for Path ORAM~\cite{Stefanov13}.
Meanwhile, the rORAM additionally reveals the maximum supported range $R$, and range size $2^r, r\in[0,\log_2R]$ supported by each ORAM tree~\cite{Chakraborti191}.
A simulator for Path ORAM ($\mathcal{S}_{ORAM}$) and rORAM ($\mathcal{S}_{rORAM}$) can use the above-mentioned statistical information to initialise and simulate random access pattern accordingly without lossing its correctness.

\noindent{\bf TEE Security.} For TEE, it provides an isolated memory region (called Enclave) in a server that cannot be directly read by other parts that are considered as untrusted.
Furthermore, all data loaded into the Enclave will be automatically encrypted and only decrypted when loaded and processing inside CPU~\cite{Costan161}.
However, adversaries can use side-channels~\cite{Brasser17,Gotzfried17,Xu15,Gruss17} to extract the access pattern of that isolated memory region even they cannot read the memory or decrypt its content.

\noindent{\bf DPF Security.} DPF guarantees that the primitive leaks no information regarding the input $x^*$ and $f(x)$ to the corrupted party~\cite{Boyle16}.
Formally, given the point function $f$, the leakage function of DPF regarding $f$ can be defined as $\mathcal{L}_{\textrm{DPF}}(f)=(l, \mathbb{F})$, where $l$ is the input size, and $\mathbb{F}$ is $f$'s output field.
The simulator $\mathcal{S}_{\textrm{DPF}}$ takes the security parameter $\lambda$ and $\mathcal{L}_{\textrm{DPF}}(f)$ as inputs to simulate DPF keys.
The simulated keys are indistinguishable from the keys generated by the real DPF instantiation, while maintaining DPF's correctness.

\subsection{Security Definition for \textsf{OblivCDN} Entites}
We start from the security definition of each individual entities of \textsf{OblivCDN}.
For the  adversaries on \textit{CP}, denoted as $\mathcal{A}_{\textit{CP}}$, its view can be defined as the Path ORAM/rORAM parameters (i.e., $N,Z,B,R,r$), which can be extracted from the access pattern on allocated location even it is inside the \textit{Enclave}.
Nonetheless, the above leaked information will not have any negative impact on the data-oblivious guarantee to the \textit{Subscriber} as stated in the following theorem:

\begin{theorem}\label{theo:cp}
Let $\Pi$ be the scheme of ({\textbf{Setup}, \textbf{Upload}, \textbf{Fetch}, \textbf{Sync}}). 
The adversary on \textit{CP} ($\mathcal{A}_\textit{CP}$) chooses the Path ORAM and rORAM parameters ($N,Z,B,R$) for \textbf{Setup} and selects a list of same-length \textsf{video}s for the \textit{CDN service provider} to \textbf{Upload}.
Later, $\mathcal{A}_\textit{CP}$ can arbitrarily \textbf{Fetch} videos and observe the access pattern on \textit{CP}.
\textsf{OblivCDN} ensures that the \textit{Enclave} in the CP is data-oblivious, assuming that Path ORAM and rORAM are data-oblivious, the PRF construction is secure, the permutations are selected uniformly at random, and TEE isolation and encryption features are available.
\end{theorem}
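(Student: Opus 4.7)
The plan is to establish data-obliviousness via the standard real/ideal simulation paradigm. I will construct a simulator $\mathcal{S}_{\textit{CP}}$ that, given only the public parameters $(N, Z, B, R)$ and the lengths of the video list chosen by $\mathcal{A}_{\textit{CP}}$, produces a transcript of memory accesses inside the \textit{Enclave} that is computationally indistinguishable from the real transcript observed by $\mathcal{A}_{\textit{CP}}$ via side channels. Since the TEE's confidentiality guarantee already hides the contents of the enclave memory, the only thing $\mathcal{A}_{\textit{CP}}$ can exploit is the sequence of addresses touched, so obliviousness of that trace suffices.

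The proof will proceed by a hybrid argument over the four operations. First, I would handle the internal auxiliary structures \textsf{VideoMap} and \textsf{BlockStateMap}: because every access to these vectors goes through the linear-scan primitives \textsf{OGet}/\textsf{OUpdate} from Appendix~\ref{subsec:oblivEnclave}, their address sequences are determined purely by the vector length, and a simulator can replay them without knowing the queried $\mathit{vid}$ or $\mathit{bid}$. Second, I would handle the meta-structures $\textsf{BlockTracker}_r$ and $\textsf{MetaStash}_r$: these are invoked only through $\textsf{MetaData}_r.\textsf{ReadRange}$ and $\textsf{MetaData}_r.\textsf{BatchEvict}$, which share the access pattern of rORAM, so the rORAM simulator $\mathcal{S}_{\mathrm{rORAM}}$ from the primitive security discussion produces an indistinguishable trace given only $(N, Z, R, r)$. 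Third, I would argue that all cryptographic material sampled inside the enclave for the distributed-trust protocols (AES--CTR mask shares $m_{i,1}$, DPF keys $k_{i,p}$, permutations $\omega_{i,p}$) can be replaced by uniformly random strings: the mask share $m_{i,1}$ is already sampled uniformly, the companion share $m_{i,2} = m_{i,1} \oplus G(c_i) \oplus G(c_i+1)$ becomes uniform once $G$ is replaced by a random function, and the permutations are uniform by assumption; hence the decisions made when building instruction batches $Q_1, Q_2$ during \textsf{BlockRangeRet}, \textsf{PerReFunc}, \textsf{PathRangeRet} and \textsf{PriRangeEvict} can be simulated without reference to any real $\mathit{bid}$, path $l$, or offset $s$.

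Concretely, the hybrids I would chain together are: (H0) the real execution; (H1) replace $G$ by a truly random function, indistinguishable by PRF security; (H2) replace every use of Path ORAM for \textsf{VideoMap}/\textsf{BlockStateMap} with $\mathcal{S}_{\mathrm{ORAM}}$, indistinguishable by Path ORAM obliviousness; (H3) replace every use of rORAM for the meta-structures with $\mathcal{S}_{\mathrm{rORAM}}$, indistinguishable by rORAM obliviousness; (H4) replace the cryptographic instruction material by uniformly sampled strings and random permutations, indistinguishable by the uniform-permutation assumption and the uniformity inherited from H1. In H4 the enclave's address trace depends only on $(N, Z, B, R)$ and the number and lengths of uploaded/fetched videos, which is exactly what $\mathcal{S}_{\textit{CP}}$ is given as input, so the ideal simulation follows.

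The main obstacle I expect is in step H4, specifically arguing that the instruction generation loop does not implicitly leak the real block identifiers through its memory footprint. Inside the enclave, the generation of each $Q_p$ entry reads $\textsf{BlockStateMap}[\mathit{bid}_i]$ and writes an updated state back; a careless implementation would touch a $\mathit{bid}$-dependent address. I will need to invoke the fact (implicit in the use of \textsf{OGet}/\textsf{OUpdate}) that every such read/write performs a full oblivious scan, so the touched-address sequence in the batch loop is a deterministic function of the batch size $2^r$ (or $2^r \cdot \log_2 N \cdot Z$) and of $|\textsf{BlockStateMap}|$, not of the particular identifiers. Once this is pinned down, the composition of the hybrids is routine, and the final statement follows because the view in H4 is, by construction, produced from leakage alone.
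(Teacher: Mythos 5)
Your proof is correct and takes essentially the same route as the paper: both constructions proceed via the real/ideal paradigm with a simulator $\mathcal{S}_{\textit{CP}}$ driven solely by $(N,Z,B,R)$ and video counts/lengths, followed by a chain of hybrids that swap PRF/PRP outputs, ORAM simulators, and instruction material. Two minor differences are worth noting. First, the paper runs the hybrids in the opposite direction (ideal $\to$ real) and splits the crypto-material step into two hybrids: one for PRP/PRF (its Hybrid~1) and one specifically for replacing simulated DPF keys with real ones (its Hybrid~2), explicitly invoking DPF security for the latter. You fold the DPF keys into your H4 together with masks and permutations and assert they can be replaced by uniformly random strings; strictly speaking, that substitution for the key \emph{content} needs the DPF pseudorandomness assumption (or else a separate argument that $\textsf{DPF.Gen}$ is itself data-oblivious and its output never appears in $\mathcal{A}_{\textit{CP}}$'s view, since it stays enclave-internal). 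Neither DPF security nor $\textsf{DPF.Gen}$ obliviousness appears in the theorem's stated hypothesis list, so the paper's Hybrid~2 is also slightly under-licensed by the theorem statement; you should either invoke DPF security explicitly or justify omitting it by arguing the CP adversary sees only the key-generation \emph{access pattern}, not the keys. Second, your remark in H4 about the instruction-batch loop touching only \textsf{OGet}/\textsf{OUpdate} full-scan addresses (hence a trace determined by batch size and vector length, not by $bid$) is exactly the right place to pin down the hidden obliviousness obligation, and you articulate it more explicitly than the paper does.
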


Similarly, we can define the adversary on one of $\textit{CS}_{p \in \{1,2\}}$ ($\mathcal{A}_{\textit{CS}_p}$), who also chooses the Path ORAM and rORAM parameters, as well as a list of same-length \textsf{video}s for \textbf{Upload} and \textbf{Fetch}.
However, we claim that adversaries can only learn the ORAM parameters if they only compromise one $\textit{CS}$:

\begin{theorem}\label{theo:cs}
Let $\Pi$ be the scheme of ({\textbf{Setup}, \textbf{Upload}, \textbf{Fetch}, \textbf{Sync}}). 
The adversary on $\textit{CS}_{p \in \{1,2\}}$ ($\mathcal{A}_{\textit{CS}_p}$) chooses the Path ORAM and rORAM parameters ($N,Z,B,R$) for \textbf{Setup} and selects a list of same-length \textsf{video}s for the \textit{CDN service provider} to \textbf{Upload}.
Later, $\mathcal{A}_{\textit{CS}_p}$ can arbitrarily \textbf{Fetch} videos and observe the tokens and blocks sent in and out on $\textit{CS}_p$.
\textsf{OblivCDN} ensures that \textit{CS} is data-oblivious against the adversary, assuming that rORAM is data-oblivious, the DPF and PRF constructions are secure, and the permutations are selected uniformly at random.
\end{theorem}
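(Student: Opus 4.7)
The plan is to prove Theorem~2 via a simulation-based argument in the real/ideal paradigm. I would construct a PPT simulator $\mathcal{S}_{\textit{CS}_p}$ that, given only the public parameters $(N, Z, B, R)$ and the schedule of operations (the lengths and times of uploads/fetches, which are not part of what \textsf{OblivCDN} promises to hide at this boundary), produces a transcript indistinguishable from $\mathcal{A}_{\textit{CS}_p}$'s real view. First I would enumerate the messages $\textit{CS}_p$ observes across the four protocols: during \textbf{Setup} it learns the sizes of $E_r$ and $S_r$, which are functions of the public parameters; during \textbf{Upload}/\textbf{Fetch}/\textbf{Sync} it observes (i) instruction tokens from the Enclave containing DPF keys $k_{i,p}$, re-encryption masks $m_{i,p}$, path/offset indices $(l_i,o_i,s_i,s'_i)$, and permutation halves $\omega_{i,p}$; (ii) encrypted blocks retrieved from $E_r$ and $S_r$; and (iii) encrypted blocks it writes back.

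The argument proceeds by a sequence of game hops, each justified by one of the stated assumptions. In the first hop I invoke $\mathcal{S}_{\text{DPF}}$ to replace every DPF key $k_{i,p}$ appearing in \textsf{BlockRangeRet} and \textsf{PriRangeEvict} with a simulated key; by DPF security this is indistinguishable since $\mathcal{A}_{\textit{CS}_p}$ holds only one of the two shares. In the second hop I replace every re-encryption mask $m_{i,p}$ with a uniformly random string of the same length; this is perfect because the Enclave samples one share uniformly and the other is its XOR with $G(c)\oplus G(c')$, so the marginal distribution of the share visible to $\textit{CS}_p$ is uniform. In the third hop I replace every PRF output $G(\cdot)$ hidden inside ciphertexts fetched from $E_r$ and $S_r$ with a truly random function, incurring only a negligible PRF-distinguishing advantage; each observed ciphertext then becomes a one-time-pad ciphertext whose distribution depends only on block length, not content. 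In the fourth hop I replace the permutation half $\omega_{i,p}$ delivered to $\textit{CS}_p$ with a uniformly random injection into a freshly sampled intermediate location, which is identically distributed because the Enclave chooses the intermediate location uniformly at random and the complementary half is never seen by $\textit{CS}_p$. Finally, I invoke $\mathcal{S}_{\text{rORAM}}$ to simulate the entire sequence of addresses $(l_i,o_i,s_i,s'_i)$ from the public parameters alone; rORAM obliviousness guarantees indistinguishability from the values produced on the real access sequence.

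The main obstacle I anticipate is the joint analysis of the fourth and fifth hops, since the permutation halves and stash indices are not drawn independently: the Enclave assigns $s'_i$ consistently across \textsf{PerReFunc}, \textsf{PathRangeRet} and \textsf{PriRangeEvict} so that every position in $S_r$ is eventually touched once per cycle, and within one round trip the same block is first read then written by $\textit{CS}_p$. I would handle this by arguing that, conditioned on the rORAM-simulated access trace, the marginal distribution of $\omega_{i,1}$ at $\textit{CS}_1$ is a uniformly random bijection from the source addresses to the intermediate addresses, and symmetrically $\omega_{i,2}$ at $\textit{CS}_2$ is uniform from the intermediate addresses to the final addresses; the complementary half is information-theoretically hidden. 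Combined with the observation that counters are incremented on every touch, so no ciphertext ever reappears under the same mask, the simulator's output is statistically close to the real view. The overall distinguishing advantage is then bounded by $\mathrm{Adv}_{\text{DPF}} + \mathrm{Adv}_{\text{PRF}} + \mathrm{Adv}_{\text{rORAM}}$, each of which is negligible under the stated assumptions, yielding the theorem.
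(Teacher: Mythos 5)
Your proposal takes essentially the same approach as the paper's proof: a real/ideal simulation argument with a sequence of game hops invoking DPF key indistinguishability, PRF/PRP security of the masks and ciphertexts, uniformity of the observed permutation half, and rORAM obliviousness of the address trace; the paper's hybrids run in the opposite direction (ideal to real) but that is cosmetic. Your additional care about the joint distribution of the permutation halves and stash indices across \textsf{PerReFunc}/\textsf{PathRangeRet}/\textsf{PriRangeEvict} is a sound refinement of a step the paper handles more tersely inside its Hybrid 4, and does not change the structure of the argument.
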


Finally, we discuss the adversary who can control \textit{ES}s ($\mathcal{A}_\textit{ES}$). 
Note that $\mathcal{A}_\textit{ES}$ have the same capability as all prior adversary but the following theorem indicates that its view can still be simulated by ORAM parameters ($N,Z,B,R$):

\begin{theorem}\label{theo:es}
Let $\Pi$ be the scheme of ({\textbf{Setup}, \textbf{Upload}, \textbf{Fetch}, \textbf{Sync}}). 
The adversary on $\textit{ES}$ ($\mathcal{A}_{\textit{ES}}$) chooses the Path ORAM and rORAM parameters ($N,Z,B,R$) for \textbf{Setup} and selects a list of same-length \textsf{video}s for the \textit{CDN service provider} to \textbf{Upload}.
Later, $\mathcal{A}_{\textit{ES}}$ can arbitrarily \textbf{Fetch} videos and observe the blocks sent in and out on $\textit{ES}$.
\textsf{OblivCDN} ensures that \textit{ES} is data-oblivious against the adversary, assuming that rORAM is data-oblivious, the  PRF constructions is secure, and the permutations are selected uniformly at random.
\end{theorem}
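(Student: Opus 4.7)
I would prove Theorem~\ref{theo:es} via the standard real/ideal simulation paradigm, building a PPT simulator $\mathcal{S}_{\textit{ES}}$ that, given only the parameters $(N,Z,B,R)$ (together with the same-length video list and the sequence of \textbf{Fetch} queries made by $\mathcal{A}_{\textit{ES}}$), produces a transcript of reads/writes on $E_r$ and $S_r$ that is computationally indistinguishable from the real view observed by $\mathcal{A}_{\textit{ES}}$. The plan is to decompose the real view into two independent layers --- \emph{block contents} and \emph{access locations} --- and reduce each layer to a named primitive assumption.

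First I would handle \textbf{Setup} and \textbf{Upload}: from $(N,Z,B,R)$ the simulator allocates the $\log_2 R + 1$ trees $E_r$ and stashes $S_r$ of exactly the same shape as the real system (this part is trivially identical), and for each uploaded video it writes $V$ uniformly random strings of length $B$ into the initial stash slots. Indistinguishability here reduces to the security of AES-CTR with a fresh counter per block, which by the PRF assumption is indistinguishable from a random function; the simulator $\mathcal{S}_{\textrm{DPF}}$ is not needed since \textit{ES} never sees DPF keys or masks, only the final re-encrypted blocks pushed into storage by the $\textit{CS}$s.

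Next I would tackle \textbf{Fetch} and \textbf{Sync}. The ES-visible events are exactly: (i) \textit{Subscriber} reads at addresses $(l,e,o)$ determined by \textsf{BlockStateMap}, (ii) \textsf{BlockRangeRet} reads of the entire offset column $\{|S_r|+(e-1)Z+o_i\}$ in path $l_i$, (iii) \textsf{PerReFunc} sweeps over all of $S_r$, (iv) \textsf{PathRangeRet} reads of all $2^r\log_2 N\cdot Z$ blocks on $2^r$ paths, and (v) \textsf{PriRangeEvict} reads of the full stash and writes back to $2^r$ paths. Crucially, in (i)--(v) the \emph{set} of addresses touched is determined entirely by the rORAM access/eviction schedule for a sequence of $2^{r_i}$-range queries. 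I would invoke $\mathcal{S}_{rORAM}$ to produce those address schedules using only $(N,Z,B,R,r_i)$, and independently sample uniformly random $B$-byte strings for every block write; for every block read I replay the last value written to that address. The resulting transcript is, by hybrid argument, indistinguishable from the real one: swap the real AES-CTR masks $G(c)\oplus G(c')$ for truly random masks (PRF hybrid), swap the real DPF shares for uniformly random shares of the same length (but these are invisible to \textit{ES} anyway), and swap the real permutations $\omega_1,\omega_2$ inside \textsf{PerReFunc}/\textsf{PathRangeRet} for uniform permutations (unconditional, by hypothesis).

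The main obstacle I anticipate is the \emph{stash access} inside \textbf{Fetch}: when a \textit{Subscriber} directly downloads from $S_r$ using the address $s_a$ carried in \textsf{BlockStateMap}, \textit{ES} learns the stash index accessed, and I must argue this index is uniformly distributed and independent of the video identity across queries. This relies on two facts that must be carefully chained: the initial $s_a$ assignments are uniform without collision by construction (\textbf{Upload} step), and each \textsf{PerReFunc} invocation between consecutive fetches applies a fresh uniform permutation to $S_r$ before the next access is observable. Thus every subsequent stash index seen by \textit{ES} is a fresh uniform sample conditioned on non-collision with concurrently-live entries, which $\mathcal{S}_{\textit{ES}}$ replicates by sampling indices uniformly. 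Combining this with the rORAM obliviousness of the $E_r$-side addresses and the PRF-based block-content indistinguishability closes the hybrid chain, yielding the theorem.
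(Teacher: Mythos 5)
Your proof takes essentially the same approach as the paper's: build a simulator for \textit{ES}'s view using only the ORAM parameters $(N,Z,B,R,r)$, sample uniformly random block contents (justified by PRF security of the re-encryption masks), invoke $\mathcal{S}_{rORAM}$ for the address schedule of the four oblivious routing protocols, and rely on uniform permutations for the stash. The paper's own proof is terser --- it bullet-lists the block flows visible to \textit{ES} in each phase and then asserts that ``\textit{ES} exactly implements what rORAM server did'' with stash obliviousness ``ensured by random permutations'' --- whereas your explicit hybrid chain and your careful argument that the directly-downloaded stash/offset indices stay uniform across fetches because a fresh permutation (\textsf{PerReFunc}/\textsf{PathRangeRet}/\textsf{PriRangeEvict}) intervenes before the next observable access is precisely the step the paper leaves implicit; that is a welcome tightening, not a different route.
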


We provide the proof for above theorems in Appendix~\ref{app:analysis_enclave_ind}.

\subsection{Security Analysis for \textsf{OblivCDN} Entities}
\label{app:analysis_enclave_ind}

We provide the security proof for each individual parts of \textsf{OblivCDN}, starting from \textit{Enclave} in the \textit{CP} (ref. Theorem~\ref{theo:cp}) as follows:

%


\begin{proof}\textit{(Security of the \textit{Enclave}).}
We follow the real/ideal paradigm to consider follow two games:
In the real world game, $\mathcal{A}_\textit{CP}$ interacts with a real system implemented in the \textit{CP}, called $\Pi_{\textit{CP}}$, which runs the algorithms for \textit{CP} as stated in Algorithm~\ref{oblivcdn}.
In the ideal world game, $\mathcal{A}_\textit{CP}$ interacts with a simulator $\mathcal{S}_{\textit{CP}}$ simulating the system.
In both experiments, the adversary provides the same ORAM parameters ($N,Z,B,R$) and can supply any number of same-length \textsf{video}s operated in \textbf{Upload}, \textbf{Fetch}, and \textbf{Sync} operations.

We start from the ideal world game, where the simulator can be constructed as follows:

\noindent \ul{\textbf{Setup}:}
\begin{enumerate}[leftmargin=*]
	\item Runs the Path ORAM simulator $\mathcal{S}_{ORAM}$ with $N,Z,B$ to initiate data structures \textsf{VideoMap} and \textsf{BlockStateMap};
	\item Runs the rORAM simulator $\mathcal{S}_{rORAM}$ with $N,Z,B,R$ to initiate $\textsf{BlockTracker}_r$, $\forall r \in [0, \textnormal{log}_2R]$;
	\item Initiates flat arrays $\textsf{MetaStash}_r \forall r \in [0, \textnormal{log}_2R]$;
	\item Sends instructions to \textit{ES}s to initialise the data structures with the same parameters;
\end{enumerate}
Throughout the \textbf{Setup} process, $\mathcal{A}_\textit{CP}$ observer exactly the ORAM parameters $N,Z,B,R$ from the memory allocations.

\noindent \ul{\textbf{Upload}:} 
\begin{enumerate}[leftmargin=*]
	\item Runs $\mathcal{S}_{ORAM}$ to update \textsf{VideoMap} and $\textsf{BlockStateMap}$ to track the states of newly added video and blocks;
	\item Runs $\mathcal{S}_{ORAM}$ to access $\textsf{BlockStateMap}$ and get block states;
	\item Selects randomly generated masks to encrypt all blocks;
	\item Insert blocks into $\textsf{MetaStash}_r \forall r \in [0, \textnormal{log}_2R]$ according to their state $s$;
	\item Write blocks into $S_r$ on \textit{ES};
\end{enumerate}
The \textbf{Upload} process still reveals the ORAM parameters $N,Z,B,R$ from the ORAM accesses as it always read fixed $\log_2N\times Z$ blocks for metadata accesses.
It additionally leaks range $2^r$ for each rORAM tree and stash since it always access $2^r$ blocks in those data structures.
No information will be revealed from the video blocks sent out to \textit{ES} since they are encrypted by random masks.

\noindent \ul{\textbf{Fetch}:} 
\begin{enumerate}[leftmargin=*]
	\item Runs $\mathcal{S}_{ORAM}$ to load $bid$ and states of given video ID from \textsf{VideoMap} and $\textsf{BlockStateMap}$;
	\item Sends the information to \textit{Subscriber}s;
	\item Runs $\mathcal{S}_{rORAM}$ to simulate the process of fetching $bid$s from $\textsf{BlockTracker}_r$ to $\textsf{MetaStash}_r$;
	\item Runs $\mathcal{S}_{ORAM}$ to access $\textsf{BlockStateMap}$ and get block states for blocks added into $\textsf{BlockTracker}_r$;
	\item Refers to the block state changes from $\mathcal{S}_{rORAM}$ to runs DPF simulator $\mathcal{S}_{DPF}$ to generate $2^r$ key pairs;
	\item Selects $2^r$ random masks as the re-encryption keys;
	\item Combines DPF key, random masks and block states as $2^r$ \textsf{BlockRangeRet} tokens and sends to \textit{CS}s;
	\item Runs $\mathcal{S}_{ORAM}$ to update the block states in $\textsf{BlockStateMap}$;
	\item Generates $|S_r|$ random permutations to permute $\textsf{MetaStash}_r$;
	\item Selects $|S_r|$ random masks as the re-encryption keys;
	\item Combines permutations and random masks as $|S_r|$ \textsf{PerReFunc} tokens and sends to \textit{CS}s;
	\item Runs $\mathcal{S}_{ORAM}$ to update the block states in $\textsf{BlockStateMap}$;
\end{enumerate}
During \textbf{Fetch}, the \textit{Enclave} reveals the ORAM parameters $N,Z,B,R$ from the ORAM accesses on its metadata as well as the number of tokens sent to \textit{CS}s.
Furthermore, it reveals range $2^r$ for each rORAM tree and stash from the above process.
Nonetheless, the token itself does not reveal any extra information, because the DPF key is simulated via the input field size, which is $|S_r|+\log_2N\times Z$ (depicted by ORAM parameters), the masks/permutations are selected randomly, and the location information (path, offset, and new ID) inside are simulated via the rORAM simulator $\mathcal{S}_{rORAM}$.

\noindent \ul{\textbf{Sync}:} 
\begin{enumerate}[leftmargin=*]
	\item Runs $\mathcal{S}_{rORAM}$ to simulate the process of fetching all $2^r$ paths from $\textsf{BlockTracker}_r$ to $\textsf{MetaStash}_r$;
	\item Runs $\mathcal{S}_{ORAM}$ to access $\textsf{BlockStateMap}$ and get block states for blocks added into $\textsf{BlockTracker}_r$;
	\item Refers to the block state changes from $\mathcal{S}_{rORAM}$ to select $2^r\times\log_2N\times Z$ random permutations;
	\item Selects $2^r\times\log_2N\times Z$ random masks as the re-encryption keys;
	\item Combines permutations and random masks as $2^r\times\log_2N\times Z$ \textsf{PathRangeRet} tokens and sends to \textit{CS}s;
	\item Runs $\mathcal{S}_{ORAM}$ to update the block states in $\textsf{BlockStateMap}$;
	\item Runs $\mathcal{S}_{rORAM}$ to simulate the process of evicting $bid$s from $\textsf{MetaStash}_r$ to $2^r$ paths in $\textsf{BlockTracker}_r$;
	\item Refers to the block state changes from $\mathcal{S}_{rORAM}$ to runs DPF simulator $\mathcal{S}_{DPF}$ to generate $2^r\times\log_2N\times Z$ key pairs;
	\item Selects $2^r\times\log_2N\times Z$ random masks as the re-encryption keys;
	\item Combines DPF key, random masks and block states as $2^r\times\log_2N\times Z$ \textsf{PriRangeEvict} tokens and sends to \textit{CS}s;
	\item Runs $\mathcal{S}_{ORAM}$ to update the block states in $\textsf{BlockStateMap}$;
\end{enumerate}
The \textbf{Sync} process also reveals the ORAM parameters $N,Z,B,R$ and range $2^r$ of rORAM tree and stash from the ORAM accesses on its metadata as well as the number of tokens sent to \textit{CS}s.
On the other hand, the tokens do not reveal extra information as they are all simulated with ORAM parameters or sampled uniformly at random.

We now prove that the view generated by  $\mathcal{S}_{\textit{CP}}$ in the ideal world is indistinguishable from the real world execution of $\Pi_{\textit{CP}}$, which runs the algorithms for \textit{CP} as stated in Algorithm~\ref{oblivcdn}, against the adversary $\mathcal{A}_\textit{CP}$.
The following sequence of hybrid executions $\mathcal{H}_0,\cdots,\mathcal{H}_3$ gradually change invocations of the simulators to real sub-protocols, which demonstrates that the above ideal world execution is indistinguishable from the real-world as long as all sub-protocols are proven to be secure:

\noindent \textbf{Hybrid 0.} 
We start with the ideal world as our initial hybrid, $\mathcal{H}_0$.

\noindent \textbf{Hybrid 1.} \textbf{Hybrid 1} is the same as \textbf{Hybrid 0}, except that the simulator $\mathcal{S}_{\textit{CP}}$ uses real PRP/PRF outputs to replace the permutations and encryption/re-encryption keys for \textbf{Upload}, \textbf{Fetch}, and \textbf{Sync}.
Security of pseudorandom functions holds that $\mathcal{A}_\textit{CP}$'s advantage in distinguishing $\mathcal{H}_0$ from $\mathcal{H}_1$ is $negl(\lambda)$.

\noindent \textbf{Hybrid 2.} \textbf{Hybrid 2} is the same as \textbf{Hybrid 1}, except that the simulator $\mathcal{S}_{\textit{CP}}$ uses a real DPF scheme to generate DPF keys in the tokens during \textbf{Fetch} and \textbf{Sync}.
The DPF security ensures that $\mathcal{A}_\textit{CP}$'s advantage in distinguishing $\mathcal{H}_1$ from $\mathcal{H}_0$ is $negl(\lambda)$.

\noindent \textbf{Hybrid 3.} \textbf{Hybrid 3} is the same as \textbf{Hybrid 2}, except that the simulator $\mathcal{S}_{\textit{CP}}$ uses a real Path ORAM/rORAM instance to enable access pattern hiding access.
Accordingly, subsequent accesses on the same-length requests cause the same distribution of RAM access patterns. 
Security of Path ORAM and rORAM implies that $\mathcal{A}_\textit{CP}$'s advantage in distinguishing $\mathcal{H}_3$ from $\mathcal{H}_2$ is $negl(\lambda)$.

The above hybrid executions show that the simulated view is indistinguishable from the real world.
Hence, we can conclude that the \textit{Enclave} in the \textit{CP} is data-oblivious, and $\mathcal{A}_\textit{CP}$ cannot infer sensitive information via memory access side-channels.
\end{proof}

For the adversary on \textit{CS}s, we will have the following proof for Theorem~\ref{theo:cs}:
\begin{proof}\textit{(Security of the \textit{CS}).}
We follow the real/ideal paradigm to consider follow two games:
In the real world game, an adversary $\mathcal{A}_{\textit{CS}_p}$ interacts with a real implementation of $\textit{CS}_{p \in \{1,2\}}$, running the algorithms $\Pi_{\textit{CS}_p}$ as stated in Algorithm~\ref{oblivcdn}, which consist of \textsf{PathRangeRet}, \textsf{PriRangeEvict}, \textsf{PerReFunc}, and \textsf{BlockRangeRet}. 
In the ideal world game, $\mathcal{A}_{\textit{CS}_p}$ interacts with a simulator $\mathcal{S}_{\textit{CS}_p}$ simulating the system.
In both experiments, the adversary provides the same ORAM parameters ($N,Z,B,R$) and can supply any number of same-length \textsf{video}s operated in \textbf{Upload}, \textbf{Fetch}, and \textbf{Sync} operations.

Again, we start from the ideal world game, where the simulator can be constructed as follows\footnote{Note that $\mathcal{S}_{\textit{CS}_p}$ only receives information in \textbf{Fetch}, and \textbf{Sync} operations, so we omit \textbf{Setup} and \textbf{Upload} operations as $\mathcal{A}_{\textit{CS}_p}$ only submit its inputs (ORAM parameters and videos) but cannot obtain a view on those operations}:

At the beginning of the simulation, the adversary chooses the $CS_p$ to compromise, denoted $\textsf{ComID} \in \{1,2\}$. Then,

\noindent \ul{\textbf{Fetch}:} 
\begin{enumerate}[leftmargin=*]
	\item Receives $2^r$ simulated tokens from the \textit{CP} for \textsf{BlockRangeRet};
	\item For each token, retrieves $\log_2N$ blocks from the designated path of $E_r$, the simulator responds with $\log_2N$ random blocks for each requests;
	\item Runs \textsf{BlockRangeRet} against retrieved blocks;
	\item Writes the result blocks to the designated locations of $S_r$. The simulator runs $\mathcal{S}_{rORAM}$ to write the corresponding block according to the instructions.
	\item Receives $|S_r|$ simulated tokens from the \textit{CP} for \textsf{PerReFunc};
	\item If $\textsf{ComID}=1$, retrieves the entire stash $S_r$ and receives $|S_r|$ random blocks from the simulator, permutes/re-encrypt all blocks inside, and sends to $\textit{CS}_2$;
	\item If $\textsf{ComID}=2,$, retrieves $|S_r|$ blocks from $\textit{CS}_1$ and receives $|S_r|$ random blocks from the simulator, permutes/re-encrypt all blocks inside, and store in $S_r$. The simulator applies the permutations and re-encryption keys to permute and re-encrypt the entire $S_r$ locally;
\end{enumerate}
During \textbf{Fetch}, $\mathcal{A}_{\textit{CS}_p}$ only can observe the number of tokens from the \textit{CP}and number of blocks from the $E_r$ and $S_r$, which directly correlates to the ORAM parameters.
On the other hand, the tokens and blocks received does not reveal any information, as they are all simulated or randomly selected by the simulator via the ORAM parameters (see our proof for Theorem~\ref{theo:csp}).

\noindent \ul{\textbf{Sync}:} 
\begin{enumerate}[leftmargin=*]
	\item Receives $2^r\times\log_2N\times Z$ simulated tokens from the \textit{CP} for \textsf{PathRangeRet};
	\item If $\textsf{ComID}=1$, retrieves $2^r\times\log_2N\times Z$ blocks from the designated locations of $E_r$, receives $2^r\times\log_2N\times Z$ random blocks from the simulator, permutes/re-encrypt all blocks inside, and sends to $\textit{CS}_2$;
	\item If $\textsf{ComID}=2,$, retrieves $2^r\times\log_2N\times Z$ blocks from $\textit{CS}_1$ and receives $2^r\times\log_2N\times Z$ random blocks from the simulator, permutes/re-encrypt all blocks inside, and store in $S_r$. The simulator runs $\mathcal{S}_{rORAM}$ to retrieve the corresponding blocks from $E_r$ to $S_r$ according to the instructions;
	\item Receives $2^r\times\log_2N\times Z$ simulated tokens from the \textit{CP} for \textsf{PriRangeEvict};
	\item Retrieves $2^r\times\log_2N\times Z$ blocks from the designated locations of $S_r$, the simulator responds with $2^r\times\log_2N\times Z$ random blocks for each requests;
	\item Runs \textsf{PriRangeEvict} against retrieved blocks;
	\item Writes the result blocks to the designated locations of $E_r$. The simulator runs $\mathcal{S}_{rORAM}$ to write the corresponding block according to the instructions.
\end{enumerate}
Similar to \textbf{Fetch}, $\mathcal{A}_{\textit{CS}_p}$ only can observe the number of tokens from the \textit{CP}and number of blocks from the $E_r$ and $S_r$, which directly correlates to the ORAM parameters in \textbf{Sync}.
No information is revealed from the simulated tokens or randomly selected blocks.

We now ready to prove that the view generated by $\mathcal{S}_{\textit{CS}_p}$ in the ideal world is indistinguishable from the real world execution of $\Pi_{\textit{CS}_p}$, which runs the algorithms for $\textit{CS}_p$ as stated in Algorithm~\ref{oblivcdn}, against the adversary $\mathcal{A}_{\textit{CS}_p}$ via the following sequence of hybrid executions $\mathcal{H}_0,\cdots,\mathcal{H}_4$.

\noindent \textbf{Hybrid 0.} 
We start with the ideal world as our initial hybrid, $\mathcal{H}_0$.

\noindent \textbf{Hybrid 1.} \textbf{Hybrid 1} is the same as \textbf{Hybrid 0}, except that the simulator $\mathcal{S}_{\textit{CS}_p}$  uses real PRP/PRF outputs to replace the permutations and encryption/re-encryption keys in the token.
Security of pseudorandom functions holds that $\mathcal{S}_{\textit{CS}_p}$'s advantage in distinguishing $\mathcal{H}_0$ from $\mathcal{H}_1$ is $negl(\lambda)$.

\noindent \textbf{Hybrid 2.} \textbf{Hybrid 2} is the same as \textbf{Hybrid 1}, except that the simulator $\mathcal{S}_{\textit{CS}_p}$  uses a real DPF scheme to generate DPF keys in the tokens.
The DPF security ensures that $\mathcal{S}_{\textit{CS}_p}$'s advantage in distinguishing $\mathcal{H}_1$ from $\mathcal{H}_0$ is $negl(\lambda)$.

\noindent \textbf{Hybrid 3.} \textbf{Hybrid 3} is the same as \textbf{Hybrid 2}, except that the simulator $\mathcal{S}_{\textit{CS}_p}$ forwards real blocks to $\textit{CS}_\textsf{ComID}$ upon the requests.
Since all blocks are encrypted with one-time PRF masks, the security of PRF ensures that $\mathcal{A}_{\textit{CS}_p}$'s advantage in distinguishing $\mathcal{H}_3$ from $\mathcal{H}_2$ is $negl(\lambda)$.

\noindent \textbf{Hybrid 4.} \textbf{Hybrid 4} is the same as \textbf{Hybrid 3}, except that the simulator $\mathcal{S}_{\textit{CS}_p}$ uses real oblivious block routing protocols to replace $\mathcal{S}_{rORAM}$.
For the compromised party $\textit{CS}_\textsf{ComID}$, our prior discussion pinpoints that it only receives instructions generated from rORAM oeprations and observes random blocks locally.
Meanwhile, it does not know actual block operations on \textit{ES}s. 
Hence, $\mathcal{A}_{\textit{CS}_p}$'s advantage in distinguishing $\mathcal{H}_4$ from $\mathcal{H}_3$ is $negl(\lambda)$.

The above hybrid executions show that the simulated view is computationally indistinguishable from the real world.
Hence, we can conclude that the $\mathcal{A}_{\textit{CS}_p}$ cannot infer sensitive information via the view of compromised server.
\end{proof}

Finally, we discuss the security of \textit{ES}s against the adversary on $\mathcal{A}_{\textit{ES}}$,in following proof for Theorem~\ref{theo:es}:

\begin{proof}\textit{(Security of the \textit{ES}s).}
Similarly, we start from constructing a simulator to simulate the view of \textit{ES}.
Note that \textit{ES} serves as data store and purely sends and receives blocks from other entities, which can be simply described as follows:

\noindent \ul{\textbf{Setup}:} 
\begin{enumerate}[leftmargin=*]
	\item Initialises $log_2R$ ORAM tree with $N\times Z\times B$ space;
	\item Initialises $log_2R$ stash with $|S_r|$ space; 
\end{enumerate}

\noindent \ul{\textbf{Upload}:} 
\begin{enumerate}[leftmargin=*]
	\item Receives random blocks and put into $S_r$;
\end{enumerate}

\noindent \ul{\textbf{Fetch}:} 
\begin{enumerate}[leftmargin=*]
	\item Send requested $2^r$ blocks to \textit{Subscriber}s for each request
	\item Send requested $2^r\times\log_2N$ blocks from $E_r$ to \textit{CS}s;
	\item Receives $2^r$ random blocks and put into $S_r$;
	\item Sends $|S_r|$ blocks from $S_r$ to $\textit{CS}_1$;
	\item Receives $|S_r|$ random blocks from $\textit{CS}_2$ and put into $S_r$;
\end{enumerate}

\noindent \ul{\textbf{Sync}:} 
\begin{enumerate}[leftmargin=*]
	\item Send requested $2^r\times \log_2N\times Z$ blocks from $E_r$ to $\textit{CS}_1$;
	\item Receives $2^r\times \log_2N\times Z$ random blocks and put into $S_r$;
	\item Sends $|S_r|$ blocks in $S_r$ to $\textit{CS}_1$;
	\item Receives $2^r\times \log_2N\times Z$ random blocks from $\textit{CS}_2$ and put into $E_r$;
\end{enumerate}

From the above description, we can see that \textit{ES} exactly implements what rORAM server did as in the original protocol~\cite{Chakraborti191}.
In addition, data-oblivious on client-side structure $S_r$ is ensured by random permutations.
Hence, we can conclude that \textit{ES} inherits the security guarantees of rORAM and is data-oblivious against the adversary who compromises \textit{ES}s only.
\end{proof}

\subsection{Security Analysis for \textsf{OblivCDN} System}
\label{app:analysis_ind}

In this section, we take real-world deployment considerations into the security model and further discuss the following two cases:
The first case is called colluding $\textit{ES}$s, where the \textit{CDN service provider} deploys one $\textit{CS}$ on the \textit{ISP partner} who hosts \textit{ES}s, which will further enable the \textit{ISP partner} to collect information from $\textit{CS}$.
Nonetheless, the following theorem indicates that the above adversary (denoted as $\mathcal{A}_{\textit{ISP}}$) cannot learn extra information other than the ORAM parameters ($N,Z,B,R$) as far as it only can compromise one $\textit{CS}$.

\begin{theorem}\label{theo:isp}
Let $\Pi$ be the scheme of ({\textbf{Setup}, \textbf{Upload}, \textbf{Fetch}, \textbf{Sync}}). 
The adversary on the \textit{ISP partner} ($\mathcal{A}_{\textit{ISP}}$) chooses the Path ORAM and rORAM parameters ($N,Z,B,R$) for \textbf{Setup} and selects a list of same-length \textsf{video}s for the \textit{CDN service provider} to \textbf{Upload}.
Later, $\mathcal{A}_{\textit{ISP}}$ can arbitrarily \textbf{Fetch} videos and observe the tokens/blocks sent in and out on one $\textit{CS}$ and $\textit{ES}$.
\textsf{OblivCDN} is data-oblivious against the $\mathcal{A}_{\textit{ISP}}$, assuming that rORAM is data-oblivious, the DPF and PRF constructions are secure, and the permutations are selected uniformly at random.
\end{theorem}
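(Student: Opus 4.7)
The plan is to follow the real/ideal paradigm established in the proofs of Theorems~\ref{theo:cs} and~\ref{theo:es}, constructing a combined simulator $\mathcal{S}_{\textit{ISP}}$ that jointly simulates the view of one compromised $\textit{CS}_p$ (with identity $\textsf{ComID}\in\{1,2\}$) together with the view of all \textit{ES}s. At a high level, $\mathcal{S}_{\textit{ISP}}$ interleaves the two simulators $\mathcal{S}_{\textit{CS}_p}$ and $\mathcal{S}_{\textit{ES}}$ already built in Appendix~\ref{app:analysis_enclave_ind}, keyed by the same ORAM parameters $(N,Z,B,R)$ that $\mathcal{A}_{\textit{ISP}}$ supplies. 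The only nontrivial subtask is to show that the merged view remains computationally indistinguishable from the real execution, since the adversary can now potentially correlate messages flowing from the compromised $\textit{CS}$ to \textit{ES}s with block placements at rest on \textit{ES}s.

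First I would specify the combined observable view precisely. During \textbf{Fetch}, $\textit{CS}_{\textsf{ComID}}$ receives tokens from the \textit{Enclave} (DPF keys, one share of each re-encryption mask, permutations, and plaintext location information generated by $\mathcal{S}_{rORAM}$) and performs reads/writes on \textit{ES}s accordingly; in addition, $\mathcal{A}_{\textit{ISP}}$ observes the traffic from the non-compromised $\textit{CS}_{3-\textsf{ComID}}$ to \textit{ES}s, which consists of permuted and re-encrypted blocks delivered to $S_r$ or $E_r$. During \textbf{Sync}, a parallel interaction occurs for \textsf{PathRangeRet} and \textsf{PriRangeEvict}. The simulator will supply fresh uniform strings in place of every block appearing on the wire or at rest.

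I would then chain a hybrid sequence $\mathcal{H}_0,\ldots,\mathcal{H}_4$ analogous to the ones used for the individual parties. In $\mathcal{H}_1$, replace simulated PRF/PRP outputs (re-encryption masks and permutations) with real ones; PRF/PRP security gives $\textsf{negl}(\lambda)$ advantage. In $\mathcal{H}_2$, replace simulated DPF keys with outputs of $\textsf{DPF.Gen}$; DPF security guarantees that a single share reveals nothing about the queried point. In $\mathcal{H}_3$, replace the random blocks fed to $\mathcal{A}_{\textit{ISP}}$ with the actual encrypted blocks produced by the protocol. In $\mathcal{H}_4$, replace $\mathcal{S}_{rORAM}$ calls with real rORAM accesses on \textit{ES}s; rORAM obliviousness against a single server closes the remaining gap.

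The hard part is \textbf{Hybrid 3}: showing that the re-encrypted blocks remain pseudorandom in $\mathcal{A}_{\textit{ISP}}$'s view even though it now simultaneously controls one routing endpoint \emph{and} the storage endpoint. The argument hinges on the non-collusion assumption between $\textit{CS}_1$ and $\textit{CS}_2$: for every block moved, the re-encryption mask $m = G(c)\oplus G(c')$ is secret-shared as $m_1 \xleftarrow{\$} \{0,1\}^\lambda$ and $m_2 = m\oplus m_1$, and $\mathcal{A}_{\textit{ISP}}$ sees only one share, so each block value produced by the other party is one-time-pad encrypted under the unknown share. Analogously, the permutations in \textsf{PerReFunc} and \textsf{PathRangeRet} are decomposed via an intermediate location as $\omega_1\colon s\mapsto s^*$ and $\omega_2\colon s^*\mapsto s'$, so the compromised party observes only half of the composite mapping $\omega_2\circ\omega_1$, hiding the origin-to-destination correspondence even when the adversary sees both the initial read on \textit{ES} and the final write on \textit{ES}. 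Once this step is discharged, the joint indistinguishability follows by the triangle inequality across the hybrid chain, and the proof concludes by invoking the local-view simulators of Theorems~\ref{theo:cs} and~\ref{theo:es}.
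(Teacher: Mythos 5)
Your proposal is correct and matches the paper's overall strategy (real/ideal paradigm, combined simulator keyed by $(N,Z,B,R)$, non-collusion between the two $\textit{CS}$s as the linchpin). The one place you diverge is in how the indistinguishability is discharged: you re-run the full hybrid chain $\mathcal{H}_0,\ldots,\mathcal{H}_4$ for the merged view and single out ``Hybrid 3'' (real encrypted blocks replacing random blocks) as the delicate step, arguing directly that the secret-shared re-encryption mask $m=G(c)\oplus G(c')$ with $m_1\xleftarrow{\$}\{0,1\}^\lambda$, $m_2=m\oplus m_1$, together with the split permutation $\omega_2\circ\omega_1$, prevents $\mathcal{A}_{\textit{ISP}}$ from correlating the read on \textit{ES} with the subsequent write. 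The paper instead takes a compositional shortcut: it observes that the $\mathcal{A}_{\textit{ISP}}$ simulator is \emph{structurally identical} to the $\mathcal{A}_{\textit{CS}_p}$ simulator with $\mathcal{S}_{rORAM}$ replaced by real oblivious block routing, and then invokes Theorem~\ref{theo:es} (the \textit{ES} view is indistinguishable from the rORAM server view) to conclude that the merged adversary gains no advantage over $\mathcal{A}_{\textit{CS}_p}$, bypassing a fresh hybrid derivation. Your route is more explicit and arguably makes the role of non-collusion clearer --- in particular, it makes transparent that correlating the block at rest on \textit{ES} before and after routing requires knowledge of the \emph{other} party's share $m_{3-\textsf{ComID}}$ or $\omega_{3-\textsf{ComID}}$ --- but it redundantly re-proves what the paper reuses from Theorems~\ref{theo:cs} and~\ref{theo:es}. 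Either presentation is sound.
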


\begin{proof}
In this proof, we firstly describe how to construct a simulator to simulate the view of $\mathcal{A}_{\textit{ISP}}$.
	
At the beginning of the simulation, the adversary still chooses the $\textit{CS}_p$ to compromise, denoted $\textsf{ComID} \in \{1,2\}$. Then,

\noindent \ul{\textbf{Setup}:} 
\begin{enumerate}[leftmargin=*]
	\item Initialises $log_2R$ ORAM tree with $N\times Z\times B$ space;
	\item Initialises $log_2R$ stash with $|S_r|$ space; 
\end{enumerate}

\noindent \ul{\textbf{Upload}:} 
\begin{enumerate}[leftmargin=*]
	\item Receives random blocks and put into $S_r$;
\end{enumerate}

\noindent \ul{\textbf{Fetch}:} 
\begin{enumerate}[leftmargin=*]
	\item Receives $2^r$ simulated tokens from the \textit{CP} for \textsf{BlockRangeRet};
	\item For each token, retrieves $\log_2N$ blocks from the designated path of $E_r$
	\item Runs \textsf{BlockRangeRet} against retrieved blocks;
	\item Writes the result blocks to the designated locations of $S_r$. 
	\item Receives $|S_r|$ simulated tokens from the \textit{CP} for \textsf{PerReFunc};
	\item If $\textsf{ComID}=1$, retrieves the entire stash $S_r$, permutes/re-encrypt all blocks inside, and sends to $\textit{CS}_2$;
	\item If $\textsf{ComID}=2$, retrieves $|S_r|$ blocks from $\textit{CS}_1$ and receives $|S_r|$ random blocks from the simulator, permutes/re-encrypt all blocks inside, and store in $S_r$. The simulator applies the permutations and re-encryption keys to permute and re-encrypt the entire $S_r$ locally;
\end{enumerate}

\noindent \ul{\textbf{Sync}:} 
\begin{enumerate}[leftmargin=*]
	\item Receives $2^r\times\log_2N\times Z$ simulated tokens from the \textit{CP} for \textsf{PathRangeRet};
	\item If $\textsf{ComID}=1$, retrieves $2^r\times\log_2N\times Z$ blocks from the designated locations of $E_r$, permutes/re-encrypt all blocks inside, and sends to $\textit{CS}_2$;
	\item If $\textsf{ComID}=2,$, retrieves $2^r\times\log_2N\times Z$ blocks from $\textit{CS}_1$ and receives $2^r\times\log_2N\times Z$ random blocks from the simulator, permutes/re-encrypt all blocks inside, and store in $S_r$. The simulator runs $\mathcal{S}_{rORAM}$ to retrieve the corresponding blocks from $E_r$ to $S_r$ according to the instructions;
	\item Receives $2^r\times\log_2N\times Z$ simulated tokens from the \textit{CP} for \textsf{PriRangeEvict};
	\item Retrieves $2^r\times\log_2N\times Z$ blocks from the designated locations of $S_r$;
	\item Runs \textsf{PriRangeEvict} against retrieved blocks;
	\item Writes the result blocks to the designated locations of $E_r$. 
\end{enumerate}

For the above description, we can see that the simulation is exactly the same as the simulator of $\textit{CS}_p$ when replacing the $\mathcal{S}_{rORAM}$ with real oblivious block routing operations.
This is because we have proven that the view on $\textit{ES}$s is indistinguishable from the view of the rORAM server in Theorem~\ref{theo:es}.
It means that for an adversary who compromises $\textit{CS}_p$, it cannot learn extra information by additionally compromising $\textit{ES}$s, and vice versa.
Hence, the adversary $\mathcal{A}_{\textit{ISP}}$ does not gain extra advantage than $\mathcal{A}_{\textit{CS}_p}$, and \textsf{OblivCDN} is still data-oblivious against the new adversary.
\end{proof}

We further discuss the case where the \textit{CDN service provider} employs $\textit{CP}$ and one $\textit{CS}_p$ on the two servers from the same \textit{cloud service provider}.
In this case, the adversary ($\mathcal{A}_{\textit{CSP}}$) can further observe the memory access pattern and tokens from $\textit{CP}$, as well as the oblivious block routing process on $\textit{CS}_p$.
The following theorem captures the security of \textsf{OblivCDN} against $\mathcal{A}_{\textit{CSP}}$:

\begin{theorem}\label{theo:csp}
Let $\Pi$ be the scheme of ({\textbf{Setup}, \textbf{Upload}, \textbf{Fetch}, \textbf{Sync}}). 
The adversary on the \textit{cloud service provider} ($\mathcal{A}_{\textit{CSP}}$) chooses the Path ORAM and rORAM parameters ($N,Z,B,R$) for \textbf{Setup} and selects a list of same-length \textsf{video}s for the \textit{CDN service provider} to \textbf{Upload}.
Later, $\mathcal{A}_{\textit{CSP}}$ can arbitrarily \textbf{Fetch} videos and observe the access pattern on \textit{CP} and the tokens/blocks sent in and out on one $\textit{CS}$.
\textsf{OblivCDN} is data-oblivious against the $\mathcal{A}_{\textit{CSP}}$, assuming that rORAM is data-oblivious, the DPF and PRF constructions are secure, and the permutations are selected uniformly at random.
\end{theorem}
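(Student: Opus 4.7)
The plan is to construct a combined simulator $\mathcal{S}_{\textit{CSP}}$ by composing the simulator $\mathcal{S}_{\textit{CP}}$ from Theorem~\ref{theo:cp} with the simulator $\mathcal{S}_{\textit{CS}_p}$ from Theorem~\ref{theo:cs}, and then argue indistinguishability through a hybrid sequence. The key observation that makes composition viable is that, under the non-collusion assumption, the companion server $\textit{CS}_{3-p}$ remains honest-but-curious from a third party, so every message exchanged between $\textit{CS}_p$ and $\textit{CS}_{3-p}$ (i.e., the permuted-and-masked blocks in \textsf{PerReFunc} and \textsf{PathRangeRet}) is blinded by a uniformly random share that $\mathcal{A}_{\textit{CSP}}$ never sees, and can therefore be replaced by a uniformly random block in the simulation.

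First, I would have $\mathcal{S}_{\textit{CSP}}$ run the entire $\mathcal{S}_{\textit{CP}}$ procedure for \textbf{Setup}, \textbf{Upload}, \textbf{Fetch}, and \textbf{Sync} to generate the memory-access view inside the \textit{Enclave}, along with the DPF keys, PRF masks, and permutations bundled into tokens. These tokens are produced exactly as in $\mathcal{S}_{\textit{CP}}$ (i.e., via $\mathcal{S}_{\textsf{DPF}}$ from the DPF input-field size and via uniformly sampled PRF/permutation outputs), which ensures the $\textit{CP}$-side view is identically distributed to the ideal world of Theorem~\ref{theo:cp}. Second, I would feed the simulated tokens directly to the $\mathcal{S}_{\textit{CS}_p}$ routine. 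Since the same simulator produces both sides, all block locations, path indices, and stash offsets referenced in the tokens are drawn from $\mathcal{S}_{rORAM}$ and therefore remain consistent across the two local views. Third, I would replace the contents read from $E_r$ and $S_r$ with random blocks (as in $\mathcal{S}_{\textit{CS}_p}$), and have the simulator output whatever share $\textit{CS}_p$ would hand off to $\textit{CS}_{3-p}$ using the real protocol applied to these random blocks.

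The hybrid argument proceeds in the natural order: $\mathcal{H}_0$ is the combined ideal world above; $\mathcal{H}_1$ switches simulated PRF/PRP outputs to real ones (indistinguishable by PRF/PRP security); $\mathcal{H}_2$ switches simulated DPF keys to real DPF keys (indistinguishable by DPF privacy of the single key held by the adversary, which is exactly the hypothesis that $\mathcal{A}_{\textit{CSP}}$ sees only one of $k_1, k_2$); $\mathcal{H}_3$ replaces $\mathcal{S}_{rORAM}$ with real Path ORAM / rORAM accesses on the metadata structures inside the \textit{Enclave} (indistinguishable by rORAM obliviousness); and $\mathcal{H}_4$ replaces the random blocks exchanged with the honest $\textit{CS}_{3-p}$ by the real masked blocks (indistinguishable because the mask share $m_{3-p}$ held by the honest party is uniform and independent of $\mathcal{A}_{\textit{CSP}}$'s view). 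The final hybrid is the real world.

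The main obstacle I anticipate is $\mathcal{H}_4$: we must rule out any cross-view correlation between (i) the DPF key and masks that $\mathcal{A}_{\textit{CSP}}$ observes inside the \textit{Enclave} when tokens are generated and (ii) the blocks the adversary sees after \textsf{BlockRangeRet}/\textsf{PriRangeEvict}/\textsf{PathRangeRet}/\textsf{PerReFunc} executes on $\textit{CS}_p$. The argument is that, although $\mathcal{A}_{\textit{CSP}}$ learns the adversary-side mask $m_p$ and DPF key $k_p$, the companion share $m_{3-p}$ is independent and uniform, so the block $b_p \oplus m_p$ that $\textit{CS}_p$ writes out is, conditioned on the adversary's view, distributed as a uniformly random string; symmetrically, the incoming message from $\textit{CS}_{3-p}$ is an encryption under the unseen share. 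A careful reduction to PRF security of $G$ (for the re-encryption mask $G(c)\oplus G(c')$) closes this step. Once $\mathcal{H}_4$ is handled, the composition yields that the joint view of $\textit{CP}$ and $\textit{CS}_p$ can be simulated from only the ORAM parameters $(N,Z,B,R)$ together with the length profile of uploaded videos, which establishes data-obliviousness of \textsf{OblivCDN} against $\mathcal{A}_{\textit{CSP}}$.
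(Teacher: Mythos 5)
Your proof takes essentially the same approach as the paper's: compose the simulators from Theorems~\ref{theo:cp} and~\ref{theo:cs}, observe that the \textit{CP}-side access pattern is already oblivious so the joint view collapses to the $\textit{CS}_p$ view with real tokens (the paper's ``Hybrid 3''), and conclude $\mathcal{A}_{\textit{CSP}}$ gains no advantage over $\mathcal{A}_{\textit{CS}_p}$. You just spell out the hybrid chain and the mask-share/non-collusion argument more explicitly than the paper's terse sketch, which is a faithful elaboration rather than a different route.
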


\begin{proof}
Since the proof structure for Theorem~\ref{theo:csp} is 	similar to that for Theorem~\ref{theo:isp}, we provide a proof sketch here.
In particular, the simulator for $\mathcal{A}_{\textit{CSP}}$ aims to generate a joint view on the access pattern of \textit{CP} for all \textsf{OblivCDN} operations and the tokens/blocks sent to/by $\textit{CS}_p$ under the given instructions.
Since the access pattern on \textit{CP} is oblivious, the above view is exactly the same as the simulator of $\textit{CS}_p$ when replacing the simulated tokens with with real tokens generated by PRF/PRP/DPF instances (i.e., \textbf{Hybrid 3} for Theorem~\ref{theo:cs}).
This means that for an adversary who compromises $\textit{CS}_p$, it cannot learn extra information by additionally compromising $\textit{CP}$, and vice versa.
As a result, the adversary $\mathcal{A}_{\textit{CSP}}$ does not gain extra advantage than $\mathcal{A}_{\textit{CS}_p}$, and \textsf{OblivCDN} is still data-oblivious against the new adversary.
\end{proof}

\begin{table}[!t]
\small
	\centering
	\caption{Computation costs (second) $|B|=2$ KB}
	\vspace{-5pt}
	\begin{tabular}{|c|c|c|c|c|}
	\hline
		Building Block & \textit{CP} & $\textit{CS}_1$ & $\textit{CS}_2$ & $E_{r=8}$ \& $S_{r=8}$ \\
	\hline
		\textsf{BlockRangeRet} & $0.1$ & $0.031$ & $0.028$  & $1.5$\\

	\hline
		\textsf{PerReFunc} & $8.5$ & $2.2$ & $2.7$ & $0.2$ \\
	\hline
		\textsf{PathRangeRet} & $6.7$ & $1.07$  & $1.3$  & $1.7$ \\
	\hline
		\textsf{PriRangeEvict} & $5.3$ & $368.5$ & $369.5$  & $0.2$ \\
	\hline
	\end{tabular}
    \vspace{-5pt}
	\label{tlb:micro_info1}
\end{table}

\section{Cost Breakdown for Building Blocks}
\label{app_sec:buildingblock}

We show the breakdown operation costs of the \textit{CP} (with the \textit{Enclave} included), each of $\textit{CS}_{p \in \{1,2\}}$, and the \textit{ES} involved in the four oblivious building blocks.
Table~\ref{tlb:micro_info1} and~\ref{tlb:micro_info2} show the costs when executing the building blocks in the ORAM $E_r$ and its stash $S_r$ with $r=8$, which support the range access of $2^8$ blocks with block size $2$ and $4$ KB, respectively.

\begin{table}[!t]
\small
	\centering
	\caption{Computation costs (second)  $|B|=4$ KB}
	\vspace{-5pt}
	\begin{tabular}{|c|c|c|c|c|}
	\hline
		Building Block & \textit{CP} & $\textit{CS}_1$ & $\textit{CS}_2$ & $E_{r=8}$ \& $S_{r=8}$ \\
	\hline
		\textsf{BlockRangeRet} & $0.1$ & $0.057$ & $0.063$  & $2.6$\\
	\hline
		\textsf{PerReFunc} & $8.5$ & $4.5$ & $5.4$ & $0.25$ \\
	\hline
		\textsf{PathRangeRet} & $6.7$ & $2.1$  & $2.6$  & $2.9$ \\
	\hline
		\textsf{PriRangeEvict} & $5.3$ & $735.3$ & $737.3$  & $0.23$ \\
	\hline
	\end{tabular}
    \vspace{-10pt}
	\label{tlb:micro_info2}
\end{table}

\begin{figure*}[!t]
	\centering
	\subfloat[Inter-continental, $r=0$]{
		\label{fig:inter_comm}
		\includegraphics[width=0.23\linewidth]{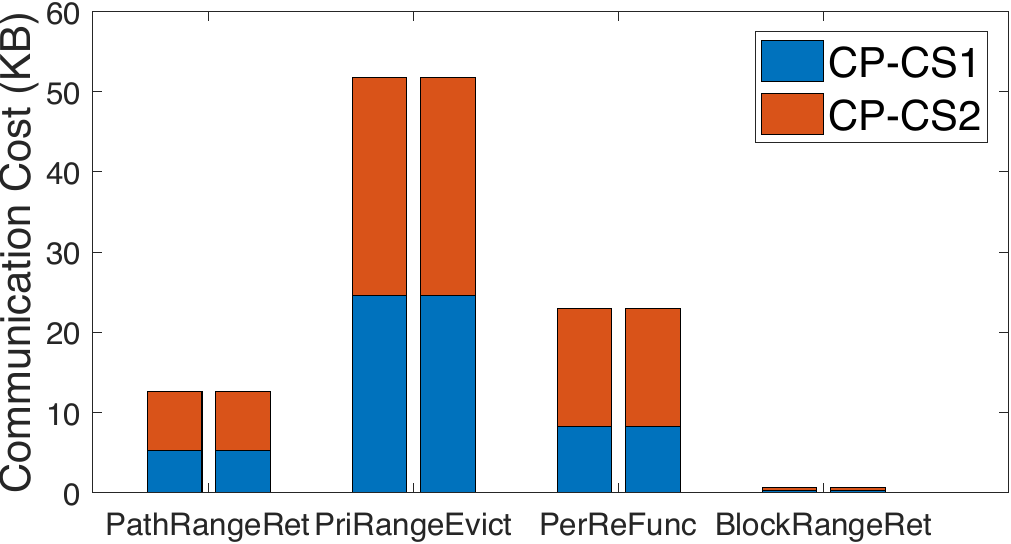}
	}
	\hfill
	\subfloat[Local region, $r=0$]{
		\label{fig:intra_comm}
		\includegraphics[width=0.23\linewidth]{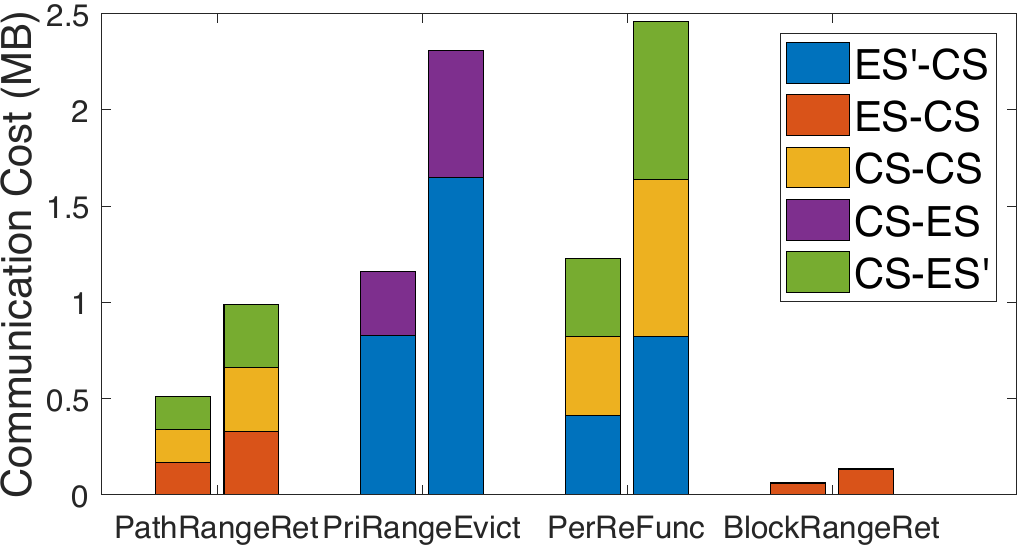}
	}
    \hfill
	\subfloat[Inter-continental, $r=1$]{
		\label{fig:inter_comm}
		\includegraphics[width=0.23\linewidth]{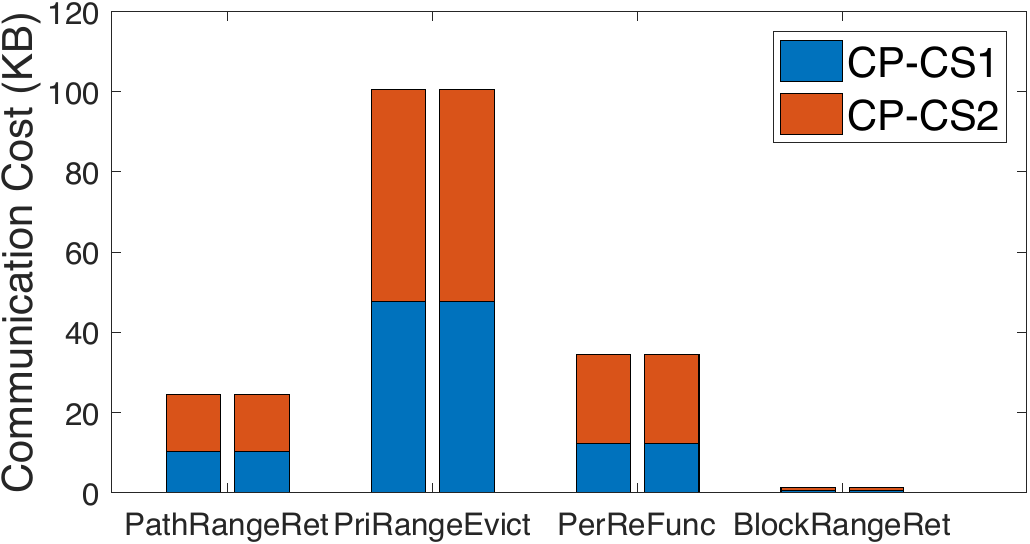}
	}
	\hfill
	\subfloat[Local region, $r=1$]{
		\label{fig:intra_comm}
		\includegraphics[width=0.23\linewidth]{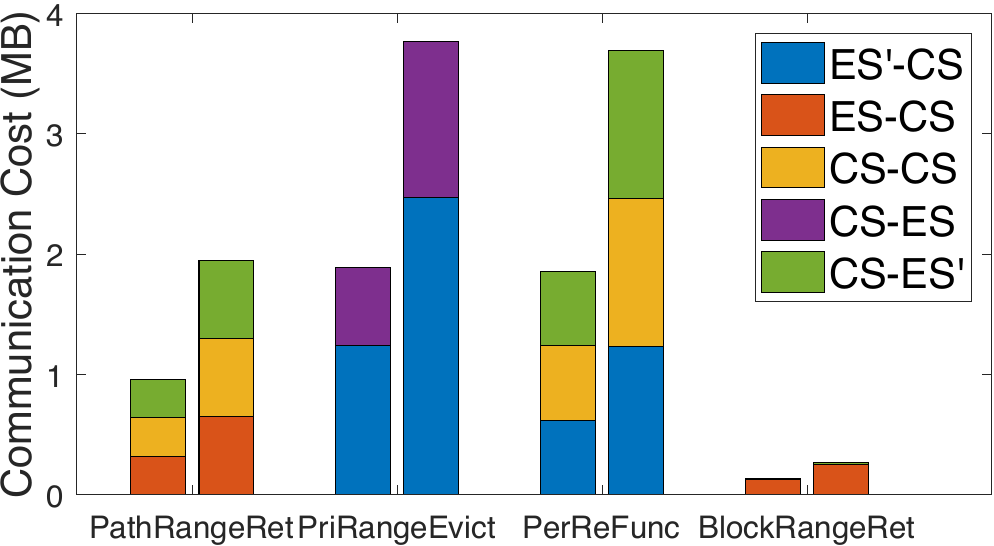}
	}
	
	\subfloat[Inter-continental, $r=2$]{
		\label{fig:inter_comm}
		\includegraphics[width=0.23\linewidth]{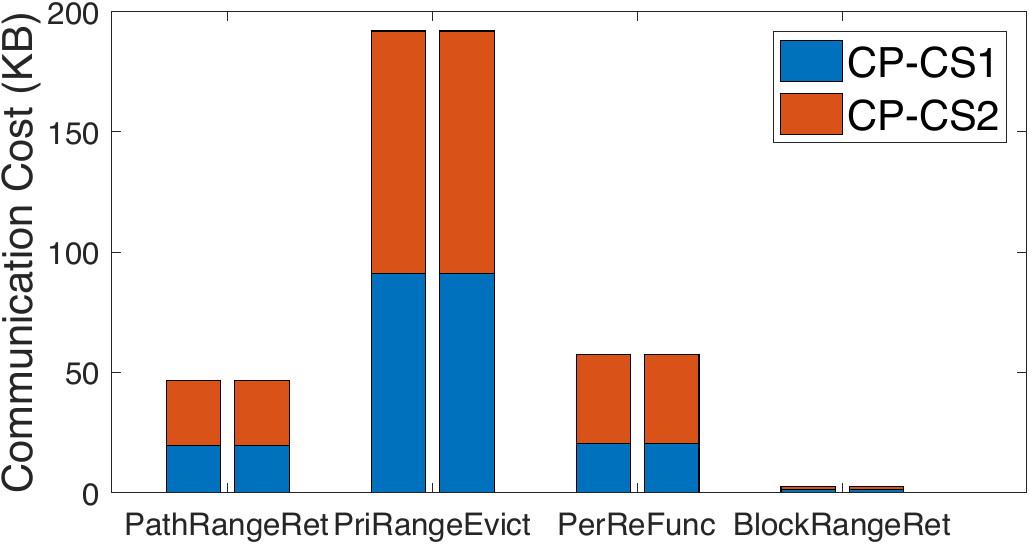}
	}
	\hfill
	\subfloat[Local region, $r=2$]{
		\label{fig:intra_comm}
		\includegraphics[width=0.23\linewidth]{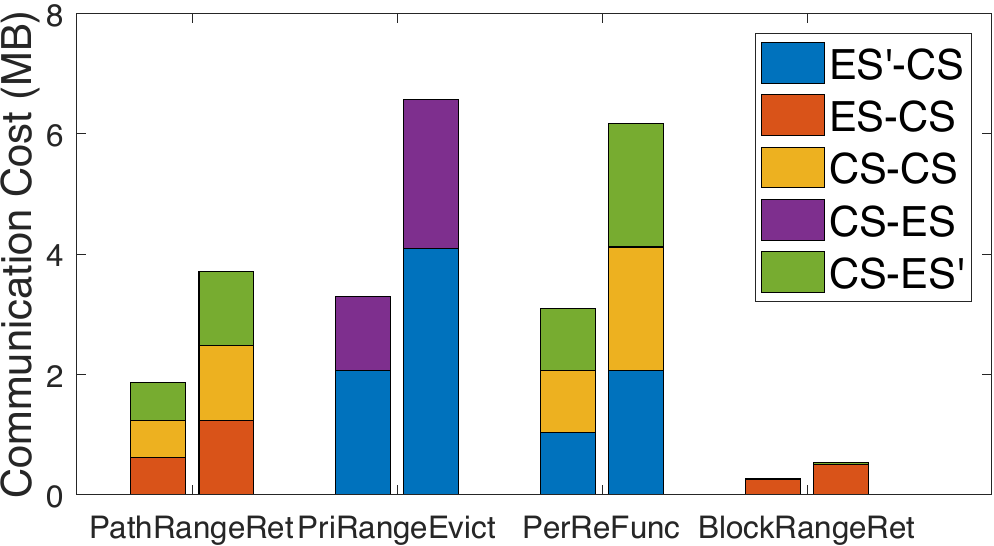}
	}
	\hfill
	\subfloat[Inter-continental, $r=3$]{
		\label{fig:inter_comm}
		\includegraphics[width=0.23\linewidth]{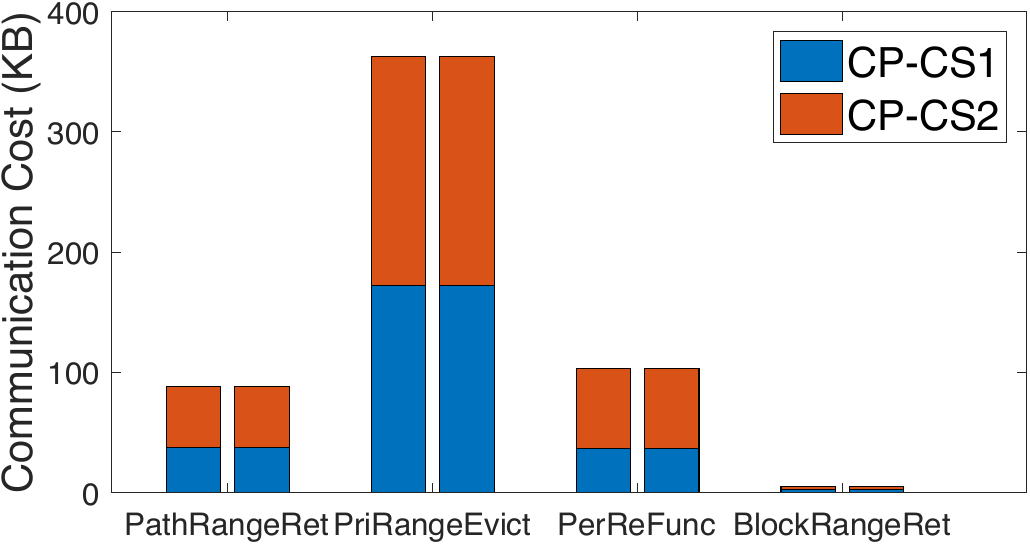}
	}
	\hfill
	\subfloat[Local region, $r=3$]{
		\label{fig:intra_comm}
		\includegraphics[width=0.23\linewidth]{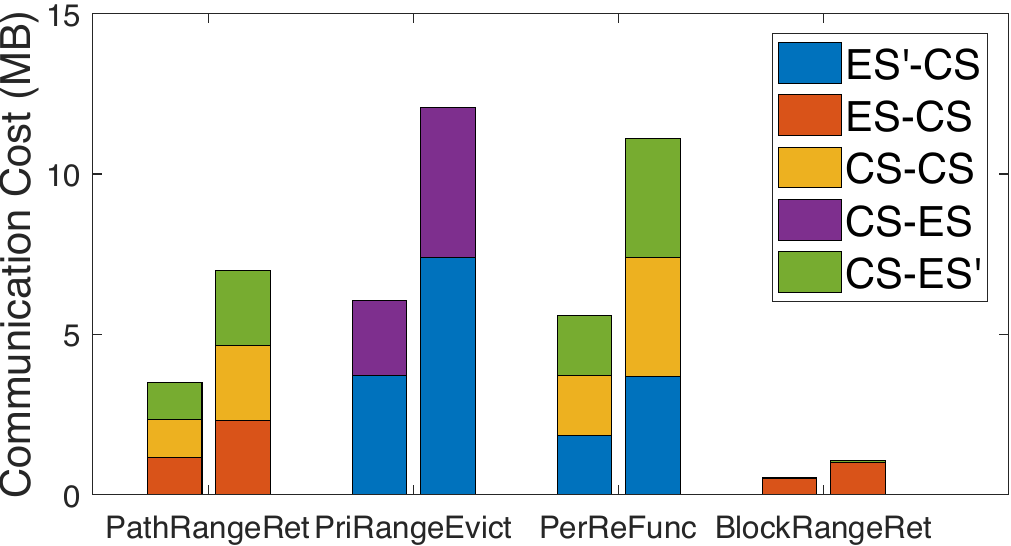}
	}
	
	\subfloat[Inter-continental, $r=4$]{
		\label{fig:inter_comm}
		\includegraphics[width=0.23\linewidth]{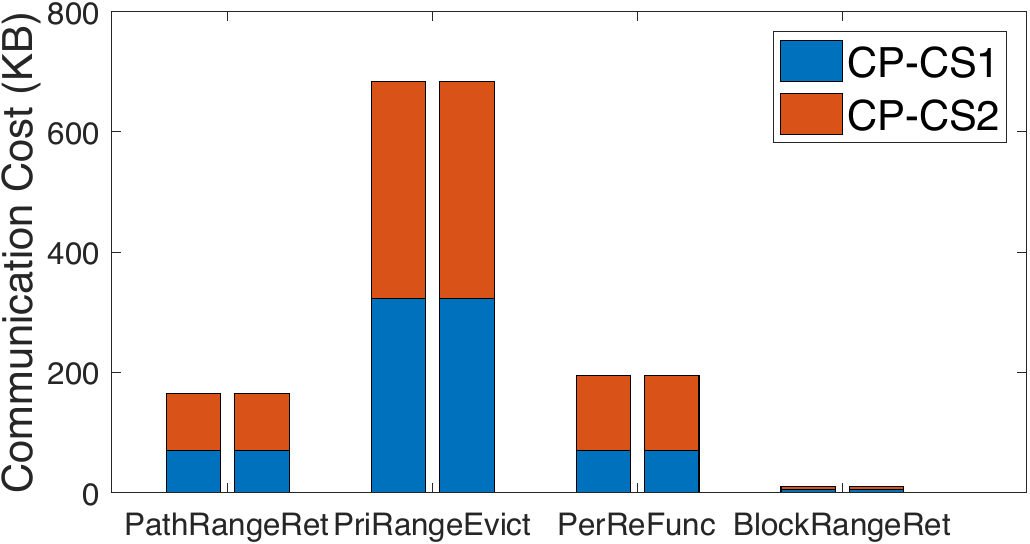}
	}
	\hfill
	\subfloat[Local region, $r=4$]{
		\label{fig:intra_comm}
		\includegraphics[width=0.23\linewidth]{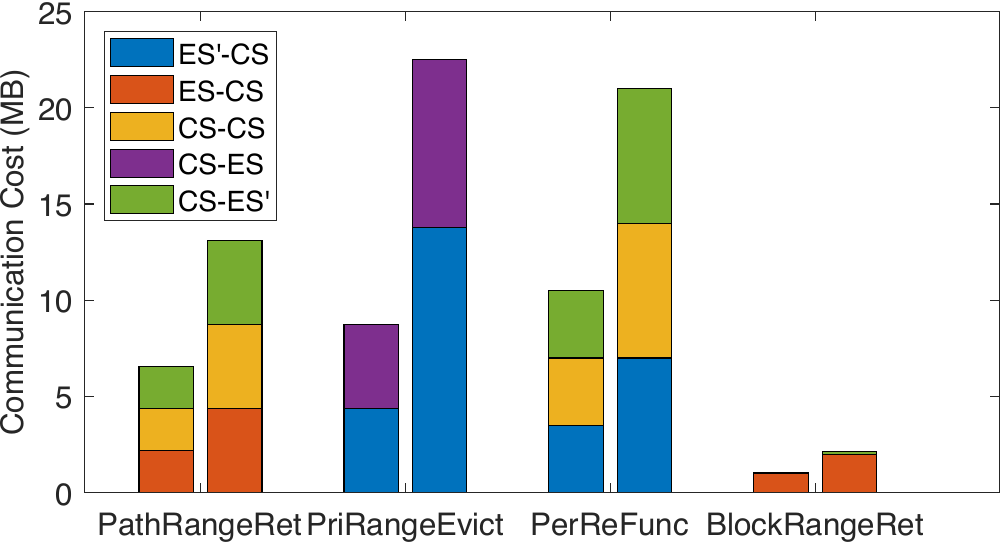}
	}
	\hfill
	\subfloat[Inter-continental, $r=5$]{
		\label{fig:inter_comm}
		\includegraphics[width=0.23\linewidth]{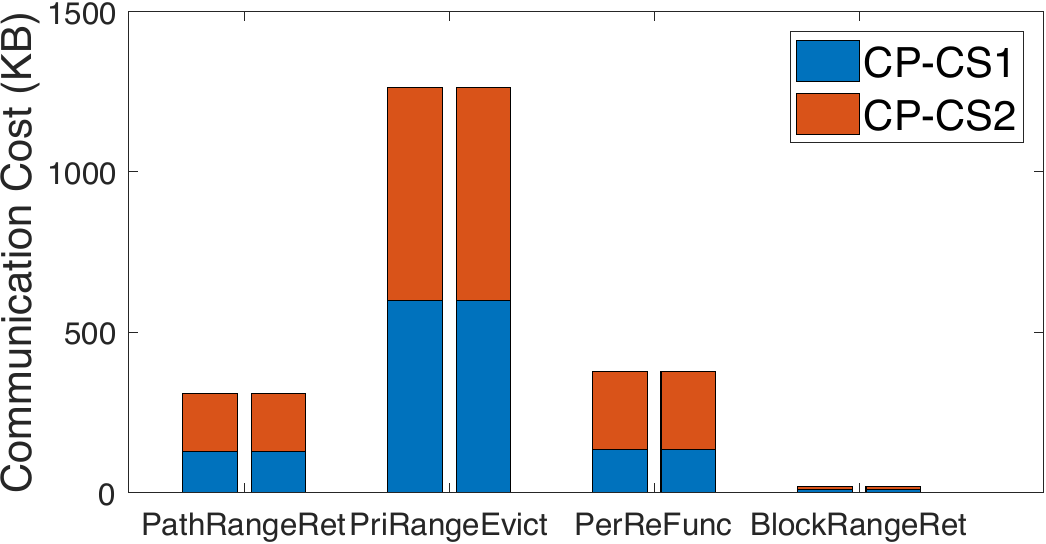}
	}
	\hfill
	\subfloat[Local region, $r=5$]{
		\label{fig:intra_comm}
		\includegraphics[width=0.23\linewidth]{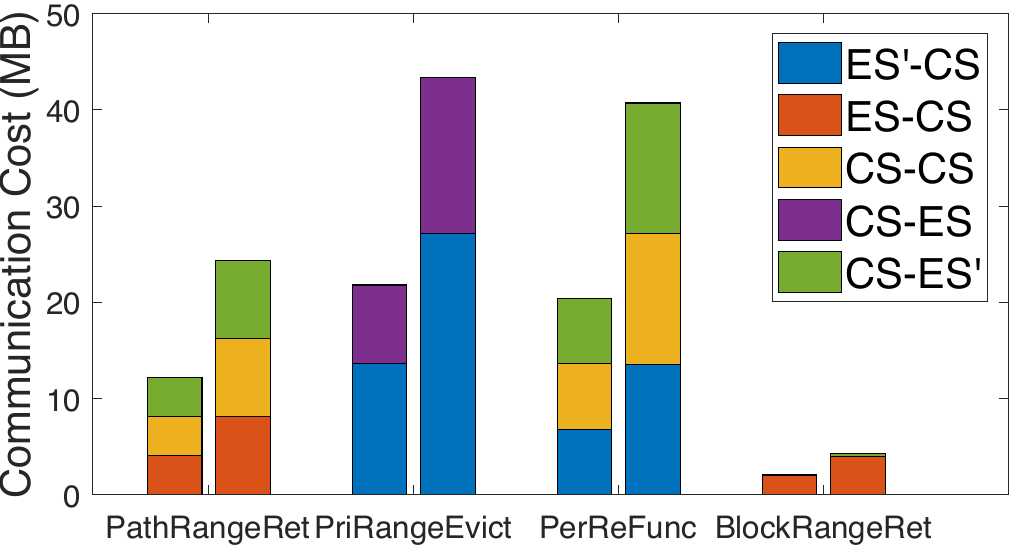}
	}
	
	\subfloat[Inter-continental, $r=6$]{
		\label{fig:inter_comm}
		\includegraphics[width=0.23\linewidth]{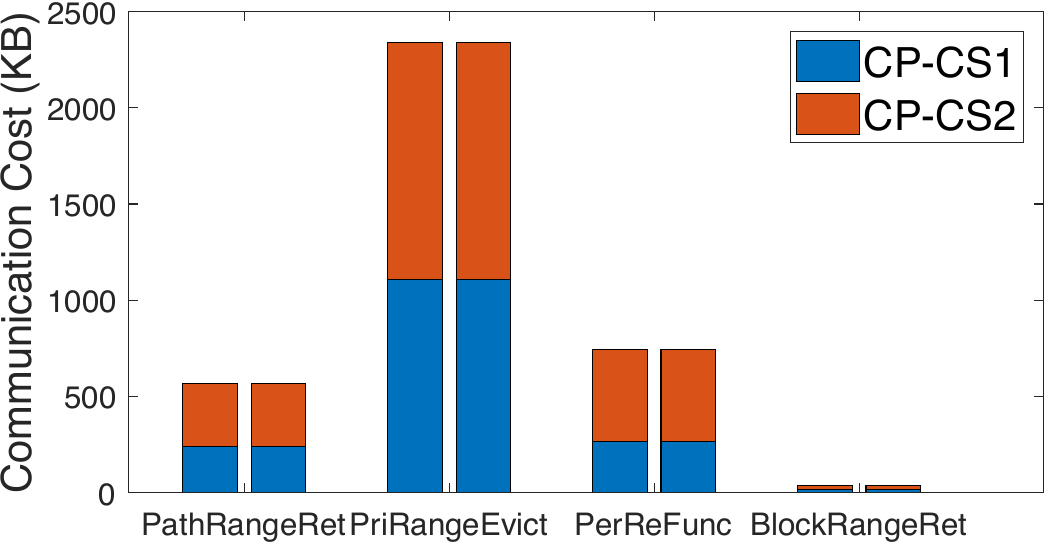}
	}
	\hfill
	\subfloat[Local region, $r=6$]{
		\label{fig:intra_comm}
		\includegraphics[width=0.23\linewidth]{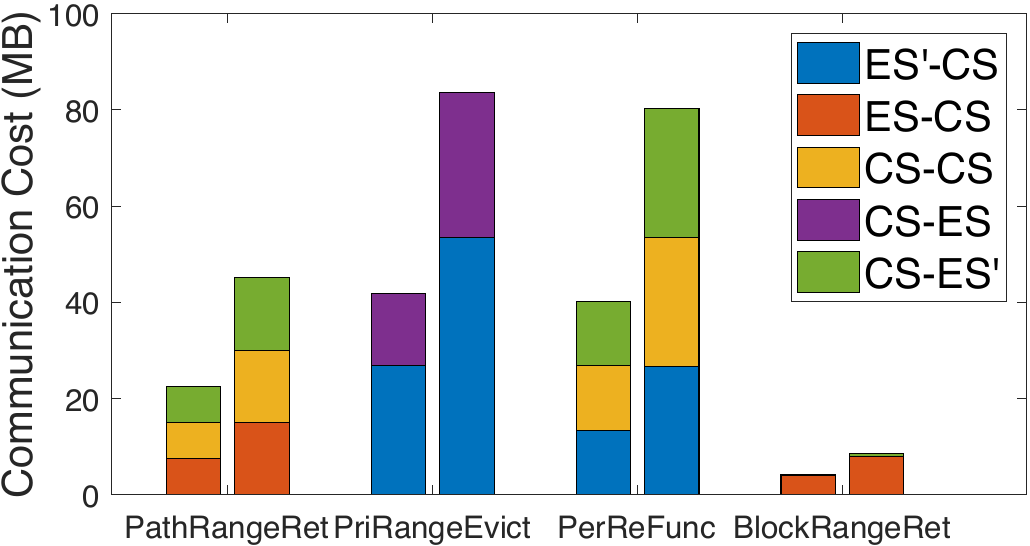}
	}
	\hfill
	\subfloat[Inter-continental, $r=7$]{
		\label{fig:inter_comm}
		\includegraphics[width=0.23\linewidth]{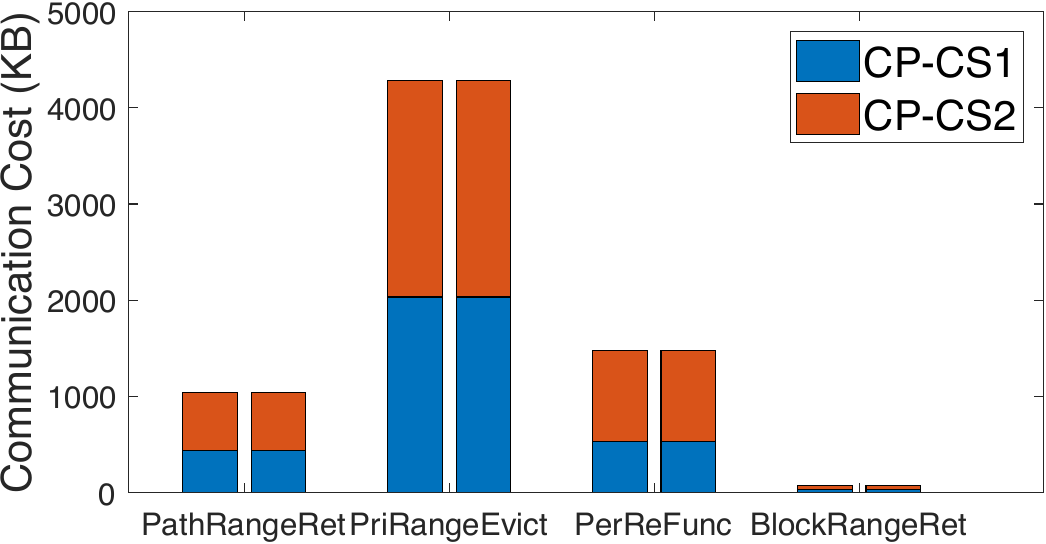}
	}
	\hfill
	\subfloat[Local region, $r=7$]{
		\label{fig:intra_comm}
		\includegraphics[width=0.23\linewidth]{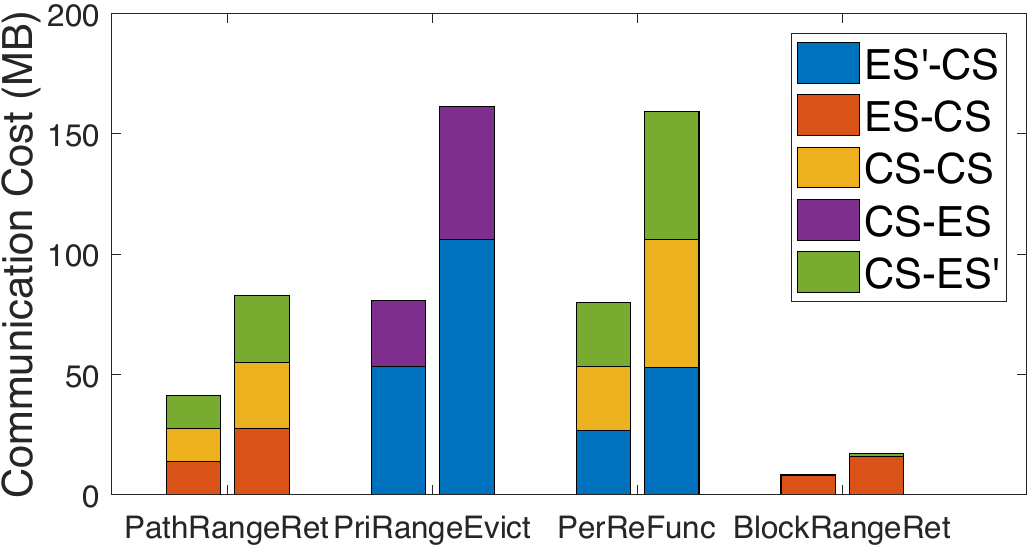}
	}
	\caption{The communication cost (MB) of oblivious operations with varying $r$. For each operation, the left bar is for $|B|=2$ KB, and the right one is for $|B|=4$ KB.}
	\label{fig:ovliv_op_comm_vary_r}
\end{figure*}

The highest computation cost of $CS_1$ and $CS_2$ in the \textsf{PriRangeEvict} is due to the DPF evaluation to write $2^r\times \log_2N\times Z$ blocks to $2^r$ ORAM paths, and each block write requires to scan $|S_r|=105 \times 2^r$ blocks.
Other building blocks only work on fewer blocks (i.e., $2^r \times \log_2N$ blocks for \textsf{BlockRangeRet}, and $|S_r|$ blocks for \textsf{PerReFunc}) or has less operations on blocks, i.e., \textsf{BlockRangeRet} can permute and re-encrypt a path of $\log_2 N \times Z$ blocks with one scan.

The communication cost breakdowns under different rORAM parameters ($r$) are given in Figure~\ref{fig:ovliv_op_comm_vary_r}. 

\end{document}